\documentclass[11pt,a4paper]{article}
\usepackage[T1]{fontenc}
\usepackage{lmodern}
\usepackage{amsmath,amssymb,amsthm}
\usepackage[
  a4paper,
  left=20mm,
  right=20mm,
  top=18mm,
  bottom=18mm,
  headheight=14pt
]{geometry}
\usepackage{microtype}
\usepackage{enumitem}
\usepackage{booktabs,longtable,array}
\usepackage[unicode,hidelinks]{hyperref}
\usepackage[title,titletoc]{appendix}
\hypersetup{
  pdftitle={Packing Tails of Reciprocal Rectangles into Squares of Equal Area},
  pdfsubject={Detailed proof of an equal-area packing theorem for sufficiently late tails},
  pdfkeywords={Meir--Moser packing problem; equal-area packing;
probabilistic method.}
}
\newtheorem{theorem}{Theorem}[section]
\newtheorem{lemma}[theorem]{Lemma}
\newtheorem{corollary}[theorem]{Corollary}
\numberwithin{equation}{section}
\setlist{topsep=5pt,itemsep=3pt,parsep=0pt,leftmargin=*}
\title{Packing Tails of Reciprocal Rectangles\\into Squares of Equal Area}
\author{Yu Jiang\\\texttt{yuj@utexas.edu}}
\date{\today}
\begin{document}
\maketitle

\begin{abstract}
The Meir--Moser rectangle-packing problem asks whether all rectangles with side lengths \(1/n\) and \(1/(n+1)\), for \(n\ge1\), can be packed into the unit square with pairwise disjoint interiors. We establish a tail version of this problem. Let \(R_n\) denote the rectangle with these side lengths. We prove that there exists an integer \(m_0\) such that, for every \(m\ge m_0\), the family \(\{R_n:n\ge m\}\) admits a packing, by translations and right-angle rotations, into a square of side length \(m^{-1/2}\), with pairwise disjoint interiors. The area of the square equals the sum of the areas of all the rectangles.
The geometric construction recursively decomposes rectangular gaps, while local randomized quotas and random permutations assign subsequent integer indices. We separately control the total area of waiting gaps and the assignment load at each index. 
The proof is organized in six steps: a finite-prefix reduction, geometric row decompositions, an area bootstrap, a sharp source-load estimate, control of the actual adaptive construction, and a compactness limit. For every finite time horizon, the probability of failure has a bound that is independent of the horizon and can be made arbitrarily small. The adaptive step uses a permanent load ledger, actual fresh height queries, and a one-sided comparison with a frozen source experiment. Compactness then yields an infinite packing. 
The final packing statements have been checked in Lean 4. 
A sufficient threshold is \(m_0=10^{1000}\). This result applies only to sufficiently late tails and does not resolve the original Meir--Moser rectangle-packing problem for the full sequence starting at \(n=1\), which remains open.

\end{abstract}

\noindent\textbf{Keywords:} Meir--Moser packing problem; equal-area packing;
probabilistic method.

\section*{Introduction}

The Meir--Moser rectangle-packing problem asks whether all rectangles of dimensions \(1/n\times1/(n+1)\), \(n\ge1\), can be packed into the unit square with pairwise disjoint interiors \cite{meir1968packing}.
Several related advances motivate the present work. Tao \cite{tao2024perfectly} proved that, for \(1/2<t<1\), sufficiently late tails of squares of side length \(n^{-t}\) perfectly pack a square of equal total area, and noted that the method also applies to rectangles of dimensions \(n^{-t}\times(n+1)^{-t}\).
Zhu and Jo\'os \cite{zhu2022packing} constructed a packing of the first \(1.35\times10^{11}\) reciprocal rectangles into the unit square.
Kislovskiy et al.~\cite{kislovskiy2026slack} introduced the Slack-Pack algorithm, obtaining conditional estimates and numerical evidence for perfect tail packings at the harmonic endpoint \(t=1\).
The present paper establishes an equal-area packing theorem for sufficiently late tails of the rectangles \(1/n\times1/(n+1)\). 
The original problem for the full sequence starting at \(n=1\)
remains open.

\clearpage

\tableofcontents

\clearpage

\section{Main theorem, conventions, and parameters}\label{sec:1}

\begin{theorem}[Tail packing]
\label{thm:1.1}
There exists an integer \(m_0\ge1\) such that, for every integer \(m\ge m_0\), one can choose, for each \(n\ge m\), an axis-parallel rectangle \(\widetilde R_n\) congruent to

\[
R_n=[0,1/n]\times[0,1/(n+1)]
\]

such that

\[
\widetilde R_n\subseteq Q_m:=[0,m^{-1/2}]^2,
\qquad
\operatorname{int}\widetilde R_i\cap
\operatorname{int}\widetilde R_j=\varnothing\quad(i\ne j).
\tag{1.1}\label{eq:1.1}
\]

Moreover,

\[
\left|Q_m\setminus\bigcup_{n\ge m}\widetilde R_n\right|=0.
\tag{1.2}\label{eq:1.2}
\]

\end{theorem}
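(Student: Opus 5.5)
Since \(1/(n(n+1))=1/n-1/(n+1)\) telescopes, \(\sum_{n\ge m}|R_n|=1/m=|Q_m|\). So \eqref{eq:1.2} follows from \eqref{eq:1.1} once the rectangles are disjoint and contained in \(Q_m\). The whole task is therefore to build a disjoint placement of all the \(R_n\), \(n\ge m\), inside \(Q_m\).

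\textbf{Compactness reduction.} I would first reduce to finite prefixes. It suffices to show that for every \(N\ge m\) the rectangles \(R_m,\dots,R_N\) can be placed disjointly in \(Q_m\). Each placement is a point of a compact set (lower-left corner in \(Q_m\) and one of two orientations). A diagonal subsequence then gives an infinite configuration, and disjointness of interiors and containment pass to limits.

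\textbf{Greedy-with-gaps construction.} Maintain a finite family of rectangular gaps, starting from the single gap \(Q_m\). Process indices \(n=m,m+1,\dots\) in order, placing \(R_n\) into a gap whose shorter side \(h\) satisfies roughly \(h\ge 1/n\). Place it flush against a side; this forms a row of consecutive rectangles of width about \(1/n\) and total length \(h\). The leftover strip becomes a new gap, with height \(h-1/(n+1)\) or a thin sliver. Since consecutive widths differ by only \(O(1/n^2)\), a row of \(k\approx hn\) rectangles fits with waste \(O(hk/n^2)\) per row. Slivers of width \(\ll 1/n\) wait until \(n\) grows past their inverse width, when they become usable gaps. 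To avoid systematically creating unusable gap widths, I would randomize: assign each gap a random quota of indices, or equivalently a random permutation of a block of indices \([n,2n)\) among the active gaps. Expected filling then matches area, and the leftovers have random dimensions.

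\textbf{Invariants and the main obstacle.} I would prove two invariants with high probability, uniformly in the horizon \(N\). First, the total area of waiting gaps at time \(n\) is at most \(C/n\), so that it always matches the remaining area \(1/n\) up to constants. This should follow from a bootstrap on dyadic scales: each sliver of width \(w\) becomes fully usable by time \(n\sim 1/w\). Second, no index is demanded by too many gaps at once. This load control is the main obstacle: the construction is adaptive, so the gap heights depend on earlier random choices. I would try to handle this with a martingale, or by comparison with a frozen non-adaptive source experiment, using Azuma or Chernoff bounds at each dyadic scale \(2^j\). The failure probabilities must sum over scales to something like \(\sum_j e^{-2^{cj}}\), which is small once \(m\ge m_0\). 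At the harmonic endpoint the margins are only logarithmic, so I expect the per-scale concentration to barely close, and \(m_0\) to be astronomically large.

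Once the invariants hold up to \(N\) with positive probability, every \(R_n\) with \(n\le N\) has been placed. Compactness then yields \eqref{eq:1.1}, and the area identity above gives \eqref{eq:1.2}.
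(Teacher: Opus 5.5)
Your outer shell matches the paper's. You reduce to finite prefixes in the same square and pass to the limit by compactness; your diagonal subsequence is equivalent to the nested closed sets $C_L$ of Lemma~\ref{thm:13.1}, since interior-disjointness is a closed condition. You get \eqref{eq:1.2} by telescoping, and the frozen-source comparison you mention is also what the paper uses. The gap is that both invariants you propose are either vacuous or lack the strict margin the argument needs, and you give no mechanism that produces one.

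\textbf{Area.} Before index $n$ the uncovered area is \emph{exactly} $1/n$, so ``waiting gaps have area at most $C/n$'' holds trivially with $C=1$ and says nothing. What is needed is that waiting gaps plus already reserved tiles occupy at most $(1-\delta)/n$. Then a region of bounded aspect ratio and area at least a fixed multiple of $1/n$ (the primary rectangle of Lemma~\ref{thm:7.2}) always remains to host a new row at scale $n$.
\begin{itemize}
\item The paper gets $\delta>0$ by engineering gap shapes. A row of height $H$ takes indices only from the window $I(H)$, so the gap under tile $i$ has height in $[Ai^{-p},Bi^{-p}]$ with $p=5/4$.
\item This shape class is closed under the recursion (Lemmas~\ref{thm:2.2} and \ref{thm:3.1}). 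It yields $D_P(t)\le dt$ with $d=(1+\log(B/A))/p<0.877$ (Lemma~\ref{thm:6.1}).
\item Endpoint slivers are made negligible by the two-generation contraction (Lemma~\ref{thm:3.2}).
\end{itemize}
Your rows of consecutive indices leave gaps under tiles of width about $1/n$ whose heights range from $0$ up to about $k/n^2$, unrelated to the tile widths. Drawing indices from a dyadic block $[n,2n)$ is worse: it leaves gaps comparable to the row height itself. So you have neither a recursively closed shape class nor an area constant below $1$.

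\textbf{Load.} The load invariant, which you explicitly leave open, has the same defect. If randomization makes ``expected filling match area,'' the demand at each index is essentially full capacity. No concentration inequality can then keep demand below capacity while still leaving positions for new main rows.
\begin{itemize}
\item The paper's decisive computation is the mass partition of Section~\ref{sec:9.4}. The demand at $N$ from principal gaps, which are created at the much earlier scale $S=N^{1/p}$, has mean coefficient about $1/p=0.8<\rho_P=0.85$.
\item Endpoint demand stays below $0.01$ because its predictable mean is at most $21\Pi$. Hence every bin keeps at least a tenth of its positions as main positions.
\item The scale separation $S=N^{4/5}\ll N$ is also what lets the source pool be frozen before its quota marks are read.
\item The adaptively chosen main-row heights still need a separate estimate. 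The paper joins the two through $F_P(N,t)\le Z_N+Q_t-W_t$.
\end{itemize}
Finally, the margins in the paper are not logarithmic. Once designed they are fixed constants (for example $0.85-0.83$), and the tails decay like $e^{-\kappa N^{7/50}}$. The real difficulty is creating the margin, not concentrating around it.
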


Throughout the paper, the orientations of the two sides of a rectangle may be interchanged. In local coordinates, the longer side is called its width $W$, and the shorter side its height $H$; these local rotations preserve global axis parallelism.

\subsection{Proof overview: six steps and their purpose}\label{sec:overview}

The difficulty is to keep enough usable space and enough unused indices available at the same time. An area estimate alone does not say that the remaining pieces fit; an allocation estimate alone does not say that the remaining regions have the right shapes. We keep these two requirements separate until the final existence argument.

\begin{enumerate}[label=\textbf{Step \arabic*.}]
\item \textbf{Reduce the goal to finite prefixes.} Fix \(m\) and a finite last index \(L\). It will be enough to pack \(R_m,\ldots,R_L\) into the \textbf{same square} for every \(L\). The finite packings need not extend one another. Closedness and compactness will recover a single infinite packing in Section~\ref{sec:13}.
\item \textbf{Replace each used row by controlled smaller gaps.} A row of height \(H\) uses indices near \(H^{-1}\). Each tile leaves a much thinner principal gap; the endpoint is split into two new rectangular jobs. Sections~\ref{sec:2}--\ref{sec:3} show that these operations preserve a common shape condition and that the endpoint trees have a finite total boundary budget.
\item \textbf{Keep the primary rectangle large enough.} Remove a strip from the primary rectangle whenever an unassigned index needs a main row. Before processing \(N\), the remaining area is exactly \(1/N\). Sections~\ref{sec:5}--\ref{sec:7} bound the waiting gaps and the already reserved future tiles, leaving a fixed positive fraction of \(1/N\) in the primary rectangle. Its aspect ratio stays at most two, so a new strip fits.
\item \textbf{Show that future index demand stays below capacity.} Many inherited jobs can ask for indices in the same range. Section~\ref{sec:4} spreads their integer demands using one random rounding mark per job and one permutation per bin. For a target \(N\), its principal gaps come from a smaller scale \(S=N^{4/5}\). Section~\ref{sec:9} assigns one unit of comparison mass to each source position and obtains the leading mean coefficient \(1/p=4/5\), below the permitted principal load \(0.85\).
\item \textbf{Apply the probability estimates to the actual construction.} The source pool is itself random, and an earlier failure may stop the construction. Sections~\ref{sec:11}--\ref{sec:12} freeze the pool before its relevant random marks are read, then compare its output with the actual running ledger. The key inequality is
\[
 F_P(N,t)\le Z_N+Q_t-W_t.
\]
Here \(Z_N\) is a frozen source cost, \(Q_t\) is the sum of actual main-row load queries, and \(W_t\) is their used comparison weight. Except on a small source event, \(Z_N<0.83\). A principal overload therefore forces \(Q_t-W_t\ge0.02\) at all sufficiently late program times. A uniform probability bound for this persistent deviation avoids a factor depending on the number of queries.
\item \textbf{Choose a successful finite packing, then pass to the limit.} Endpoint and bin errors have summable tails, and the least principal obstruction has probability at most \(4e^{-\kappa N^{7/50}}\), with \(\kappa>0\). The union of all obstructions has probability less than one for large \(m\), uniformly in \(L\). Thus every finite prefix has a packing. Compactness supplies the infinite family, and the telescoping area sum proves that its uncovered set has measure zero.
\end{enumerate}

The logical dependencies are therefore
\[
\begin{gathered}
\text{row geometry}\ \Longrightarrow\ \text{area and shape invariants},\\
\text{integer allocation}\ \Longrightarrow\ \text{source and query tail bounds},\\
\text{both, with the actual revelation order}\ \Longrightarrow\
\text{finite-prefix existence}\ \Longrightarrow\ \text{infinite packing}.
\end{gathered}
\]

Four sections establish the main transitions in the proof.
Section~\ref{sec:7} shows how the area estimates guaranty that the primary rectangle retains enough space for each new main row.
Section~\ref{sec:9.4} identifies the mass partition that yields the leading mean-load coefficient \(1/p=4/5\), leaving a positive margin below the principal load allowed.
Section~\ref{sec:12.2} connects the frozen source experiment to the actual adaptive construction through a step-by-step comparison of their load ledgers.
Section~\ref{sec:13} passes from finite-prefix packings to an infinite packing and proves the equal-area conclusion.
The remaining sections develop the ingredients needed for these transitions: geometric decompositions and shape bounds, integer allocation rules, area and probability estimates, and the revelation order that justifies their use in the adaptive construction.

\subsection{Objects, two ledgers, and indexing}\label{sec:dictionary}

A \emph{job} is a rectangular gap together with a future set of eligible tile indices and an integer demand. A \emph{main row} is cut from the primary rectangle; an \emph{inherited row} fills a previously created principal or endpoint gap. Thus, ``main row'' and ``principal gap'' name different objects.

The \emph{geometric ledger} records the current disjoint regions: the primary rectangle, waiting gaps, and placed or reserved tile positions. Once a gap is filled, it leaves the waiting list. The \emph{permanent demand ledger} retains every job after registration, even after it starts. Its contribution at index \(N\) is \(k/L(H)\) when \(N\in I(H)\), and zero otherwise. This second ledger makes \(F_P(N,t)\) and \(F_E(N,t)\) non-decreasing in \(t\). It deliberately over-counts demand from old jobs, which is safe for an upper bound. It must never be used as a list of currently empty geometric regions.

All tile indices are positive integers. A finite prefix through \(L\) includes \(R_L\). A cursor at \(N\) means that indices smaller than \(N\) have already been placed. An empty source pool, zero quota, empty family of queried groups, or empty endpoint forest is allowed.

\begin{table}[tbp]
\centering
\small
\begin{tabular}{@{}p{.18\textwidth}p{.75\textwidth}@{}}
\toprule
Symbol & Meaning \\
\midrule
\(m,L,N\) & Initial tile index, finite terminal tile index, and target index at which demand is tested. \\
\(S=N^{4/5}\) & Source scale for principal gaps that can ask for target \(N\). \\
\(t\) & Discrete program time, including deterministic transitions and random requests; distinct from the physical cursor, which tracks tile indices. \\
\(I(H),L(H)\) & Eligible positive integer indices of a job of height \(H\), and their number. \\
\(\mathcal B_b,s_b,\ell_b\) & A bin, its first index, and its length. \\
\(F_P(N,t),F_E(N,t)\) & Fractional principal and endpoint demand at \(N\), summed over all registered jobs at program time \(t\). \\
\(D_P(h)\) & An area bound for principal gaps of height at most \(h\); it is not an allocation load. \\
\(\Omega_N\) & A narrow finite union of working bins used in the sharp source mean estimate. \\
\bottomrule
\end{tabular}
\end{table}

\subsection{Fixed parameters and uniformity}

Fix

\[
p=\frac54,\qquad q=\frac{13}{20},\qquad
A=1,\quad B=\frac{11}{10},\quad D=B-A=\frac1{10}.
\tag{1.3}\label{eq:1.3}
\]

Also fix

\[
\eta=\frac1{200},\quad \Pi=10^{-4},\quad
\rho_P=0.85,\quad \rho_E=0.01,\quad
\rho_I=0.86,\quad \rho_B=0.90.
\tag{1.4}\label{eq:1.4}
\]

Here \(\rho_I=\rho_P+\rho_E\). The geometric margins are supplied by

\[
d:=\frac{1+\log(B/A)}p<0.877,
\qquad d+\Pi<1;
\tag{1.5}\label{eq:1.5}
\]

the probabilistic margins are supplied by

\[
\frac1p=0.8<\rho_P,
\qquad 21\Pi=0.0021<\rho_E,
\qquad \rho_I<\rho_B<1.
\tag{1.6}\label{eq:1.6}
\]

For the endpoint reserve and the width bound on inherited jobs, we may take the specific constants

\[
c=\frac18,\qquad C=16.
\]

They satisfy

\[
C\ge \max\{A^{-1/p},(3/c)^{1/p}\}.
\tag{1.7}\label{eq:1.7}
\]

Indeed, \(16^{5/4}=32>24=3/c\). Once chosen, these constants do not vary with \(m\), the index, or the finite time horizon. Below, \(K\) denotes a positive constant depending only on these fixed parameters; its value may be increased from one occurrence to the next.

For nonnegative quantities \(X\) and \(Y\), we write \(X\asymp Y\) if there exist constants \(c_1,c_2>0\), depending only on the fixed parameters, such that $c_1Y\le X\le c_2Y$ throughout the stated range.
All comparison constants in \(\asymp\), all implicit constants in \(O(\cdot)\), and all convergence assertions in \(o(1)\) are \textbf{uniform} over finite construction histories satisfying the stated geometric conditions.
Only at the end do we increase \(m_0\) so that all uniform error terms are simultaneously smaller than their respective fixed margins.

To ''process index \(N\)'' means to count \(R_N\), at its reserved position, as a formally placed rectangle. A row may know all its assigned indices in advance, but its geometric decomposition is created only when its first index is processed. This timing convention is essential to the area identity in Section \ref{sec:7}.

\section{Admissible index intervals and full-height row geometry}\label{sec:2}

\textbf{Purpose of Step 2.} We first solve a local deterministic problem: given a suitable gap, any set of eligible indices of the prescribed size must fit. Randomness will decide which indices are assigned in Section \ref{sec:4}, but it will not be necessary to justify the resulting geometry.

For \(a>0\), define

\[
f_a(x)=\frac1{x+1}+a x^{-p},\qquad x>0.
\]

For a given row height \(H>0\), define the integer interval

\[
I(H)=\{i\in\mathbb N_{\ge1}:f_A(i)\le H\le f_B(i)\},
\qquad L(H)=|I(H)|.
\tag{2.1}\label{eq:2.1}
\]

An index \(i\) belongs to \(I(H)\) exactly when placing the rectangle of width \(1/i\) and height \(1/(i+1)\) with its top aligned to the top of a row of height \(H\) leaves a gap of height

\[
g_i=H-\frac1{i+1}
\]

below it, satisfying

\[
A i^{-p}\le g_i\le B i^{-p}.
\tag{2.2}\label{eq:2.2}
\]

\begin{lemma}[Uniform window estimates]
\label{thm:2.1}
As \(H\to0\), write \(s=H^{-1}\to\infty\). Then

\[
\min I(H)=s+O(s^{2-p}),
\qquad
L(H)=\bigl(D+O(s^{1-p})\bigr)s^{2-p}+O(1).
\tag{2.3}\label{eq:2.3}
\]

In particular, \(I(H)\) is nonempty for all sufficiently small \(H\). If

\[
a_{\max}=\frac1{\min I(H)},\qquad
a_{\min}=\frac1{\max I(H)},
\]

then

\[
(1-KH^{p-1})H\le a_{\min}\le a_{\max}\le H-cH^p,
\qquad a_{\max}-a_{\min}=O(H^p).
\tag{2.4}\label{eq:2.4}
\]

\end{lemma}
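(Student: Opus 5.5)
The plan is to invert the two monotone functions \(f_A\) and \(f_B\) asymptotically. For fixed \(a>0\), \(f_a\) is strictly decreasing on \((0,\infty)\), continuous, and tends to \(\infty\) at \(0\) and to \(0\) at \(\infty\). Hence for small \(H\) there are unique reals \(x_A(H)<x_B(H)\) with \(f_A(x_A)=H=f_B(x_B)\), and
\[
I(H)=\mathbb N\cap[x_A(H),x_B(H)].
\]
The inequality \(x_A<x_B\) holds because \(f_A<f_B\) pointwise.

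\textbf{Step 1: expansions of the endpoints.} Writing \(x=x_a\), the defining equation reads
\[
\frac1{x+1}=H-ax^{-p}.
\]
First, \(x\asymp s\), since \(\tfrac1{x+1}\le H\le \tfrac2{x}\) for large \(x\) and \(x^{-p}=o(x^{-1})\). Then
\[
x+1=\frac{1}{H(1-asx^{-p})}=s\bigl(1+asx^{-p}+O(s^{2-2p})\bigr).
\]
Substituting \(x=s+O(s^{2-p})\) back in gives
\[
x_a=s+a s^{2-p}+O(s^{3-2p})+O(1).
\]
Because \(p=5/4\), we have \(3-2p=1/2\) and \(2-p=3/4\), so this reads
\[
x_a=s+as^{2-p}+O(s^{2-p}\cdot s^{1-p})+O(1),
\]
where the \(O(1)\) comes from the \(-1\). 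Consequently \(\min I(H)=\lceil x_A\rceil=s+O(s^{2-p})\), and
\[
L(H)=x_B-x_A+O(1)=Ds^{2-p}+O(s^{3-2p})+O(1).
\]
This is \eqref{eq:2.3}. Since \(D>0\) and \(s^{2-p}\to\infty\), \(L(H)\ge1\) for small \(H\).

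\textbf{Step 2: bounds on \(a_{\max}\) and \(a_{\min}\).} For the lower bound in \eqref{eq:2.4}, use \(\max I\le x_B=s(1+O(s^{1-p}))\). This gives
\[
a_{\min}\ge H\bigl(1-KH^{p-1}\bigr).
\]
For the difference, use \(a_{\max}-a_{\min}\le 1/x_A-1/\lfloor x_B\rfloor\). This is at most
\[
\frac{x_B-x_A+1}{x_A(x_B-1)}=O\bigl(s^{2-p}/s^2\bigr)=O(H^p).
\]

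\textbf{Step 3: the upper bound \(a_{\max}\le H-cH^p\).} This is the one place where the specific constant \(c=1/8\) matters, and I expect it to need care. Let \(i=\min I(H)\). From \(f_A(i)\le H\),
\[
\frac1{i+1}\le H-i^{-p}.
\]
Then
\[
\frac1i=\frac1{i+1}+\frac1{i(i+1)}\le H-i^{-p}+i^{-2}.
\]
Since \(i\asymp s\), we have \(i^{-2}=O(H^2)=o(H^p)\). Moreover \(i\le s(1+o(1))\), so \(i^{-p}\ge(1-o(1))H^p\). Hence
\[
a_{\max}\le H-(1-o(1))H^p\le H-\tfrac18H^p
\]
for \(H\) small. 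The only subtlety is uniformity: every \(o(1)\) depends only on \(H\) and the fixed constants \(A,B,p\), so a single smallness threshold on \(H\) works. This is consistent with the convention that \(K\) and the thresholds depend only on the fixed parameters.
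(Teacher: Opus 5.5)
Your proposal is correct and follows essentially the same route as the paper: asymptotically invert \(f_A\) and \(f_B\) to get \(x_a=s+as^{2-p}+O(s^{3-2p}+1)\), use \(I(H)=\mathbb N\cap[x_A,x_B]\) with \(O(1)\) rounding error, and obtain \(a_{\max}\le H-cH^p\) from \(H-1/i\ge Ai^{-p}-i^{-2}=(A+o(1))H^p\). The differences are cosmetic: you bound \(\max I(H)\) directly by \(x_B\) rather than by \(\min I(H)+L(H)-1\), and you omit the paper's optional parameter-derivative remark, which this lemma does not need.
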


\begin{proof}
The function \(f_a\) is strictly decreasing.
Let \(x_a(H)\) be the unique positive solution of \(f_a(x)=H\). The equation first gives \(x_a\asymp s\); substituting this estimate back into the equation yields

\[
x_a=s+a s^{2-p}+O(s^{3-2p}+1),
\]

where the implicit constant and the threshold for \(s\) can be chosen independently of \(a\in[A,B]\).
Alternatively, differentiation with respect to the parameter gives

\[
\frac{\partial x_a}{\partial a}
=\frac{x_a^{-p}}{(x_a+1)^{-2}+pa x_a^{-p-1}}
=s^{2-p}\bigl(1+O(s^{1-p})\bigr).
\]

Consequently,

\[
I(H)=\mathbb N\cap[x_A(H),x_B(H)].
\]

Taking the ceiling at the left endpoint and the floor at the right endpoint changes the length by only \(O(1)\), proving \eqref{eq:2.3}. This also covers the case in which an endpoint is an integer.

Every admissible index satisfies \(i=s+O(s^{2-p})\), so \(1/i=H(1+O(H^{p-1}))\). Furthermore,

\[
H-\frac1i
=\left(H-\frac1{i+1}\right)-\frac1{i(i+1)}
\ge A i^{-p}-i^{-2}.
\]

Since \(p<2\), the right-hand side is \((A+o(1))H^p\), and is eventually at least \(H^p/2\), which exceeds the chosen reserve \(cH^p=H^p/8\). This proves the right bound in \eqref{eq:2.4}. The lower bound follows from
\[
\frac{1}{\max I(H)}=\frac{1}{\min I(H) + L(H)-1}\geq\frac1{s+Ks^{2-p}}=\frac{H}{1+KH^{p-1}}\geq (1-KH^{p-1})H,
\]
where the last inequality uses \((1+u)^{-1}\ge 1-u\) for \(u\to0\).

Finally, the window length \(O(s^{2-p})\), together with the derivative bound for the reciprocal function, gives a width variation of \(O(s^{-p})=O(H^p)\). This proves the lemma.

\[
a_{\max}-a_{\min}=\frac{\max I(H)-\min I(H)}{\max I(H)\cdot\min I(H)}=O(s^{-p})=O(H^p).
\]

\end{proof}

An \textbf{inherited row job} is an unused rectangle \(W\times H\) satisfying

\[
H\le W\le C H^{1/p}.
\tag{2.5}\label{eq:2.5}
\]

Its demand is defined by

\[
k(W,H)=\left\lfloor\frac{W-cH^p}{a_{\max}}\right\rfloor.
\tag{2.6}\label{eq:2.6}
\]

A job must receive \(k(W,H)\) distinct indices from \(I(H)\).

\begin{lemma}[Every admissible index set fits]
\label{thm:2.2}
For all sufficiently small \(H\),

\[
1\le\left\lfloor\frac WH\right\rfloor\le k(W,H)
\le K s^{1-1/p}=K s^{1/5}.
\tag{2.7}\label{eq:2.7}
\]

Choose any \(k=k(W,H)\) distinct indices \(i\in I(H)\). The corresponding rectangles fit into \(W\times H\) when placed side by side in increasing index order, with their tops aligned. The endpoint width

\[
e=W-\sum_{i\text{ selected}}\frac1i
\]

satisfies

\[
cH^p\le e\le H+O(W H^{p-1})=(1+o(1))H.
\tag{2.8}\label{eq:2.8}
\]

\end{lemma}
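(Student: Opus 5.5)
The plan is to rely only on Lemma~\ref{thm:2.1}, specifically the bounds \eqref{eq:2.4} on $a_{\min},a_{\max}$, together with the shape condition \eqref{eq:2.5}. The argument has three parts: the chain of inequalities \eqref{eq:2.7}, the fitting claim, and the endpoint bounds \eqref{eq:2.8}.

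\textbf{The chain \eqref{eq:2.7}.} Since $W\ge H$, we get $\lfloor W/H\rfloor\ge1$. For the middle inequality, the upper bound $a_{\max}\le H-cH^p$ from \eqref{eq:2.4} gives
\[
\frac{W-cH^p}{a_{\max}}\ge\frac{W-cH^p}{H-cH^p}\ge\frac WH .
\]
The last step holds because $W\ge H$: indeed $(W-x)/(H-x)$ is increasing in $x\in[0,H)$ when $W\ge H$. Taking floors gives $\lfloor W/H\rfloor\le k(W,H)$. For the upper bound, \eqref{eq:2.4} gives $a_{\max}\ge a_{\min}\ge H/2$ for small $H$. Hence
\[
k\le \frac{2W}{H}\le 2CH^{1/p-1}=2Cs^{1-1/p},
\]
and $1-1/p=1/5$.

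\textbf{Existence of the selection and fitting.} We also need $k\le L(H)$, so that $k$ distinct indices actually exist. This follows from \eqref{eq:2.3}: $L(H)\asymp s^{2-p}=s^{3/4}$, which eventually dominates $Ks^{1/5}$.

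The fitting is then immediate. Every $i\in I(H)$ has height $1/(i+1)\le H-Ai^{-p}<H$, so each tile fits vertically with its top aligned to the top of the row. For the widths, every selected tile satisfies $1/i\le a_{\max}$, so
\[
\sum 1/i\le k\,a_{\max}\le W-cH^p .
\]
This gives both the fitting and the lower bound $e\ge cH^p$.

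\textbf{Upper bound on the endpoint width.} Every selected tile satisfies $1/i\ge a_{\min}$, so $\sum1/i\ge k\,a_{\min}$. Using $k>(W-cH^p)/a_{\max}-1$, we obtain
\[
e\le W-k a_{\min}\le W-(W-cH^p)\frac{a_{\min}}{a_{\max}}+a_{\min}.
\]
By \eqref{eq:2.4}, $a_{\min}/a_{\max}=1-O(H^{p-1})$, since $a_{\max}-a_{\min}=O(H^p)$ and $a_{\max}\asymp H$. Also $a_{\min}\le H$. Therefore
\[
e\le cH^p+O(WH^{p-1})+H .
\]
The term $cH^p$ is $o(H)$. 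Since $W\le CH^{1/p}$, we have
\[
WH^{p-1}\le CH^{p-1+1/p}=CH^{1.05}=o(H),
\]
which gives the claimed form $(1+o(1))H$.

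The only step needing care is this last one: the error $W\cdot O(H^{p-1})$ from the width variation must be $o(H)$. This is exactly where the width cap $W\le CH^{1/p}$ in \eqref{eq:2.5} is used, through the exponent check $p-1+1/p>1$. Everything else follows directly from \eqref{eq:2.4}.
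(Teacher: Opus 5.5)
Your proof is correct and follows essentially the same route as the paper. You use $a_{\max}\le H-cH^p$ with $W\ge H$ for the lower demand bound and the reserve $e\ge cH^p$; $a_{\max}\ge H/2$ with $W\le CH^{1/p}$ for the cap $k\le 2Cs^{1/5}\le L(H)$; and the floor inequality $W-cH^p<(k+1)a_{\max}$, together with $a_{\max}-a_{\min}=O(H^p)$ and $p-1+1/p=21/20>1$, for the endpoint upper bound. Your leftover $cH^p$ term there is harmless, since $cH^p\le cWH^{p-1}$ when $W\ge H$; alternatively, $cH^p+a_{\min}\le cH^p+a_{\max}\le H$ removes it, which is how the paper obtains exactly $H+O(WH^{p-1})$.
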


\begin{proof}
Since \(W\ge H\) and \(a_{\max}\le H-cH^p\),

\[
W(H-a_{\max})\ge HcH^p,
\quad\text{and hence}\quad
\frac{W-cH^p}{a_{\max}}\ge\frac WH.
\]

Taking floors gives the lower bound on the demand. Also, \(a_{\max}\ge H/2\) eventually, so

\[
k\le \frac W{a_{\max}}\le 2C H^{1/p-1}.
\]

This proves \eqref{eq:2.7}. Because \(1/5<3/4\), we also have \(k\le L(H)\); thus an individual job never requires more distinct indices than its window contains.

Each selected rectangle has width at most \(a_{\max}\), so the total width is at most \(k a_{\max}\le W-cH^p\). It follows that \(e\ge cH^p>0\). The definition using the floor also gives

\[
W-cH^p<(k+1)a_{\max}.
\]

Therefore,

\[
e<cH^p+a_{\max}+k(a_{\max}-a_{\min})
\le H+O(W H^{p-1}).
\]

Finally,

\[
\frac{W H^{p-1}}H\le C H^{p+1/p-2}
=C H^{1/20}\longrightarrow0.
\]

This proves all the assertions, without requiring the selected indices to be consecutive.

\end{proof}

The exact decomposition of the row consists of the selected rectangles, one gap of dimensions \((1/i)\times g_i\) below each rectangle, and one endpoint gap of dimensions \(H\times e\). Gaps of the first type are called \textbf{principal gaps}. By \eqref{eq:2.2},

\[
\frac1i\le A^{-1/p}g_i^{1/p}\le Cg_i^{1/p}.
\]

Moreover, \(g_i<1/i\) eventually, so every principal gap is itself an inherited job satisfying \eqref{eq:2.5}. Whether the parent row is a main row or an endpoint row, each index produces exactly one such principal gap.

\paragraph{The actual regions in the row.}
In local coordinates let the parent be \([0,W]\times[0,H]\), and list its assigned indices as \(i_0,\ldots,i_{k-1}\). Put
\[
 a_j=1/i_j,\quad b_j=1/(i_j+1),\quad
 x_j=\sum_{\ell<j}a_\ell,\quad g_j=H-b_j,\quad e=W-x_k.
\]
The closed tile, the closed principal gap below it, and the endpoint region are
\[
\begin{aligned}
 P_j&=[x_j,x_{j+1}]\times[g_j,H],\\
 G_j&=[x_j,x_{j+1}]\times[0,g_j],\\
 E&=[x_k,W]\times[0,H].
\end{aligned}
\]
Consequently
\[
 [0,W]\times[0,H]=\left(\bigcup_{j<k}(P_j\cup G_j)\right)\cup E.
\]
Different columns have disjoint horizontal interiors. Within a column, the tile and gap meet on their shared boundary only. The endpoint has disjoint interior from every column. The endpoint split preserves this exact closed-set coverage and interior disjointness. This verifies the geometric decomposition itself, in addition to its numerical area identity.

\section{Endpoint splitting and the total boundary budget}\label{sec:3}

The principal gaps already satisfy the same shape condition as their parents. The endpoint gap needs a separate split. One generation need not contract its total short side; the useful contraction appears only after two generations.

Every row splits its endpoint rectangle \(H\times e\) into

\[
H\times(\theta e),\qquad H\times((1-\theta)e),
\qquad \theta\sim\operatorname{Unif}[1/3,2/5].
\tag{3.1}\label{eq:3.1}
\]

The row's parameter \(\theta\) is sampled independently only after \(H\) and \(e\) have been determined. The two children share the same \(\theta\); independence between siblings is not assumed. Each child becomes an \textbf{endpoint job}, whose later endpoint is split according to the same rule.

\begin{lemma}[Closure of the geometric conditions]
\label{thm:3.1}
For all sufficiently small heights, both children produced by a split satisfy \eqref{eq:2.5}. Their shorter sides are at most \((3/4)H\).

\end{lemma}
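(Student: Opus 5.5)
We verify the two assertions of Lemma~\ref{thm:3.1} directly from the bounds \eqref{eq:2.8} of Lemma~\ref{thm:2.2}, applied to a parent row of height \(H\) (whose endpoint gap is \(H\times e\), not yet assumed to satisfy \eqref{eq:2.5}). A child has side lengths \(H\) and \(\lambda e\), where \(\lambda\in\{\theta,1-\theta\}\subseteq[1/3,2/3]\). By \eqref{eq:2.8}, \(cH^p\le e\le(1+o(1))H\).

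First we bound the short side. We have \(\lambda e\le \tfrac23(1+o(1))H\le \tfrac34 H\) for all sufficiently small \(H\). The short side of the child is therefore \(h:=\lambda e\), its long side is \(H\), and \(h\le(3/4)H\), as claimed. This also fixes the orientation in local coordinates: width \(W'=H\), height \(H'=h\).

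Next we check \eqref{eq:2.5} for the child, that is, \(H'\le W'\le C(H')^{1/p}\). The left inequality is \(h\le H\), which was just shown. For the right inequality we use the lower bound \(h\ge \tfrac13 e\ge \tfrac c3 H^p\). This gives \(h^{1/p}\ge (c/3)^{1/p}H\), so
\[
C h^{1/p}\ge C(c/3)^{1/p}H\ge H,
\]
since \(C\ge(3/c)^{1/p}\) by \eqref{eq:1.7}. Hence \(W'=H\le C(H')^{1/p}\).

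The only point needing care is uniformity. The \(o(1)\) in \eqref{eq:2.8} is \(O(WH^{p-1})/H\le CH^{1/20}\). This bound holds because the parent itself satisfies \eqref{eq:2.5}, which is the case for main rows by construction and for inherited rows by induction. The threshold on \(H\) therefore depends only on the fixed parameters.

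"Sufficiently small heights" refers to the parent height \(H\), and the children's heights \(h\le H\) remain small. So iterating the lemma keeps every descendant in the regime where Lemma~\ref{thm:2.2} applies, which is what the endpoint recursion needs.

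I expect no real obstacle. The main thing to watch is the reversal of orientation: the endpoint's long side is the parent's height \(H\). This reversal is why the lower reserve \(cH^p\), and not an upper bound, drives the shape condition, and why the constant \(C\) had to dominate \((3/c)^{1/p}\).
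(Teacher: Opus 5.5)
Your argument is the same as the paper's. From \eqref{eq:2.8}, each child's short side lies between \(\tfrac c3H^p\) and \(\tfrac23(1+o(1))H<\tfrac34H\); this fixes the orientation, and \(C\ge(3/c)^{1/p}\) then gives \eqref{eq:2.5}.

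One correction to your uniformity remark: main rows do \emph{not} satisfy \eqref{eq:2.5}. Their width \(w\asymp S^{-1/2}\) far exceeds \(CH^{1/p}\asymp S^{-4/5}\), and their demand is not \(k(W,H)\). So for a main row's endpoint you should cite \eqref{eq:5.5} from Lemma~\ref{thm:5.1} rather than \eqref{eq:2.8}; it supplies the same bounds \(cH^p\le e\le(1+o(1))H\) uniformly.
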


\begin{proof}
By \eqref{eq:2.8}, we may uniformly require \(e\le(1+\eta)H\). The two shorter sides, \(h_1=\theta e\) and \(h_2=(1-\theta)e\), satisfy

\[
\frac c3H^p\le h_j\le\frac23(1+\eta)H<\frac34H<H.
\]

Their longer side is therefore indeed \(H\), and

\[
H\le(3/c)^{1/p}h_j^{1/p}\le C h_j^{1/p}.
\]

Thus Lemma~\ref{thm:2.2} may be applied recursively.

\end{proof}

We next establish a budget for the endpoint trees by summing shorter sides. This controls both the heights used at the cuts and the area of waiting endpoint rectangles.

\begin{lemma}[Contraction over two generations]
\label{thm:3.2}
Consider an endpoint job with shorter side \(H\). The sum of the shorter sides in the next generation is at most \((1+\eta)H\), and the corresponding sum two generations later is at most \(rH\), where

\[
r=\frac{6(1+\eta)}{7-\eta}=\frac{1206}{1399}<0.87.
\tag{3.2}\label{eq:3.2}
\]

\end{lemma}

\begin{proof}
The sum of the shorter sides in the next generation is exactly \(e\le(1+\eta)H\). Let \(e_1,e_2\) be the endpoint widths left by the two children. The sum of the grandchildren's shorter sides is \(E=e_1+e_2\). First,

\[
E\le(1+\eta)(h_1+h_2)=(1+\eta)e.
\tag{3.3}\label{eq:3.3}
\]

Since \(h_1\le(2/5)(1+\eta)H<H/2\), the first child has width-to-height ratio greater than 2, so Lemma~\ref{thm:2.2} guarantees that it receives at least two rectangles. The second receives at least one. Below the chosen height threshold, every rectangle has width at least \((1-\eta)\) times the height of its row. Hence the total horizontal width removed from the two children is at least

\[
(1-\eta)(2h_1+h_2)
=(1-\eta)(1+\theta)e
\ge\frac43(1-\eta)e.
\]

The sum of the children's original widths is \(2H\), and therefore

\[
E\le2H-\frac43(1-\eta)e.
\tag{3.4}\label{eq:3.4}
\]

Set \(x=e/H\). Combining \eqref{eq:3.3} and \eqref{eq:3.4} gives

\[
\frac EH\le
\min\left\{(1+\eta)x,\ 2-\frac43(1-\eta)x\right\}.
\]

One affine function is increasing and the other decreasing, so the maximum of their minimum occurs at their intersection. At that point,

\[
\left(1+\eta+\frac43(1-\eta)\right)x=2.
\]

Substitution gives \eqref{eq:3.2}. The proof also covers an original job that receives only one rectangle, an integral width-to-height ratio, and an arbitrarily small positive endpoint width \(e\).

\end{proof}

\begin{corollary}[Endpoint forest budget]
\label{thm:3.3}
If the sum of the root shorter sides of an endpoint forest is \(S_0\), then the sum of the shorter sides of all nodes that have been created is at most

\[
\frac{2+\eta}{1-r}S_0\le16S_0.
\tag{3.5}\label{eq:3.5}
\]

The last inequality may be made strict when \(S_0>0\). For the empty forest, \(S_0=0\), both sides are zero; no strict inequality is asserted.

Call a row cut directly from the primary rectangle, or a row filling a principal gap, an \textbf{ordinary row}. For an ordinary row of height \(H\), together with all its endpoint descendants, the sum of the parent-row heights used in all endpoint splits is less than \(18H\), and the sum of the semiperimeters of all endpoint rectangles created is less than \(55H\). In particular, both sums are less than \(125H\).

\end{corollary}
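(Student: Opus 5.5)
We sum generation by generation and then telescope. Let \(T_g\) be the sum of the shorter sides of the nodes in generation \(g\) of the forest, with \(T_0=S_0\).

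\textbf{First claim.} Lemma~\ref{thm:3.2}, applied to each node separately, gives \(T_{g+1}\le(1+\eta)T_g\) and \(T_{g+2}\le rT_g\). Generations need not be complete in a finite construction, but missing nodes only decrease the sums, so both inequalities still hold. Grouping generations in consecutive pairs \((0,1),(2,3),\dots\), each pair sums to at most \((1+(1+\eta))T_{2j}=(2+\eta)T_{2j}\). By induction \(T_{2j}\le r^jS_0\). Hence the total is at most
\[
\frac{2+\eta}{1-r}\,S_0 .
\]
Numerically, \(r=1206/1399\) gives \(1-r=193/1399\), and \((2+\eta)\cdot1399/193=2.005\cdot1399/193\approx14.53<16\). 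The inequality is therefore strict whenever \(S_0>0\), and both sides vanish when \(S_0=0\).

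\textbf{Second claim.} Fix an ordinary row of height \(H\).

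\emph{Roots.} Its endpoint split creates the two roots of its endpoint forest. Their shorter sides sum to \(e\le(1+\eta)H\), by \eqref{eq:2.8} and the uniform choice in Lemma~\ref{thm:3.1}. So the forest has \(S_0\le(1+\eta)H\), and its total shorter-side sum is at most \(14.53(1+\eta)H<14.7H\).

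\emph{Heights used at cuts.} The parent rows performing endpoint splits are the ordinary row itself, of height \(H\), and the endpoint jobs that are filled. Each endpoint job is a row whose height is its shorter side. So the sum of the parent-row heights is at most \(H+14.7H<18H\).

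\emph{Semiperimeters.} By Lemma~\ref{thm:3.1}, the longer side of every endpoint rectangle created is the height of the parent row that produced it. Each split creates exactly two rectangles, so the longer sides sum to twice the sum of the parent-row heights, i.e.\ less than \(2\cdot15.7H\). Adding the shorter-side sum of at most \(14.7H\), the semiperimeters total less than \(31.4H+14.7H<55H\). Both sums are then trivially below \(125H\).

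\textbf{Main obstacle.} The main point of care is bookkeeping: distinguishing the nodes (endpoint rectangles) from the rows that split them, and avoiding double counting of the ordinary row itself. One must also check that Lemma~\ref{thm:3.2} applies at each pair of generations even when some jobs are not yet processed. This holds because unprocessed nodes have no children, and the bounds are monotone.
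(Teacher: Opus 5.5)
Your proposal is correct and follows the paper's proof essentially step for step: you sum by generations with even generations bounded by \(r^jS_0\) and odd ones by \((1+\eta)r^jS_0\), take \(S_0=e\le(1+\eta)H\) for the roots, add the ordinary row's own cut, and count semiperimeters as \(2h+e'\) per cut; your sharper constant \(\approx14.53\) in place of \(16\) is a harmless refinement. The one place where the paper is more explicit is the truncated tree, where your monotonicity argument implicitly compares with a completion (legitimate here because Lemma~\ref{thm:2.2} lets any unprocessed child be filled hypothetically), whereas the paper bounds an unprocessed child's zero contribution directly by the nonnegative per-child terms \(H-2(1-\eta)h_1\) and \(H-(1-\eta)h_2\) in \eqref{eq:3.4}.
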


\begin{proof}
Sum by generations. The even generations are bounded by \(r^jS_0\), and the odd generations by \((1+\eta)r^jS_0\), which gives \eqref{eq:3.5}. The first split of an ordinary row creates two roots whose shorter sides sum to \(e\le(1+\eta)H\). Including the cut made by the ordinary row itself, the sum of all cut-parent heights is less than

\[
H+16e\le\bigl(1+16(1+\eta)\bigr)H<18H.
\]

A cut with parent height \(h\) creates two endpoint rectangles whose semiperimeters sum to \(2h+e'\le(3+\eta)h\). Summing gives a bound of less than \((3+\eta)18H<55H\).

For an incomplete finite tree, assign zero contribution to grandchildren that have not yet been created. Estimate \eqref{eq:3.3} plainly remains valid. Estimate \eqref{eq:3.4} also remains valid: for an unprocessed first or second child, respectively, its zero contribution may be bounded by the nonnegative quantities \(H-2(1-\eta)h_1\) and \(H-(1-\eta)h_2\). These quantities are nonnegative because \(h_1<H/2\) and \(h_2<H\), respectively. Thus the two-generation inequalities apply directly to truncated trees. They do not require future jobs to have feasible assignments, nor do they presuppose the existence of an infinite packing.

\end{proof}

\section{Local integer quotas}\label{sec:4}

\textbf{Allocation mechanism for Step 4.} A real-valued demand fraction is not an actual tile assignment. The following cumulative rounding makes the demands integer, preserves each job's total demand exactly, and leaves a positive density of main positions when each bin certificate is good.

Starting from \(s_0=m\), define consecutive bins of integer indices by

\[
\mathcal B_b=\{s_b,\ldots,s_b+\ell_b-1\},\qquad
\ell_b=\lfloor s_b^q\rfloor,\quad s_{b+1}=s_b+\ell_b.
\tag{4.1}\label{eq:4.1}
\]

In this section, a ``bin'' is an interval of integer indices, not a geometric rectangle.

For job \(r:W_r\times H_r\), let \(I_r\) denote its window, \(L_r\) its length, and \(k_r\) its demand. Retain only the bins contained entirely in \(I_r\). At most two boundary-bin portions are discarded, so the retained length \(L'_r\) satisfies

\[
L'_r=L_r-O(s_r^q),\qquad
\frac{L'_r}{L_r}=1-O(s_r^{-1/10}),\quad s_r=H_r^{-1}.
\tag{4.2}\label{eq:4.2}
\]

In particular, every sufficiently large window contains complete bins, and \(L'_r>0\). List the retained bins as \(b_1,\ldots,b_J\), and put \(C_j=\sum_{a\le j}\ell_{b_a}\), with \(C_0=0\). Associate with this job an independent random variable \(V_r\sim\operatorname{Unif}[0,1)\), and define

\[
t_{r,b_j}
=\left\lfloor\frac{k_rC_j}{L'_r}+V_r\right\rfloor
-\left\lfloor\frac{k_rC_{j-1}}{L'_r}+V_r\right\rfloor.
\tag{4.3}\label{eq:4.3}
\]

The quota is zero for bins that were not retained.

\begin{lemma}[Randomized rounding]
\label{thm:4.1}
All quotas are nonnegative integers, and

\[
\sum_b t_{r,b}=k_r,\qquad
\mathbb E t_{r,b}=\frac{k_r\ell_b}{L'_r},\qquad
t_{r,b}\le\frac{k_r\ell_b}{L'_r}+1.
\tag{4.4}\label{eq:4.4}
\]

Changing \(V_r\) changes the quota in any one bin by at most 1.

\end{lemma}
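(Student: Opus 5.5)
The plan is to work entirely with the sequence \(u_j=k_rC_j/L'_r\), \(j=0,\dots,J\). It is non-decreasing, with \(u_0=0\) and \(u_J=k_r\); the latter holds because \(C_J=L'_r\), since the retained bins exactly exhaust the retained length. For a fixed realisation \(v=V_r\in[0,1)\), write \(F_j(v)=\lfloor u_j+v\rfloor\). Then \(t_{r,b_j}=F_j(v)-F_{j-1}(v)\), and since \(x\mapsto\lfloor x+v\rfloor\) is non-decreasing and \(u_j\ge u_{j-1}\), every quota is a nonnegative integer. Non-retained bins have quota zero by definition. The sum telescopes:
\[
\sum_b t_{r,b}=F_J(v)-F_0(v)=\lfloor k_r+v\rfloor-\lfloor v\rfloor=k_r,
\]
because \(k_r\) is an integer and \(v\in[0,1)\). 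This gives the first identity in \eqref{eq:4.4}.

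For the mean, I will use the elementary identity \(\int_0^1\lfloor x+v\rfloor\,dv=x\), valid for every real \(x\). To check it, write \(x=n+\phi\) with \(\phi\in[0,1)\); the integrand equals \(n\) on \([0,1-\phi)\) and \(n+1\) on \([1-\phi,1)\), so the integral is \(n+\phi=x\). Linearity then gives \(\mathbb E t_{r,b_j}=u_j-u_{j-1}=k_r\ell_{b_j}/L'_r\), which is the second assertion.

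For the upper bound, I use \(\lfloor a\rfloor-\lfloor b\rfloor<(a-b)+1\). Because the left side is an integer, this gives \(t_{r,b_j}\le\lceil u_j-u_{j-1}\rceil\le k_r\ell_{b_j}/L'_r+1\). Finally, for the sensitivity claim, fix a bin and vary \(v\). Each \(t_{r,b_j}(v)\) is an integer lying in the interval \(\bigl(\delta-1,\delta+1\bigr)\), where \(\delta=u_j-u_{j-1}\). Hence it takes at most two values, \(\lfloor\delta\rfloor\) and \(\lceil\delta\rceil\), and these differ by at most \(1\). So changing \(V_r\), with everything else held fixed, changes any single bin's quota by at most one.

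None of these steps is a real obstacle. The only point needing care is to confirm that \(C_J=L'_r\), so that the telescoping sum lands exactly on \(k_r\), and to handle the degenerate case \(k_r=0\), where all quotas vanish and every assertion is trivial.
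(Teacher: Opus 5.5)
Your proof is correct and follows the paper's argument essentially step for step. Nonnegativity comes from monotone cumulative floors, the total telescopes to \(\lfloor k_r+V_r\rfloor-\lfloor V_r\rfloor=k_r\), the mean follows from \(\mathbb E\lfloor x+V_r\rfloor=x\), and the upper bound from the floor-increment estimate. The one minor variation is the sensitivity claim: the paper compares two marks \(V_r'\ge V_r\) and notes that each cumulative floor rises by \(0\) or \(1\), whereas you show that every quota lies in \(\{\lfloor\delta\rfloor,\lceil\delta\rceil\}\), which is equally elementary and also gives the upper bound directly.
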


\begin{proof}
Nonnegativity follows because the cumulative quantities are increasing. The total telescopes to \(\lfloor k_r+V_r\rfloor-\lfloor V_r\rfloor=k_r\). For every real number \(x\), we have \(\mathbb E\lfloor x+V_r\rfloor=x\); taking differences proves the expectation formula. A floor increment over a real interval of length \(u\) is at most \(u+1\). Finally, if \(V'_r\ge V_r\), each cumulative floor increases by either 0 or 1, so the difference between two successive increases lies in \(\{-1,0,1\}\).

\end{proof}

The case \(k_r=1\) is allowed. Rounding errors cannot be absorbed by assuming that the quota in each bin is large; endpoint jobs may continue to have demand equal to one.

Define the fractional load of a fixed pool of jobs by

\[
F(i)=\sum_{r:i\in I_r}\frac{k_r}{L_r}.
\tag{4.5}\label{eq:4.5}
\]

This sum uses the \textbf{reservation ledger}: once a job is created, its record \((W,H,I,k)\) is retained. Even after the row has started, its summand remains included whenever \(i\in I_r\). By contrast, the area estimates in Sections \ref{sec:6} and \ref{sec:7} use the \textbf{geometric ledger}, which counts only gaps whose rows have not yet started. The two ledgers must be kept distinct. The principal-gap jobs and endpoint jobs in the reservation ledger define \(F_P(i)\) and \(F_E(i)\), respectively, so that \(F=F_P+F_E\).

\begin{lemma}[Bin capacity]
\label{thm:4.2}
Suppose that the fixed pool of jobs near a bin satisfies \(F(i)\le\rho_I\). Then, for that bin,

\[
\mathbb P\left\{\sum_r t_{r,b}\ge\rho_B\ell_b\right\}
\le K\exp(-\kappa s_b^{11/20}).
\tag{4.6}\label{eq:4.6}
\]

Here and below, the positive constant \(\kappa\) may be decreased from one occurrence to the next.

\end{lemma}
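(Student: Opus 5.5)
The bound will come from a bounded-differences concentration inequality for the sum \(X_b=\sum_r t_{r,b}\), which is a function of the independent marks \(V_r\) of the jobs whose windows contain bin \(\mathcal B_b\) as a retained bin. The plan has three steps: bound the mean, bound the number of contributing jobs, and apply McDiarmid (or Hoeffding, since the \(t_{r,b}\) are independent).

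\emph{Mean.} By Lemma~\ref{thm:4.1},
\[
\mathbb E X_b=\sum_r \frac{k_r\ell_b}{L'_r}.
\]
By \eqref{eq:4.2}, \(L'_r=L_r(1-O(s_r^{-1/10}))\). Every contributing job has its window containing \(\mathcal B_b\), so \(s_r\asymp s_b\). Hence
\[
\mathbb E X_b\le(1+O(s_b^{-1/10}))\sum_{i\in\mathcal B_b}\sum_{r:i\in I_r}\frac{k_r}{L_r}=(1+o(1))\sum_{i\in\mathcal B_b}F(i)\le(\rho_I+o(1))\ell_b.
\]
This rewrites each \(k_r\ell_b/L_r\) as a sum of \(k_r/L_r\) over the indices \(i\in\mathcal B_b\subseteq I_r\). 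Since \(\rho_I<\rho_B\), for \(m_0\) large we have \(\mathbb E X_b\le\rho_B\ell_b-\delta\ell_b\) with a fixed \(\delta>0\), for example \(\delta=0.02\).

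\emph{Number of jobs.} Each \(t_{r,b}\) depends only on \(V_r\), and the terms are independent. Each term lies in an interval of length at most \(1\), because it takes at most the two values \(\lfloor u\rfloor\) and \(\lfloor u\rfloor+1\) with \(u=k_r\ell_b/L'_r\). McDiarmid's inequality (or Hoeffding's) then gives
\[
\mathbb P\{X_b\ge\rho_B\ell_b\}\le\exp\!\bigl(-2\delta^2\ell_b^2/M_b\bigr),
\]
where \(M_b\) is the number of jobs with a nonzero contribution. The main obstacle is bounding \(M_b\), because demands may equal one and there is no lower bound on individual terms. I would use \(k_r\ge1\) from \eqref{eq:2.7}. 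Each contributing job then adds at least \(1/L_r\ge K^{-1}s_b^{-(2-p)}=K^{-1}s_b^{-3/4}\) to \(F(i)\) for any fixed \(i\in\mathcal B_b\), using \eqref{eq:2.3} and \(s_r\asymp s_b\). Since \(F(i)\le\rho_I\), this gives \(M_b\le Ks_b^{3/4}\).

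\emph{Conclusion.} With \(\ell_b\asymp s_b^{q}=s_b^{13/20}\), the exponent becomes
\[
\ell_b^2/M_b\gtrsim s_b^{26/20-15/20}=s_b^{11/20},
\]
which is exactly \eqref{eq:4.6}. The remaining care is routine. First, the \(o(1)\) terms must be uniform, since \(s_r\asymp s_b\) follows from the window containing the bin together with Lemma~\ref{thm:2.1}. Second, jobs for which \(\mathcal B_b\) is a non-retained boundary bin contribute \(t_{r,b}=0\) and can be discarded. Third, \(K\) absorbs small \(s_b\).
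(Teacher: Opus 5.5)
Your proof is correct, but it takes a different route from the paper's.

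\textbf{The paper's route.} The paper caps each quota by $M_b=Ks_b^{1/10}$, using the demand bound $k_r\le Ks_r^{1/5}$ from \eqref{eq:2.7} together with $t_{r,b}\le k_r\ell_b/L'_r+1$. It then bounds the total variance by $M_b\mu_b=O(s_b^{3/4})$ via $\operatorname{Var}(X)\le M_b\mathbb E X$ for nonnegative $X\le M_b$. Bernstein's inequality finishes the argument; in the explicit-constants section this is Lemma~\ref{thm:8.2} with $b=10^4s_b^{1/10}$, $M=0.88\ell_b$ and excess $0.02\ell_b$.

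\textbf{Your route.} You observe that each quota takes only the two values $\lfloor u\rfloor,\lfloor u\rfloor+1$, which is equivalent to the last clause of Lemma~\ref{thm:4.1}. You then apply Hoeffding/McDiarmid with unit ranges. The number of terms is bounded by $Ks_b^{3/4}$ using $k_r\ge1$, $L_r\le Ks_b^{3/4}$, and the load hypothesis at a single index of the bin. This is the same job-counting device the paper itself uses in Section~\ref{sec:9.5}.

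\textbf{Comparison.} Both variance proxies are of order $s_b^{3/4}$ and yield the exponent $s_b^{11/20}$. Yours is tight when all demands equal one; the paper's is tight when demands have maximal size $s_b^{1/5}$.

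\begin{itemize}
\item Your argument needs only a sub-Gaussian inequality and no upper bound on $k_r$. In exchange, it relies on the specific two-point structure of cumulative rounding and on counting jobs.
\item The paper's argument needs no job count and would work for any independent, bounded, nonnegative quotas. It also fits the same predictable-mean-budget template that the paper reuses for the endpoint and main-query loads.
\end{itemize}
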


\begin{proof}
By \eqref{eq:4.2} and \eqref{eq:4.4},

\[
\mu_b:=\mathbb E\sum_r t_{r,b}
\le(1+O(s_b^{-1/10}))\rho_I\ell_b.
\]

Quotas belonging to different jobs are independent. Every quota is at most

\[
K s_b^{1/5+13/20-3/4}+1\le K s_b^{1/10}=:M_b.
\]

A nonnegative random variable bounded by \(M_b\) satisfies \(\operatorname{Var}(X)\le M_b\mathbb EX\). Hence the total variance is at most \(M_b\mu_b=O(s_b^{3/4})\). The threshold exceeds the mean by a fixed positive multiple of \(\ell_b\). Bernstein's inequality gives an exponent of order

\[
\frac{\ell_b^2}{K s_b^{3/4}+K s_b^{1/10}\ell_b}
\asymp s_b^{13/10-3/4}=s_b^{11/20}.
\]

The probability inequality used here is proved in Section \ref{sec:8}.

\end{proof}

When the total quota is at most \(\rho_B\ell_b\), put the corresponding job labels into the bin and fill the remaining positions with ``main position'' labels. Temporarily regard all label copies as distinct objects, and permute them uniformly at random; use independent permutations in different bins. Every job then receives exactly \(k_r\) distinct indices, there are no index conflicts, and every bin has at least \(0.1\ell_b\) main positions. Discarding the temporary copy identifiers gives the required assignment.

The random variable \(V_r\) is shared across the several bins of the same job; \textbf{quotas in different bins are not independent}. We use only independence of \(V_r\) across jobs and, conditional on all quotas, independence of the permutations across bins.

\section{The geometry of main rows}\label{sec:5}

\textbf{Beginning Step 3.} An index left unassigned to inherited jobs is labeled as ''main position''. We use it to start a strip in the primary rectangle. The membership of that strip must be settled before its height is randomized; this is the geometric source of predictability used later.

Maintain one unused rectangle \(P\), called the \textbf{primary rectangle}. Initially, \(P=Q_m\). It is reduced only by removing full strips spanning its shorter side.

Let \(S\) be the next unprocessed main position, and let \(w\) be the current shorter side of the primary rectangle. Temporarily assume that

\[
a_0 S^{-1/2}\le w\le S^{-1/2},
\tag{5.1}\label{eq:5.1}
\]

where \(a_0>0\) is fixed; Section \ref{sec:7} proves that this condition can be maintained. Fix \(c_0=1/4\), and set \(r_S=c_0S^{-p}\). Read the main positions in order, taking the longest initial segment satisfying

\[
\sum_{j=1}^k\frac1{i_j}\le w-r_S.
\tag{5.2}\label{eq:5.2}
\]

Here \(i_1=S\) is the starting index, and the row height is sampled only after all members of the row have been selected.

\begin{lemma}[Main-row size and height]
\label{thm:5.1}
Suppose that main positions occupy at least one tenth of each bin from Lemma~\ref{thm:4.2} and that \eqref{eq:5.1} holds. For all sufficiently large \(S\), the preceding procedure satisfies

\[
k\asymp\sqrt S,\qquad
i_j=S+O(S^q),\qquad
r_S\le e:=w-\sum_{j=1}^k\frac1{i_j}
<r_S+\frac1{i_{k+1}}.
\tag{5.3}\label{eq:5.3}
\]

All selected indices lie in the remainder of the current bin and the immediately following bin. Set \(z=S\), and let

\[
\varepsilon_S=K_0S^{-1/10},\qquad
U\sim\operatorname{Unif}[A+\varepsilon_S,B-\varepsilon_S],
\qquad
H=\frac1{z+1}+U z^{-p},
\tag{5.4}\label{eq:5.4}
\]

where \(K_0\) is fixed and sufficiently large. Then every selected rectangle satisfies \eqref{eq:2.2}, and the endpoint satisfies

\[
cH^p\le e\le(1+o(1))H.
\tag{5.5}\label{eq:5.5}
\]

\end{lemma}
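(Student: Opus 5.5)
The plan is to treat the statement as three separate estimates: the combinatorial size of the row, the location of its indices, and the fit of the sampled height. All three reduce to elementary comparisons using \eqref{eq:5.1}, the bin structure \eqref{eq:4.1}, and the main-position density hypothesis.

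\textbf{Row size and index location.} Every main position \(i\ge S\) has width \(1/i\le S^{-1}\), and \(w-r_S\ge a_0S^{-1/2}-c_0S^{-p}\ge (a_0/2)S^{-1/2}\) for large \(S\). Hence the greedy segment has at least \(k\ge \tfrac{a_0}{2}S^{1/2}\) members. This also shows that the first index \(S\) fits, so \(k\ge1\).

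Conversely, we need an upper bound on the indices. Main positions make up at least a tenth of each bin, and bins near \(S\) have length \(\ell_b\asymp S^q\) with \(q=13/20>1/2\). So the first \(K\sqrt S\) main positions after \(S\) already lie within the remainder of the current bin and the next one. If the current bin's remainder is short, the next bin alone supplies \(0.1\ell_{b+1}\gg\sqrt S\) main positions.

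Consequently every index considered satisfies \(i\le S+O(S^q)\), so \(1/i=S^{-1}(1+O(S^{q-1}))\). The sum constraint then forces \(k\le (w-r_S)S(1+o(1))+1\le K\sqrt S\). This gives \(k\asymp\sqrt S\) and \(i_j=S+O(S^q)\).

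\textbf{Endpoint width.} The bounds on \(e\) in \eqref{eq:5.3} are immediate from maximality of the initial segment. The bound \(e\ge r_S\) is \eqref{eq:5.2}. Since adding \(i_{k+1}\) would violate \eqref{eq:5.2}, we get \(e<r_S+1/i_{k+1}\). Here \(i_{k+1}\) exists because the next bin supplies enough main positions.

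\textbf{Height and \eqref{eq:2.2}.} Put \(z=S\) and \(H=1/(z+1)+Uz^{-p}\). For a selected index \(i=S+\delta\) with \(|\delta|=O(S^q)\), we have
\[
g_i=H-\frac1{i+1}=Uz^{-p}+\Bigl(\frac1{z+1}-\frac1{i+1}\Bigr)=Uz^{-p}+O(S^{q-2}).
\]
Since \(q-2=-27/20\) and \(-p-1/10=-27/20\), the error is \(O(S^{-p-1/10})\). Likewise \(i^{-p}=S^{-p}(1+O(S^{q-1}))\). So \(g_i/i^{-p}=U+O(S^{-1/10})\), and taking \(K_0\) larger than the implied constant puts this ratio in \([A,B]\). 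This exponent match is the delicate point: it is exactly why \(q=13/20\) and \(\varepsilon_S=K_0S^{-1/10}\) were chosen, and the step to check carefully is that the \(O(\cdot)\) constants are uniform in \(U\in[A,B]\).

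\textbf{Endpoint bound \eqref{eq:5.5}.} Here \(H=S^{-1}(1+O(S^{1-p}))\), so \(cH^p=\tfrac18S^{-p}(1+o(1))<\tfrac14S^{-p}=r_S\le e\). The upper bound follows from \(e<r_S+1/i_{k+1}\le S^{-1}(1+O(S^{q-1}))+O(S^{-p})=(1+o(1))H\).
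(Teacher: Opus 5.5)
Your route is the paper's: locate the row using bin density, get \(k\asymp\sqrt S\) from the width constraint, read off \(e\) from maximality of the greedy prefix, and fit \(H\) into \eqref{eq:2.2} via the exponent match \(q-2=-p-1/10\). The lower bound on \(k\), the height computation and the endpoint bounds are fine. The index-location step, however, is circular as written. You show that the first \(K\sqrt S\) main positions after \(S\) lie in the current remainder and the next bin, and conclude that \emph{every selected} index satisfies \(i\le S+O(S^q)\). That conclusion needs \(k\le K\sqrt S\). You only derive \(k\le K\sqrt S\) in the next sentence, from the sum constraint, using \(1/i\ge S^{-1}(1+O(S^{q-1}))^{-1}\) for the selected indices, which is the bound being established. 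The paper flags exactly this trap: the span bound must be proved ``without using the batch size to assume the span bound that is needed to establish that size.''

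The repair uses only ingredients you already have. Apply the width constraint to a prefix whose location is known, not to the selected set. Fix, say, \(K=3\). By the one-tenth density, the first \(\lceil 3\sqrt S\rceil\) main positions at or after \(S\) lie in \([S,S+O(S^q)]\). Each therefore has width at least \((1-o(1))/S\), and their total width is at least \(3(1-o(1))S^{-1/2}>S^{-1/2}\ge w>w-r_S\), using \eqref{eq:5.1}. The greedy selection is an initial segment satisfying \eqref{eq:5.2}, so it cannot contain all of them. Hence \(k<\lceil 3\sqrt S\rceil\), and \(i_{k+1}\) exists. All selected indices and \(i_{k+1}\) lie in the current remainder and the next bin, which gives \(i_j=S+O(S^q)\) and the upper half of \(k\asymp\sqrt S\) directly. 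The paper argues the same way with the whole next complete bin, whose main positions have total width of order \(S^{q-1}\gg S^{-1/2}\).

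After this reordering the rest of your proof goes through. You should also note that \(2\varepsilon_S<D\) for large \(S\), so the sampling interval for \(U\) is nonempty.
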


\begin{proof}
The next complete bin has at least \(a_1S^q\) main positions, each of width at least \((1-o(1))/S\). Their total width is at least \(a_2S^{q-1}\). Since \(q>1/2\), this exceeds \(w\le S^{-1/2}\). Thus the greedy selection must stop before exhausting that bin. This proves the span bound \(O(S^q)\) independently, without using the batch size to assume the span bound that is needed to establish that size.

Within this range, every rectangle has width \((1+o(1))/S\). By maximality of the initial segment, the endpoint width is less than \(r_S+1/i_{k+1}=O(1/S)\). Consequently, the occupied width is \(w-O(1/S)\asymp S^{-1/2}\), giving \(k\asymp\sqrt S\). In particular, \(k\ge1\), and the next main position exists.

For a selected index \(i\),

\[
\left|\frac1{z+1}-\frac1{i+1}\right|=O(S^{q-2}),
\qquad (i/z)^p=1+O(S^{q-1}).
\]

Hence

\[
i^p\left(H-\frac1{i+1}\right)
=U+O(S^{p+q-2})=U+O(S^{-1/10}).
\tag{5.6}\label{eq:5.6}
\]

Choosing \(K_0\) larger than the constant in this uniform error bound gives the exact interval condition \eqref{eq:2.2}. For all sufficiently large \(S\), we have \(2\varepsilon_S<D\), so the sampling interval is nonempty. We may also require \(H\le3/(2S)\), in which case

\[
cH^p\le\frac18(3/2)^{5/4}S^{-p}<\frac14S^{-p}=r_S.
\]

For the upper bound, the stronger estimate

\[
e<\frac1S+\frac14S^{-p}
\le\frac1{S+1}+S^{-p}\le H
\]

eventually holds, since \(S^{-p}\gg S^{-2}\). This proves \eqref{eq:5.5}.

\end{proof}

Remove the entire \(w\times H\) strip from the primary rectangle, reducing its longer side by \(H\), and place the main row in the strip. Within the strip, the only new gaps are the principal gap below each rectangle and one endpoint, which is split according to \eqref{eq:3.1}. No additional unaccounted full-width thin strip is left below the row.

\begin{lemma}[Aspect ratio]
\label{thm:5.2}
Suppose that the primary rectangle initially has aspect ratio at most 2 and that \(H\le w/2\). After removing the strip just described, the remaining primary rectangle still has aspect ratio at most 2.

\end{lemma}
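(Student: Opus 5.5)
Write the primary rectangle as \(w\times\ell\) with \(w\le\ell\le 2w\), so that \(w\) is its shorter side and the strip \(w\times H\) spans that shorter side. Removing the strip replaces \(\ell\) by \(\ell'=\ell-H\) and leaves \(w\) unchanged. The proof is an elementary two-case comparison, according to which side is shorter after the cut.

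\textbf{Case \(\ell'\ge w\).} The side \(w\) is still the shorter one. The new ratio is \(\ell'/w\le\ell/w\le2\), so the bound is inherited directly.

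\textbf{Case \(\ell'<w\).} Now \(\ell'\) is the shorter side and the ratio is \(w/\ell'\). Here we use the hypothesis \(H\le w/2\): since \(\ell\ge w\), we get \(\ell'=\ell-H\ge w-w/2=w/2\), hence \(w/\ell'\le2\). Note also that \(\ell'>0\), so the remaining rectangle is nondegenerate. In this case the roles of the two sides swap for the next cut, which is harmless because Section~\ref{sec:5} always measures \(w\) as the current shorter side.

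There is no real obstacle here. The only points needing care are to state explicitly that the strip spans the shorter side, so that only the longer side decreases, and to handle the case in which the shorter and longer sides exchange roles. When this lemma is later applied, the hypothesis \(H\le w/2\) will follow from \(H\le 3/(2S)\) (Lemma~\ref{thm:5.1}) together with the lower bound \(w\ge a_0S^{-1/2}\) from \eqref{eq:5.1}, for large \(S\). That verification belongs to the application, not to this lemma.
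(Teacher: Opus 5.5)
Your proof is correct and matches the paper's argument: the same two cases, \(\ell-H\ge w\) (where the ratio does not increase) and \(\ell-H<w\) (where \(\ell-H\ge w-H\ge w/2\) bounds the swapped ratio by \(2\)). Your extra remarks on nondegeneracy and on where \(H\le w/2\) comes from in the application are accurate but not needed for the lemma itself.
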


\begin{proof}
Write the original dimensions as \(w\le h\le2w\). If \(h-H\ge w\), the aspect ratio does not increase. If \(h-H<w\), then \(h-H\ge w-H\ge w/2\), so after interchanging the longer and shorter sides the ratio is still at most 2. Equality cases are included in these inequalities.

\end{proof}

\section{An area bound for waiting principal gaps}\label{sec:6}

This estimate answers an area question only: how much space can be tied up in small principal gaps? It uses no independence, no density assumption for their creation, and no successful future assignment.

\begin{lemma}[Deterministic area estimate]
\label{thm:6.1}
Consider any family of principal gaps satisfying \eqref{eq:2.2}, with at most one gap associated with each index. Let \(D_P(t)\) be the total area of those gaps whose heights satisfy \(g_i\le t\). Then, as \(t\downarrow0\),

\[
D_P(t)\le d\,t+O(t^{1+1/p}),
\qquad d=\frac{1+\log(B/A)}p.
\tag{6.1}\label{eq:6.1}
\]

\end{lemma}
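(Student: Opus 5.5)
We reduce the estimate to a per-index sum and then bound that sum by an integral. A principal gap attached to index \(i\) has width \(1/i\) and height \(g_i\in[Ai^{-p},Bi^{-p}]\) by \eqref{eq:2.2}, so its area is \(g_i/i\). Since there is at most one gap per index, we have
\[
D_P(t)\le\sum_{i}\frac{\min\{g_i,\,t\}}{i}\,\mathbf 1\{g_i\le t\}.
\]
We maximize this over all admissible choices of \(g_i\). Indices split into three ranges according to \(t\). If \(Ai^{-p}>t\), that is \(i<(A/t)^{1/p}\), no gap of height \(\le t\) is possible and the contribution is zero. If \(Bi^{-p}\le t\), that is \(i\ge (B/t)^{1/p}\), every gap qualifies and contributes at most \(Bi^{-p}/i=Bi^{-1-p}\). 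In the middle range \((A/t)^{1/p}\le i<(B/t)^{1/p}\), the gap qualifies only if \(g_i\le t\), and then it contributes at most \(t/i\).

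Next we evaluate each range. Put \(x_A=(A/t)^{1/p}\) and \(x_B=(B/t)^{1/p}\), both of order \(t^{-1/p}\to\infty\).

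\textbf{Middle range.} We have
\[
\sum_{x_A\le i<x_B}\frac ti=t\log\frac{x_B}{x_A}+O(t/x_A)=\frac{t}{p}\log\frac BA+O(t^{1+1/p}).
\]
Here the harmonic-sum error is \(O(1/x_A)\), and \(t/x_A\asymp t^{1+1/p}\).

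\textbf{Tail range.} Comparing with the integral,
\[
\sum_{i\ge x_B}Bi^{-1-p}=\frac{B}{p}x_B^{-p}+O(x_B^{-1-p})=\frac{t}{p}+O(t^{1+1/p}).
\]
Here \(Bx_B^{-p}=t\), and the boundary term \(Bx_B^{-1-p}=t/x_B=O(t^{1+1/p})\).

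Adding the two ranges gives \(D_P(t)\le \frac{1+\log(B/A)}{p}t+O(t^{1+1/p})\), which is \eqref{eq:6.1}.

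The main point needing care is that the bound must be uniform over arbitrary families: no independence is used and no density of created gaps is assumed. The termwise maximization handles this automatically, since each index contributes at most its worst-case amount regardless of the others. The only remaining issue is bookkeeping of the integer endpoints \(\lceil x_A\rceil\) and \(\lceil x_B\rceil\), and of monotone sum-versus-integral comparisons, so that every error term is \(O(t^{1+1/p})\) with a constant depending only on \(A\), \(B\) and \(p\). One should also check that the lemma is invoked only for \(t\) small enough that \(x_A\ge m\ge1\). The error bounds are valid for all small \(t\) in any case, because restricting to indices \(\ge m\) can only decrease the sum.
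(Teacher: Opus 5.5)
Your proposal is correct and follows essentially the same route as the paper: bound each gap's area by $\frac1i\min\{Bi^{-p},t\}$ for $i\ge (A/t)^{1/p}$, split at $(B/t)^{1/p}$ into a harmonic sum and a tail sum, and compare each with an integral, with all errors of size $O(t^{1+1/p})$. The only difference is the harmless choice of which sum receives a boundary index at $x_B$, which the paper also notes is immaterial.
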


\begin{proof}
Every index included in the sum satisfies \(i\ge(A/t)^{1/p}\), and the area of its gap is at most

\[
\frac1i\min\{B i^{-p},t\}.
\]

Set \(a=(A/t)^{1/p}\) and \(b=(B/t)^{1/p}\). Enlarging the sum to include all eligible integers, rather than only gaps already generated, gives

\[
D_P(t)
\le t\sum_{a\le i\le b}\frac1i
+B\sum_{i>b}i^{-p-1}.
\tag{6.2}\label{eq:6.2}
\]

An integer term at the dividing point may be assigned to either sum, since the two expressions agree there. Any extra term introduced by rounding has size \(O(t/a)\). Integral comparison for monotone functions yields

\[
\sum_{a\le i\le b}\frac1i
\le\log(b/a)+O(a^{-1}),
\quad
\sum_{i>b}i^{-p-1}\le\frac{b^{-p}}p+O(b^{-p-1}).
\]

Substituting \(\log(b/a)=p^{-1}\log(B/A)\) and \(Bb^{-p}=t\) proves \eqref{eq:6.1}. The estimate permits arbitrary dependence between gap heights and indices, and remains valid after deleting gaps that have already been filled or have not yet been generated.

\end{proof}

Immediately before index \(N\) is processed, an unstarted inherited row job with a valid assignment has its first assigned index \(j\ge N\). Consequently,

\[
H\le f_B(j)\le f_B(N)=:t_N
=\frac{1+o(1)}N.
\tag{6.3}\label{eq:6.3}
\]

The total area of waiting principal gaps is therefore at most \((d+o(1))/N\). A job whose reveal time has not yet been reached satisfies the same bound: otherwise its entire admissible window would precede the current time, so its assignment would already have been required and would constitute an earlier failure. This observation is used only before the first failure.

\section{Area accounting and the geometric bootstrap}\label{sec:7}

The next identity joins the local constructions into a global invariant. The crucial distinction is between an entire row already reserved in space and the smaller set of tile indices already processed.

The construction starts from an empty square, with no pre-existing gaps. Whenever a row is started, its parent rectangle is replaced by the placed or reserved rectangle positions in that row, all its principal gaps, and its two endpoint children. The primary rectangle loses one complete strip only when a main row is started. These operations maintain an exact decomposition into regions with pairwise disjoint interiors.

Immediately before index \(N\) is processed, the rectangles with indices \(m,\ldots,N-1\) have been placed. The remaining area is exactly

\[
\frac1m-\sum_{i=m}^{N-1}\frac1{i(i+1)}=\frac1N.
\tag{7.1}\label{eq:7.1}
\]

This area has four components: the primary rectangle, unstarted principal-gap jobs, unstarted endpoint jobs, and the rectangle positions reserved for indices \(i\ge N\) in rows that have already started. A started parent job is no longer counted among the waiting jobs; each of its children is counted according to whether that child has started. The principal gap below a reserved future rectangle is generated when its row starts and belongs to the second component, while the reserved rectangle itself belongs to the fourth. Thus neither region is omitted or counted twice.

There is also a cumulative endpoint account: processed endpoint nodes remain in its historical forest. A split at height \(H\) and endpoint width \(e\) adds \(H\) to the cut-height total and \(2H+e\) to the total semiperimeter of created endpoints. It does not subtract the old parent's semiperimeter. Waiting endpoints form a subfamily of all created endpoints, so their total area can be bounded by this larger nonnegative account.

For reference, the four-region equality underlying \eqref{eq:7.1} is
\[
 A_{\mathrm{primary}}(N)+A_{\mathrm{waiting\ principal}}(N)
 +A_{\mathrm{waiting\ endpoint}}(N)+A_{\mathrm{reserved}}(N)=1/N.
\]
It follows from the exact region decomposition; it is not an extra condition assumed about a successful future state.

\begin{lemma}[Reserved area]
\label{thm:7.1}
In any finite history before failure, all indices of reserved rectangles that have not yet been used lie in

\[
[N,N+K N^{3/4}].
\]

Their total area is therefore at most \(K N^{-5/4}\).

\end{lemma}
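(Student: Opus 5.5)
We classify the rows that have already started but still contain unused reserved indices \(i\ge N\). Every such row was started when its first index \(j\) was processed, so \(j\le N-1\), while some assigned index \(i\ge N\) is still waiting. The plan is to show that each started row's assigned indices span an interval of length at most \(KN^{3/4}\) above its first index. Then every unused reserved index lies in \([N,N+KN^{3/4}]\) for a uniform constant \(K\). There are two row types.

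\textbf{Main rows.} By Lemma~\ref{thm:5.1}, a main row starting at \(S\le N-1\) selects indices \(i_j=S+O(S^q)\) with \(q=13/20<3/4\). Every unused index is therefore at most \(S+KS^{q}\le N+KN^{3/4}\).

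\textbf{Inherited rows.} An inherited row of height \(H\) receives indices only from its window \(I(H)\). By Lemma~\ref{thm:2.1}, this window has \(\min I(H)=s+O(s^{2-p})\) and length \(L(H)=O(s^{2-p})+O(1)\), with \(s=H^{-1}\) and \(2-p=3/4\). The row started at its first assigned index \(j\ge\min I(H)\) with \(j\le N-1\). All of its indices lie in \(I(H)\), so they are at most
\[
\max I(H)\le \min I(H)+L(H)\le j+Ks^{3/4}.
\]
Since \(s\le\min I(H)+O(1)\le j+O(1)\le N\) up to lower-order terms, we get \(s^{3/4}\le KN^{3/4}\). The window estimates of Lemma~\ref{thm:2.1} apply to every started inherited row before failure. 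Such a row has height at most \(t_N\)-scale by \eqref{eq:6.3}, and its height is small because \(m\) is large and heights are at most roughly \(1/m\). So all constants are uniform, which proves the index claim.

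\textbf{Area.} Each unused reserved index \(i\ge N\) contributes area \(1/(i(i+1))\le N^{-2}\). Distinct reserved positions carry distinct indices, because the bin permutations of Section~\ref{sec:4} assign each index once and main positions are disjoint from job labels. Hence the number of such positions is at most \(KN^{3/4}+1\), and the total area is at most
\[
(KN^{3/4}+1)N^{-2}\le KN^{-5/4}.
\]

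The main obstacle is making the inherited-row bound uniform over all histories. The window is governed by the row's height rather than directly by \(N\), so we must relate \(s=1/H\) to \(N\) through \(\min I(H)\le j<N\). We also need \(H\) to be small enough for the asymptotics of Lemma~\ref{thm:2.1} to hold, which comes from \(m\ge m_0\). The main-row case needs the bin-density hypothesis (main positions occupy at least one tenth of each bin), which holds before failure.
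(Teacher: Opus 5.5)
Your proof is correct and follows the paper's argument. You bound the span of every row open across \(N\): window length \(O(s^{3/4})\) with \(s\asymp N\) for inherited rows, and span \(O(S^{13/20})\) for main rows. You then use distinctness of assigned indices to bound the reserved area by the tiles indexed in \([N,N+KN^{3/4}]\), exactly as the paper does. Your step \(s\le\min I(H)+O(1)\) holds exactly, as \(s<\min I(H)+1\), since \(f_A(i)\le H\) forces \(1/(i+1)<H\); it is a useful explicit version of the paper's terse claim \(s\asymp N\).
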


\begin{proof}
An inherited row starts only at its first assigned index, and all its assigned indices belong to a single window of length \(O(s^{3/4})\). If a row extends across the current index \(N\), then its scale satisfies \(s\asymp N\), and its last and first indices differ by \(O(N^{3/4})\). Main rows have the smaller span \(O(N^{13/20})\). All assigned indices are distinct. Hence, regardless of how many rows are simultaneously open, the sum of the areas of all rectangles indexed by this integer interval is a uniform upper bound:

\[
\sum_{i=N}^{N+\lceil K N^{3/4}\rceil}\frac1{i(i+1)}
\le K' N^{-5/4}.
\]

This is the role of delaying the geometric start of a row. If the complete row were materialized at a time earlier by a fixed proportion, the range of reserved indices could have length \(\Theta(N)\), and the preceding lower-order bound would no longer hold.

\end{proof}

\begin{lemma}[Simultaneous maintenance of the budgets and the primary rectangle]
\label{thm:7.2}
Suppose that all fractional-load and bin-capacity certificates required so far are valid. For a sufficiently large initial index \(m\), the following properties hold throughout every finite horizon:

\begin{enumerate}
\item The sum of the semiperimeters of all endpoint rectangles generated is less than \(\Pi/2\), and the sum of the parent heights used in all endpoint cuts is also less than \(\Pi/2\).
\item The aspect ratio of the primary rectangle is at most \(2\), and its area \(A_N\), immediately before index \(N\), satisfies
\end{enumerate}

\[
\frac{1-d-\Pi-o(1)}N\le A_N\le\frac1N.
\tag{7.2}\label{eq:7.2}
\]

\end{lemma}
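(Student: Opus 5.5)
We argue by induction along program time (equivalently, along the cursor \(N\)), maintaining both items simultaneously. The two items feed each other: the lower area bound in item 2 keeps condition \eqref{eq:5.1} in force, which makes Lemma~\ref{thm:5.1} and Lemma~\ref{thm:5.2} applicable to the next main row; and item 1 is what lets us discard the waiting endpoint area in the four-region identity. Since the finite horizon is arbitrary and every constant below is uniform, it is enough to prove that each item survives one more step, given that both hold up to that step and all certificates are valid.

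\textbf{Item 1.} Every endpoint forest is rooted at an ordinary row: either a main row or a row filling a principal gap. By Corollary~\ref{thm:3.3}, an ordinary row of height \(H\) contributes less than \(55H\) to created semiperimeters and less than \(18H\) to cut heights, including its still-incomplete truncated tree. So both totals are at most \(55\sum H\), summed over all ordinary rows ever started.

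\begin{itemize}
\item Main rows are strips of distinct indices \(S\) with height \(H\asymp S^{-1}\) and \(k\asymp\sqrt S\) members. Hence their heights sum to at most \(\sum_S S^{-1}\cdot K S^{-1/2}\) over starting indices \(S\ge m\). Equivalently, each member index carries weight \(O(i^{-3/2})\), giving \(O(m^{-1/2})\).
\item A principal-gap row of height \(g_i\le Bi^{-p}\) is attached to a distinct index \(i\ge m\). These sum to \(O(m^{1-p})=O(m^{-1/4})\).
\end{itemize}

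Taking \(m\) large makes \(55\sum H<\Pi/2\). This step is deterministic and does not even need the induction hypothesis.

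\textbf{Item 2, area bound.} The upper bound \(A_N\le1/N\) is immediate from \eqref{eq:7.1}. For the lower bound, subtract the three other components of the four-region equality from \(1/N\).

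\begin{itemize}
\item \emph{Waiting principal gaps.} Before failure each has height at most \(t_N=(1+o(1))/N\) by \eqref{eq:6.3}. Lemma~\ref{thm:6.1} then bounds their area by \((d+o(1))/N\).
\item \emph{Reserved rectangles.} Lemma~\ref{thm:7.1} bounds their area by \(KN^{-5/4}=o(1/N)\).
\item \emph{Waiting endpoints.} This is the delicate term, and the step I expect to be the main obstacle. A crude product bound, area \(\le\) (shorter side)\(\times\)(semiperimeter sum), only gives area \(\le\Pi/2\) times a scale. To get order \(\Pi/N\), I would use that any waiting endpoint job has not yet started. Its window contains its first index, which is at least \(N\), so by the same argument as \eqref{eq:6.3} its height is at most \(t_N\). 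Its area is therefore at most \(t_N\) times its longer side, which is at most its semiperimeter. Summing and applying item 1 gives at most \(t_N\cdot\Pi/2\le\Pi/N\).
\end{itemize}

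Combining the three bounds yields \eqref{eq:7.2}.

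\textbf{Item 2, shape.} Start from \(Q_m\), whose aspect ratio is 1. Because the aspect ratio is at most 2 and \(A_N\ge(1-d-\Pi-o(1))/N\), the shorter side satisfies \(w\ge\sqrt{A_N/2}\ge a_0N^{-1/2}\) with \(a_0>0\), using \(d+\Pi<1\) from \eqref{eq:1.5}. It also satisfies \(w\le\sqrt{A_N}\le N^{-1/2}\). This is \eqref{eq:5.1} at \(S=N\) when a main row starts. Lemma~\ref{thm:5.1} then gives \(H\le 3/(2N)\le w/2\) for large \(N\), so Lemma~\ref{thm:5.2} preserves the aspect ratio \(\le2\). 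This closes the induction.
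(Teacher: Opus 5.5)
Your proposal is correct and follows essentially the paper's own argument. You run a simultaneous first-failure induction and apply Corollary~\ref{thm:3.3} with the $O(m^{-1/2})$ main-row and $O(m^{-1/4})$ principal-gap height sums; your per-member charge $O(i^{-3/2})$ is just the paper's dyadic count in another form. You then subtract Lemma~\ref{thm:6.1}, Lemma~\ref{thm:7.1} and the bound ($t_N$ times the semiperimeter budget) for waiting endpoints from \eqref{eq:7.1}, and close with $\sqrt{A_N/2}\le w\le\sqrt{A_N}$ and Lemma~\ref{thm:5.2}.

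One correction: item~1 is \emph{not} independent of the induction hypothesis. The lower bound $k\gtrsim\sqrt S$ on main-row membership comes from Lemma~\ref{thm:5.1}, which requires \eqref{eq:5.1}, i.e.\ item~2 at the moment each main row (including the newest one) is cut. Without it, many short main rows of height $\asymp 1/S$ could make the height sum grow without bound as the horizon increases. This is exactly why the paper keeps the primary-rectangle bound as a provisional hypothesis in its first-failure argument.
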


\begin{proof}
We use a first-failure argument. As provisional inductive conditions, require that each of the two endpoint budgets be at most \(\Pi\), that the shorter side of the primary rectangle be at least \(a_0N^{-1/2}\), and that its aspect ratio be at most \(2\), where

\[
0<a_0<\sqrt{(1-d-\Pi)/4}.
\tag{7.3}\label{eq:7.3}
\]

On a history in which these conditions have held, a main row uses \(\asymp\sqrt S\) indices. Main rows whose initial indices lie in \([T,2T]\) use indices in \([T,2T+KT^q]\). Each such row uses at least \(a_3\sqrt T\) distinct indices, so there are at most \(K\sqrt T\) such rows. Their individual heights are \(O(1/T)\). Summing over the intervals \([2^jm,2^{j+1}m]\) gives

\[
\sum_{\text{main rows}}H\le K\sum_{j\ge0}(2^jm)^{-1/2}
\le K' m^{-1/2}.
\tag{7.4}\label{eq:7.4}
\]

Each principal-gap row corresponds to a unique source index \(i\ge m\). This index is not counted twice even if its rectangle has only been reserved. Thus

\[
\sum_{\text{principal-gap rows}}H
\le B\sum_{i\ge m}i^{-p}
\le K m^{1-p}=K m^{-1/4}.
\tag{7.5}\label{eq:7.5}
\]

By Corollary 3.3, both endpoint budgets are at most

\[
125K(m^{-1/2}+m^{-1/4}).
\tag{7.6}\label{eq:7.6}
\]

Increasing \(m\) makes this quantity smaller than \(\Pi/2\). The bound is uniform over all finite prefixes. It also includes a new row that could potentially cause the first budget violation: its number of members and its height are determined from the valid state before the cut, after which the row is included in the count in \eqref{eq:7.4} or \eqref{eq:7.5}. Hence neither budget can fail first.

Before index \(N\), every waiting endpoint has shorter side at most \(t_N\). If its longer side is \(W\), its area is at most \(t_NW\). The endpoint semiperimeter budget therefore bounds the total area of waiting endpoints by \(\Pi t_N\). Combining Lemmas~\ref{thm:6.1} and \ref{thm:7.1} with the exact area identity \eqref{eq:7.1}, we obtain

\[
A_N\ge\frac1N-(d+o(1))\frac1N-\Pi t_N-KN^{-5/4},
\]

which is \eqref{eq:7.2}. The upper bound follows directly from the decomposition into regions of nonnegative area.

When the aspect ratio is at most \(2\), the shorter side of the primary rectangle satisfies

\[
\sqrt{A_N/2}\le w_N\le\sqrt{A_N}.
\]

Equation \eqref{eq:7.2} consequently provides a lower bound strictly stronger than the inductive lower bound chosen in \eqref{eq:7.3}. A main row has height \(O(1/N)=o(w_N)\), so its strip fits, and Lemma~\ref{thm:5.2} preserves the aspect ratio. Lemmas~\ref{thm:2.2} and \ref{thm:3.1} preserve the geometric conditions for all inherited rows and new children. Every main-row height is at most \(3/(2m)\); the heights of inherited row jobs are at most \(f_B(m)\); and endpoint-child heights decrease strictly. Increasing \(m\) places all these heights below a common small-height threshold, so all geometric lemmas apply uniformly. Thus no geometric condition can be the first to fail.

Initially there is no waiting or reserved area, \(A_m=1/m\), the aspect ratio is \(1\), and both budgets are zero. The induction therefore has a valid initial state.

\end{proof}

\section{Principal-gap load for a fixed input pool}\label{sec:9}

Fix a target integer \(N\), and write

\[
S=N^{1/p}=N^{4/5}.
\]

\textbf{What this section proves.} Freeze the geometries and demands of a finite pool of inherited jobs supplying a narrow set of working bins. Assume its fractional input load is at most \(\rho_I\) on a slightly larger comparison set containing their complete windows. Define \(Z_N\) to be the inherited output load on the working positions plus a deterministic weight at every remaining main position. We first prove the \emph{source-only} estimate
\[
\mathbb P\{\text{some working bin has occupancy }\ge0.90\ell_b
                 \ \text{or}\ Z_N\ge0.83\}
 \le3e^{-\kappa_sN^{7/50}},\qquad \kappa_s>0.
\tag{8.1}\label{eq:9.1}
\]
The constants and the large-scale threshold are uniform over these pools. If quotas exceed the available capacity, there is no physical allocation; an arbitrary measurable extension of \(Z_N\) may be used there, since the bin event already includes that outcome.

The other random component is the height of each actual main row. Section~\ref{sec:9.7} gives a separate estimate for its accumulated load minus used comparison weight. That estimate is taken in the actual query filtration. It does not condition on all future bin allocations. Sections~\ref{sec:11}--\ref{sec:12} justify the freezing and combine the two bounds.

\subsection{A single output gap and its deterministic support}\label{sec:9.1}

A source index \(i\) generates a principal gap of width \(1/i\) and height \(g\). Its load coefficient at \(N\) is

\[
\lambda_N(i,g)
=\frac{k(1/i,g)}{L(g)}\,\mathbf1_{\{N\in I(g)\}}.
\tag{8.2}\label{eq:9.2}
\]

If \(N\in I(g)\), then

\[
f_A(N)\le g\le f_B(N),\qquad g=(1+o(1))/N.
\]

Lemma~\ref{thm:2.1} and the demand upper bound imply

\[
\lambda_N(i,g)
\le\bigl(1+O(N^{-1/4})\bigr)\frac{N^{p-1}}{Di}.
\tag{8.3}\label{eq:9.3}
\]

Only the inequality \(k(1/i,g)\le\min I(g)/i\) is used here. Taking the integer part in the demand and reserving a positive endpoint width therefore cannot increase this upper bound.

The exact gap profile \(Ai^{-p}\le g\le Bi^{-p}\) shows that every source index with nonzero contribution belongs to the deterministic interval

\[
K_N=\left[
\left(\frac A{f_B(N)}\right)^{1/p},
\left(\frac B{f_A(N)}\right)^{1/p}
\right].
\tag{8.4}\label{eq:9.4}
\]

Its endpoints are \(A^{1/p}S+O(S^{11/16})\) and \(B^{1/p}S+O(S^{11/16})\), respectively. In particular, \(i\asymp S\), and hence

\[
0\le\lambda_N(i,g)\le h_N:=K N^{-11/20}.
\tag{8.5}\label{eq:9.5}
\]

\subsection{A deterministic mean envelope for main positions}\label{sec:9.2}

The complete membership of a main row is determined before that row's own height mark is revealed. Let \(z\) be its reference index, and recall the height margin \(\varepsilon_z\) from Section~\ref{sec:5}. For a member \(i\), the conditional density of \(g=H-1/(i+1)\) is at most
\[
 \frac{z^p}{D-2\varepsilon_z}.
\]
Since \(i-z=O(z^q)\) and \(\varepsilon_z\to0\), this divided by \(i^p/D\) tends uniformly to one. The eligible height interval has length exactly \(DN^{-p}\). Multiplying its length by the density bound and \eqref{eq:9.3} shows the following uniform statement: for every fixed \(\epsilon>0\), all sufficiently large relevant scales satisfy
\[
 \mathbb E[\lambda_N(i,g)\mid\text{group-selection history}]
 \le (1+\epsilon)\frac{i^{p-1}}{DN}.
\tag{8.6}\label{eq:9.6}
\]
Positions in one main row share a height. We use linearity of conditional expectation for this common height, not independence of the positions.

Choose a fixed buffer large enough to contain the entire retained window of each inherited job capable of positive output at \(N\), and round its ends out to complete bins. Denote the resulting narrow union by \(\Omega_N\). It contains every actual source position \(i\ge m\) in \(K_N\), and is contained in
\[
 [S-K_1S^{3/4},\ B^{4/5}S+K_1S^{3/4}]\cap[m,\infty)
\]
for a fixed \(K_1\). The actual construction permits an inner buffer coefficient four before rounding and an envelope coefficient six after rounding. Complete retained windows of boundary input jobs are covered by a separate comparison envelope with coefficient eight. These buffers are fixed before increasing the scale.

Keep \emph{all} jobs that supply a working bin, including those with zero output at \(N\). Such a zero-output boundary job may have part of its retained window outside \(\Omega_N\). Only a job with positive output is required to have its whole retained window inside \(\Omega_N\). The larger comparison set controls all input loads and the number of jobs; it is not substituted for \(\Omega_N\) in the sharp mean sum.

Fix from now on
\[
 \epsilon=10^{-3},\qquad
 w_i=(1+\epsilon)\frac{i^{p-1}}{DN}
           \mathbf1_{\{i\in\Omega_N\}}.
\tag{8.7}\label{eq:9.7}
\]
All thresholds below are increased so that this one fixed margin simultaneously bounds the main-position mean and the inherited-row comparison proved next. There is no asserted rate for the threshold's dependence on \(\epsilon\). In particular, this fixed weight is not being written as \((1+o(1))i^{p-1}/(DN)\).

If the nominal interval begins below \(m\), only its actual positions \(i\ge m\) must be covered. All tile selections have indices at least \(m\); missing smaller integers produce no query load, no used weight, and no inherited source output. Dropping their nonnegative terms from an upper bound is valid without an assumption that the nominal left endpoint exceeds \(m\). After enlarging \(K\), both \(w_i\) and every source label cost are bounded by \(h_N\) from \eqref{eq:9.5}.

\subsection{The mean contribution of one inherited row job}\label{sec:9.3}

Fix a job \(r\), and write \(H_r=1/s_r\). A source position of this job can produce a nonzero output only if

\[
H_r-\frac1{i+1}\in[f_A(N),f_B(N)].
\]

With \(g=H_r-1/(i+1)\) as the variable, the inverse function and its derivative are

\[
i(g)=\frac1{H_r-g}-1,\qquad i'(g)=(i(g)+1)^2.
\]

Inside the window, \(i=s_r+O(S^{3/4})\). The number of eligible integer positions is therefore at most

\[
\bigl(D+O(S^{-1/4})\bigr)s_r^2N^{-p}+O(1).
\tag{8.8}\label{eq:9.8}
\]

The principal term \(s_r^2N^{-p}\asymp S^{7/16}\) tends to infinity, so the integer-endpoint error is relatively negligible. This count is unaffected when the eligibility interval crosses a bin boundary; restricting it to the complete bins retained by the job can only reduce the number.

On the full retained region \(I'_r\), define the formal position mass

\[
q_{r,i}=k_r/L'_r,
\]

and set it equal to zero outside that region. For a supported working bin, cumulative rounding has mean \(k_r\ell_b/L'_r\), and a uniform permutation distributes that quota equally among its positions. Their algebraic combination is this position mass. This calculation averages quota marks without conditioning on the event that all quotas fit. The algebraic quantity remains meaningful on outcomes with excessive quotas; we do not require an actual permutation to exist for every outcome.

Combining \eqref{eq:4.2}, \eqref{eq:9.3}, and \eqref{eq:9.8}, the job's formal mean output \(A_r\) satisfies

\[
\begin{aligned}
A_r&:=\sum_{i\in I'_r}q_{r,i}\lambda_N\left(i,H_r-\frac1{i+1}\right)\\
&\le\bigl(1+o(1)\bigr)
\frac{k_rs_r^{p-1}}{DN}.
\end{aligned}
\tag{8.9}\label{eq:9.9}
\]

If \(A_r>0\), its entire retained window lies in \(\Omega_N\), and \(i^{p-1}/s_r^{p-1}\to1\) uniformly throughout that window. The counting estimate, retained-length estimate, and this ratio all converge uniformly. Given the already fixed \(\epsilon\), choose their individual errors sufficiently small that their product is at most \(1+\epsilon\), then take the maximum of their thresholds. Since \(\sum_iq_{r,i}=k_r\), the weight in \eqref{eq:9.7} then gives

\[
A_r\le\sum_iq_{r,i}w_i.
\tag{8.10}\label{eq:9.10}
\]

If \(A_r=0\), the same inequality follows from nonnegativity, including for jobs whose retained windows meet \(\Omega_N\) only partially. Their zero output does not mean zero occupancy. Their quota copies remain in the bin permutation and displace main-position weight.

For clarity, every quota used here keeps its original \emph{global} cumulative prefix. If \(c_{r,b}\) is the total retained length of job \(r\) in bins with global number smaller than \(b\), then a supported bin has quota
\[
 t_{r,b}(V_r)=
 \left\lfloor\frac{k_r(c_{r,b}+\ell_b)}{L'_r}+V_r\right\rfloor
 -\left\lfloor\frac{k_rc_{r,b}}{L'_r}+V_r\right\rfloor.
\]
For an unsupported bin both cumulative endpoints agree, so its quota is zero. Restricting attention to \(\Omega_N\) changes neither \(c_{r,b}\) nor \(L'_r\). Recomputing either from just the local bins would in general change the allocation law. A job with a partially visible window need not contribute its full demand inside \(\Omega_N\).

\subsection{\texorpdfstring{A nonnegative mass partition and the coefficient \(1/p\)}{A nonnegative mass partition and the coefficient 1/p}}\label{sec:9.4}

For bins \(b\subseteq\Omega_N\), put

\[
a_{r,b}=\frac1{\ell_b}\sum_{i\in\mathcal B_b}
\lambda_N\left(i,H_r-\frac1{i+1}\right),
\qquad \bar w_b=\frac1{\ell_b}\sum_{i\in\mathcal B_b}w_i.
\]

The coefficient \(a_{r,b}\) is defined by this formula only when \(\mathcal B_b\subseteq I'_r\); all other such coefficients are set to zero. This condition is determined by the fixed window and does not depend on whether the realized random quota happens to vanish. On \textbf{all} outcomes of the quota marks, define the real-valued function

\[
Y(V)=\sum_{b\subseteq\Omega_N}
\left[\ell_b\bar w_b+
\sum_rt_{r,b}(V_r)(a_{r,b}-\bar w_b)\right].
\tag{8.11}\label{eq:9.11}
\]

When the quotas are valid, this is the permutation average of the inherited row jobs' output principal load plus the weights \(w_i\) at the main positions. If a bin is overloaded, \eqref{eq:9.11} is only a real-valued algebraic extension; a negative number of remaining slots is not interpreted as a geometric object.

After averaging, however, the mass assigned to main positions is nonnegative. Indeed, increase the threshold so that every relevant retained window satisfies \(L_r/L'_r\le100/99\). The input-load bound then gives

\[
Q_i:=\sum_rq_{r,i}\le\frac{100}{99}\rho_I=\frac{86}{99}<1.
\tag{8.12}\label{eq:9.12}
\]

Consequently, \eqref{eq:9.10} yields

\[
\begin{aligned}
\mathbb EY
&=\sum_rA_r+\sum_{i\in\Omega_N}(1-Q_i)w_i\\
&\le\sum_{i\in\Omega_N}Q_iw_i+
\sum_{i\in\Omega_N}(1-Q_i)w_i
=\sum_{i\in\Omega_N}w_i.
\end{aligned}
\tag{8.13}\label{eq:9.13}
\]

Every index has total mass exactly \(1\). This identity replaces any assumption about the density at which rows are generated.

The leading constant comes from an elementary integral over the \emph{narrow} support:
\[
 \frac1{DN}\int_{A^{1/p}S}^{B^{1/p}S}x^{p-1}\,dx
 =\frac{(B-A)S^p}{pDN}=\frac1p=0.8.
\]
The extra boundary length is \(O(S^{3/4})\); multiplied by the integrand size \(O(S^{p-1}/N)\), it contributes \(O(S^{-1/4})\). The integer sum over the full nominal envelope has the same limit. The actual sum, truncated at \(m\), is bounded above by that full sum; it need not have the same limit uniformly in \(m\). Hence for large scales
\[
\begin{aligned}
 \sum_{i\in\Omega_N}\frac{i^{p-1}}{DN}&\le0.8+\epsilon,\\
 \mathbb EY\le\sum_{i\in\Omega_N}w_i
 &\le(1+\epsilon)(0.8+\epsilon)\\
 &=1.001\cdot0.801=0.801801<0.81.
\end{aligned}
\tag{8.14}\label{eq:9.14}
\]
This is the numerical margin used in the verified proof. The unweighted sum has leading coefficient \(1/p\); the fixed factor \(1.001\) remains present in the weighted sum. Removing positions below \(m\) can only decrease both sums. Using a fixed-proportion enlargement instead would alter the leading integral, so those wider bands are reserved for causality and input comparisons.

\subsection{Fluctuations caused by the quota marks}\label{sec:9.5}

A job occupies \(O(S^{3/4-q})=O(S^{1/10})\) complete bins. Changing its mark \(V_r\) changes each bin quota by at most \(1\). Since \(a_{r,b},\bar w_b\in[0,h_N]\),

\[
\operatorname{osc}_{V_r}Y
\le K S^{1/10}h_N=K N^{-47/100}.
\tag{8.15}\label{eq:9.15}
\]

There are \(O(S)\) relevant jobs: summing the valid input-load bound over a slightly enlarged comparison interval containing their complete windows gives \(\sum_r k_r\le KS\), and each \(k_r\ge1\). Lemma~\ref{thm:8.1} therefore gives, for every fixed \(u>0\),

\[
\mathbb P\{Y-\mathbb EY\ge u\}
\le K e^{-\kappa_u N^{7/50}},
\tag{8.16}\label{eq:9.16}
\]

because the sum of squared bounded-difference constants is at most

\[
K S(S^{1/10}h_N)^2
=K N^{4/5+4/25-11/10}=K N^{-7/50}.
\]

No conditioning on quota success is imposed here. Such conditioning could alter both the independence of the \(V_r\) and the mean identity \eqref{eq:9.13}.

\subsection{Permutation fluctuations conditional on valid quotas}\label{sec:9.6}

Fix a valid quota outcome, and perform independent permutations in the bins. Define

\[
Z=\text{the inherited row jobs' actual output principal load}
+\sum_{\text{main positions }i}w_i.
\tag{8.17}\label{eq:9.17}
\]

Then \(\mathbb E[Z\mid V]=Y(V)\). The value associated with any label at any position lies in \([0,h_N]\), so exchanging two labels changes \(Z\) by at most \(2h_N\).

When a uniform permutation is revealed one position at a time, two possible choices of the next label can be coupled by exchanging that label with a label at an unrevealed position. The corresponding conditional expectations of the completed observable differ by at most \(2h_N\). Thus each step of the permutation-exposure martingale has conditional oscillation at most \(2h_N\). There are \(O(S)\) relevant positions, so Lemma~\ref{thm:8.1} gives

\[
\mathbb P\{Z-Y(V)\ge u\mid V\}
\le K e^{-\kappa_u N^{3/10}},
\tag{8.18}\label{eq:9.18}
\]

where \(S h_N^2=N^{4/5-11/10}=N^{-3/10}\). Dependence among positions in the same bin is incorporated into the permutation martingale; those positions are not assumed independent.

\subsection{Actual main queries and their comparison weight}\label{sec:9.7}

\textbf{Why another estimate is needed.} The frozen source observable already pays for every main position through \(w_i\). The height actually chosen by a main row can produce more load than that weight. We estimate the excess for the real sequence of height requests, including paths on which the construction later fails.

Let \(\mathcal F_j\) contain exactly the information revealed before program step \(j\). At a new main-height request whose reference lies in \([S/2,2S]\), let \(G_j\) be the selected members that lie in \(\Omega_N\), and let \(X_j\) be their load at target \(N\) after that height is read. The complete row membership and reference were selected before the request. At all other steps, including requests with reference outside \([S/2,2S]\), and after termination, set \(G_j=\varnothing\) and \(X_j=0\). No positive contribution is discarded by the reference restriction: a contributing member lies in \([S-O(S^{3/4}),B^{4/5}S+O(S^{3/4})]\), and the main-row span \(i-z=O(z^q)\) then forces \(z\in[S/2,2S]\) eventually. We count a height key only at its first request. No future permutation is inserted into \(\mathcal F_j\).

Put
\[
 \mu_j=\mathbb E[X_j\mid\mathcal F_j],\quad
 Q_t=\sum_{j<t}X_j,\quad
 U_t=\bigcup_{j<t}G_j,\quad W_t=\sum_{i\in U_t}w_i.
\]
The sets \(G_j\) are pairwise disjoint: the actual row-selection process never reuses an assigned main index. Their common height within a group is left intact. The inverse-function count in Section~\ref{sec:9.3}, applied to any one height, bounds its eligible positions by \(K(S^2N^{-p}+1)=K N^{7/20}\). Together with \eqref{eq:9.5}, this gives a group load of order \(N^{-1/5}\). The bounds used in the formal proof are
\[
\begin{gathered}
 0\le X_j\le128N^{-1/5},\qquad
 \mu_j\le\sum_{i\in G_j}w_i,\\
 \sum_{j<t}\mu_j\le W_t\le\sum_{i\in\Omega_N}w_i\le0.81.
\end{gathered}
\tag{8.19}\label{eq:9.19}
\]
Here is one explicit way to obtain 128. The row reference lies in \([S/2,2S]\); its sampled height has reciprocal scale \(s_H=1/H\le4S\). The eligible integer count is at most \(2D s_H^2N^{-p}\) eventually, including the endpoint rounding error, and the load at any eligible member is at most \(40N^{-11/20}\). Their product is at most
\[
 (2D)(16S^2)N^{-p}\,40N^{-11/20}
 =128N^{-1/5}.
\]
Thus this constant is uniform even for an empty or adaptively chosen group; its value need not be optimized. At the initial boundary all queried members are still at least \(m\). A row crossing the boundary of \(\Omega_N\) keeps its single height mark, and only its contribution is restricted to \(G_j\).

Apply the centered form of Lemma~\ref{thm:8.2} with \(b=128N^{-1/5}\), \(M=0.81\), and the random comparison weight \(W_t\). For every fixed \(t\) and \(u>0\),
\[
 \mathbb P\{Q_t-W_t\ge u\}
 \le\exp\left[-\frac{u^2}{256(0.81+u/3)}N^{1/5}\right].
\tag{8.20}\label{eq:9.20}
\]
The bound is uniform in the finite physical horizon and in \(t\). It is an unconditional estimate on actual runs. A new height remains fresh even if unrelated earlier failures are possible; after an actual stop its increment is simply zero. In particular, this argument requires neither a completed list of future main positions nor a virtual continuation of the geometry.

\subsection{The source-only bound and the remaining margin}\label{sec:9.8}

The source experiment consists only of job quota marks and bin permutations. Its bad event is contained in the union of three events: a working bin reaches occupancy \(0.90\ell_b\); the quota-averaged observable \(Y\) reaches \(0.82\); or valid quotas have \(Z_N-Y\ge0.01\).

For the first event, the input bound is \(0.86\), including both kinds of inherited jobs. Eventually \(L_r/L'_r\le100/99\), so a bin has expected occupancy at most
\[
 \frac{100}{99}\,0.86\ell_b=\frac{86}{99}\ell_b<0.88\ell_b.
\]
This leaves a fixed gap to \(0.90\ell_b\). Lemma~\ref{thm:4.2} bounds the probability for one source bin by an exponential with scale exponent \(S^{11/20}\). There are at most \(3S\) working bins, a deliberately coarse bound from their nonempty disjoint positions. Absorbing this polynomial factor gives
\[
 \mathbb P\{\exists\text{ working bin with occupancy }\ge0.90\ell_b\}
 \le e^{-c_sN^{11/25}}
\tag{8.21}\label{eq:9.21}
\]
for some \(c_s>0\) after increasing the threshold. Equality at the threshold is conservatively included in the event, although it still leaves enough main positions for the deterministic geometry.

For the second event, \eqref{eq:9.14} and the quota bound with \(u=0.01\) give an exponential with exponent \(N^{7/50}\). For the third, first fix the quota marks and apply the full permutation bound \eqref{eq:9.18}, then integrate over those marks. Invalid quotas already lie in the bin event. When none of the three events occurs,
\[
 Y<0.82,\qquad Z_N<Y+0.01<0.83.
\]
Taking a smaller common positive decay rate proves \eqref{eq:9.1}, with coefficient three.

\begin{center}
\begin{tabular}{@{}ll@{}}
\toprule
Random component & Exponent in target \(N\) \\
\midrule
Source-bin occupancy & \(11/25\) \\
Aggregate influence of job quota marks & \(7/50\) \\
Permutations with quotas fixed & \(3/10\) \\
Actual main-query compensation & \(1/5\) \\
\bottomrule
\end{tabular}
\end{center}

The first three rows bound the frozen source experiment. The fourth is the separate actual-process estimate \eqref{eq:9.20}. Section~\ref{sec:12.2} combines them using the exact margin \(0.85-0.83=0.02\); it does not assert independence between these two events.

\section{Endpoint-load estimates}\label{sec:10}

The two endpoint children share one splitting mark. Their total contribution is treated as a single nonnegative increment. The small total cut-height budget then controls the sum of its predictable means.

\begin{lemma}[Contribution of one fresh cut]
\label{thm:10.1}
Fix a parent-row height \(H\) and an endpoint width \(e\) satisfying

\[
cH^p\le e\le(1+o(1))H.
\]

Perform a fresh random split as in \eqref{eq:3.1}. Let \(X(N)\) be the sum of the fractional loads of the two children at \(N\). Uniformly for sufficiently large \(N\),

\[
\mathbb E[X(N)\mid H,e,\text{past}]\le21H,\qquad
0\le X(N)\le K N^{-11/20}.
\tag{9.1}\label{eq:10.1}
\]

\end{lemma}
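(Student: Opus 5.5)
The plan is to compute the conditional mean directly from the uniform law of \(\theta\), child by child, with \(H\), \(e\) and the past held fixed. By \eqref{eq:3.1} and Lemma~\ref{thm:3.1}, the children are endpoint jobs with longer side \(H\) and heights \(h_1=\theta e\) and \(h_2=(1-\theta)e\). Their load at \(N\) is
\[
X(N)=\sum_{j=1}^2\frac{k(H,h_j)}{L(h_j)}\mathbf 1_{\{N\in I(h_j)\}}.
\]
The split mark \(\theta\) is sampled fresh after \(H,e\) are fixed. So conditioning on the past only fixes \(H,e\), and we may write \(X(N)=\sum_j \phi_N(h_j)\) with a deterministic function \(\phi_N\) and then integrate over \(\theta\). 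Linearity of expectation handles the shared mark; no independence between siblings is needed.

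\textbf{Step 1 (pointwise size of a nonzero term).} If \(N\in I(h_j)\), then \(f_A(N)\le h_j\le f_B(N)\), so \(h_j=(1+o(1))/N\) uniformly. Lemma~\ref{thm:2.1} then gives \(\min I(h_j)=(1+o(1))N\) and \(L(h_j)=(D+o(1))N^{2-p}\). The demand satisfies \(k(H,h_j)\le H/a_{\max}(h_j)=H\min I(h_j)\), by the same inequality used for \eqref{eq:9.3}. Hence
\[
\phi_N(h_j)\le(1+o(1))\frac{HN^{p-1}}{D}.
\]
For the almost-sure bound, use instead \(k\le Ks^{1/5}\) from \eqref{eq:2.7}. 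This is legitimate because Lemma~\ref{thm:3.1} shows the child satisfies \eqref{eq:2.5}. Dividing by \(L\asymp N^{3/4}\) gives \(0\le X(N)\le KN^{1/5-3/4}=KN^{-11/20}\).

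\textbf{Step 2 (probability of hitting the window).} The eligible set for \(h_j\) is the interval \([f_A(N),f_B(N)]\), whose length is \((D+o(1))N^{-p}\). Since \(\theta\sim\operatorname{Unif}[1/3,2/5]\) has density \(15\), each of \(h_1=\theta e\) and \(h_2=(1-\theta)e\) has density at most \(15/e\). Therefore
\[
\mathbb P\{N\in I(h_j)\}\le 15(D+o(1))N^{-p}/e .
\]
Multiplying by Step 1 and summing over the two children gives
\[
\mathbb E X(N)\le(30+o(1))\frac{H}{eN}.
\]

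\textbf{Step 3 (removing the \(1/e\)).} This is the only real obstacle, because \(e\) may be as small as \(cH^p\), far below \(H\). The resolution is that a term is nonzero only if \(h_j\approx1/N\). Both children satisfy \(h_j\le\tfrac23e\), so a nonzero contribution forces \(e\ge\tfrac32h_j\ge(\tfrac32-o(1))/N\). If \(e\) is smaller than this, then \(X(N)=0\) almost surely and there is nothing to prove. Otherwise \(1/(eN)\le\tfrac23+o(1)\), and the bound becomes
\[
\mathbb E X(N)\le(20+o(1))H\le21H
\]
for all sufficiently large \(N\).

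It remains to check uniformity. The \(o(1)\) terms come only from Lemma~\ref{thm:2.1} at height \(h_j\asymp1/N\), together with \(H\) lying below the common small-height threshold. Both are uniform in \(H\), \(e\) and the past, so the threshold for \(N\) can be chosen independently of these.
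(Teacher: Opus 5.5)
Your proposal is correct and follows the paper's proof essentially step for step. You use the pointwise bound $(1+o(1))HN^{p-1}/D$ from $k\le H\min I(h)$. You multiply the marginal density $15/e$ by the window length $DN^{-p}$ to get $(30+o(1))H/(eN)$. You then observe that $h\le(2/3)e$ forces $eN\ge 3/2-o(1)$ whenever a child can be eligible.

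The only difference is cosmetic and lies in the almost-sure bound. You combine the demand bound \eqref{eq:2.7}, which is valid for the children by Lemma~\ref{thm:3.1}, with $L\asymp N^{3/4}$. The paper instead uses $e\le 3f_B(N)$ and the reserve $e\ge cH^p$ to get $H=O(N^{-1/p})$, and substitutes this into the pointwise bound. Both arguments rest on the same minimum-reserve estimate.
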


\begin{proof}
A child has longer side \(H\) and shorter side \(h\). If \(N\in I(h)\), its demand is at most \(H\min I(h)\), so its load is at most

\[
(1+o(1))\frac{H N^{p-1}}D.
\tag{9.2}\label{eq:10.2}
\]

The marginal density of each child's shorter side is \(15/e\), although the two children are not independent. The eligible height interval has length \(DN^{-p}\). Summing the two mean bounds therefore gives at most

\[
(30+o(1))\frac H{eN}.
\tag{9.3}\label{eq:10.3}
\]

It suffices to consider the case in which at least one child can be eligible. Since \(h\le(2/3)e\), and eligibility implies \(h\ge1/(N+1)\),

\[
eN\ge\frac32\frac N{N+1}=\frac32-o(1).
\]

Substitution into \eqref{eq:10.3} bounds the expectation by \((20+o(1))H\le21H\). If the support does not intersect the eligible height interval, the expectation is zero and this lower bound is unnecessary.

For the maximum contribution, an eligible child satisfies \(h\ge e/3\), so \(e\le3f_B(N)=O(1/N)\). The minimum reserve implies \(H\le(e/c)^{1/p}=O(N^{-1/p})\). Substituting this into the two bounds \eqref{eq:10.2} gives the maximum \(K N^{p-1-1/p}=K N^{-11/20}\). These bounds include demand \(k=1\), partial intersection of the eligible interval with the density support, and simultaneous eligibility of both children.

\end{proof}

\begin{corollary}[Adaptive endpoint sequences]
\label{thm:10.2}
Consider any sequence of endpoint cuts performed in their actual order of creation. Suppose that, predictably before each subsequent sample, the parent-height budget is kept within

\[
\sum_r H_r\le\Pi.
\]

Then

\[
\mathbb P\{F_E(N)\ge0.01\}
\le K e^{-\kappa N^{11/20}}.
\tag{9.4}\label{eq:10.4}
\]

\end{corollary}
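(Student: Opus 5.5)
We prove Corollary~\ref{thm:10.2} with a martingale (Freedman-type) inequality for nonnegative bounded increments whose predictable means are controlled by a predictable budget. This is the same mechanism used for the main queries in Section~\ref{sec:9.7}, i.e.\ the centered form of Lemma~\ref{thm:8.2}.

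\textbf{Setting up the process.} Enumerate the endpoint cuts in their actual order of creation, indexed by the program steps $j$. Let $\mathcal F_j$ be the information revealed before the $j$-th split mark $\theta_j$ is sampled. At the time of sampling, the parent height $H_j$ and endpoint width $e_j$ are already $\mathcal F_j$-measurable, and $\theta_j$ is fresh. Let $X_j$ be the joint load of the two children at $N$. By the permanent demand ledger, $F_E(N)$ at any program time is exactly $\sum_{j<t}X_j$. We set $X_j=0$ at steps that are not cuts, after termination, and at any cut that would violate the budget. This truncation is predictable, because the hypothesis requires the budget to be checked before each sample, so it discards nothing on the runs covered by the hypothesis.

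\textbf{Bounding the increments and their means.} Lemma~\ref{thm:10.1} gives two bounds for each $j$:
\[
0\le X_j\le b:=KN^{-11/20},\qquad \mu_j:=\mathbb E[X_j\mid\mathcal F_j]\le 21H_j.
\]
Summing the mean bounds and using the predictable budget,
\[
\sum_j\mu_j\le 21\Pi=0.0021,
\]
which is strictly below $\rho_E=0.01$ by \eqref{eq:1.6}. Hence the event $F_E(N)\ge0.01$ forces
\[
\sum_{j<t}(X_j-\mu_j)\ge 0.01-0.0021=:u>0,
\]
a fixed positive deviation.

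\textbf{Applying the deviation inequality.} We apply Lemma~\ref{thm:8.2} (Freedman/Bernstein for supermartingale differences bounded above by $b$, with predictable variance at most $b\sum_j\mu_j\le bM$ and $M=0.0021$). This gives
\[
\mathbb P\Bigl\{\sum_{j<t}(X_j-\mu_j)\ge u\Bigr\}\le \exp\left[-\frac{u^2}{2b(M+u/3)}\right]=\exp\bigl(-\kappa N^{11/20}\bigr).
\]
The bound is uniform in $t$ and in the horizon. If the statement concerns the final value of $F_E(N)$ (or its value at any time), monotonicity of the permanent ledger in $t$ lets us take $t$ to be the horizon. Alternatively we use the maximal form of Freedman's inequality, which already controls $\sup_t$.

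\textbf{Main obstacle.} The delicate step is justifying the conditional mean bound $21H_j$ adaptively. The parent $(H_j,e_j)$ and the decision to cut at all depend on earlier random marks, and the two children share the single mark $\theta_j$. We handle this by treating the sibling pair as one increment (so no independence between siblings is needed). We also verify that $\theta_j$ is sampled after $\mathcal F_j$ is fixed, exactly as stipulated in \eqref{eq:3.1}, so that Lemma~\ref{thm:10.1} applies conditionally. We also have to check that the constraint $cH^p\le e\le(1+o(1))H$ needed by Lemma~\ref{thm:10.1} holds predictably at every cut; this follows from Lemma~\ref{thm:2.2} and \eqref{eq:5.5} before the first failure.
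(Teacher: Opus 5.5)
Your proposal is correct and matches the paper's proof: you merge each sibling pair into one nonnegative increment, take the bound $KN^{-11/20}$ and the predictable mean $21H_j$ from Lemma~\ref{thm:10.1}, sum the means to at most $21\Pi=0.0021$, and apply Lemma~\ref{thm:8.2} with $u=0.01-21\Pi$, padding stopped paths with zeros. Your use of the permanent ledger's monotonicity in $t$ is also how Section~\ref{sec:12.4} passes from the uniform fixed-time bound to the event $E_N$; strictly this goes through continuity from below rather than by taking $t$ to be the horizon, since the number of program steps is finite but not uniformly bounded.
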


\begin{proof}
Combine the two children of each cut into a single nonnegative increment. By Lemma~\ref{thm:10.1}, the total predictable mean is at most \(21\Pi=0.0021\), and each increment is at most \(K N^{-11/20}\). Apply Lemma~\ref{thm:8.2} with \(u=0.01-21\Pi>0\). If the actual process stops early, pad the remaining increments with zeros. The reservation ledger retains children whose rows have already started, so this sum of increments controls the load of all relevant endpoint jobs, rather than only their current waiting area.

\end{proof}

\section{Revelation order in the recursive construction}\label{sec:11}

\textbf{Beginning Step 5.} The preceding estimates are useful only if their random marks are still fresh at the moment they are used. We now specify that moment. The physical cursor and the discrete program-step count are different clocks, and both will be needed. Deterministic transitions are included in that step count even when no random coordinate is read.

\subsection{Random coordinates and physical operations}\label{sec:11.1}

A countable family of independent uniform random coordinates may be assigned in advance to all objects that can be created, distinguishing objects by the name of their parent and their creation index. Assigning an unused coordinate does not reveal its value. The construction depends only on coordinates that have been revealed and follows the rules below.

\begin{enumerate}
\item The geometry, window, and demand of each inherited row job are registered when its parent row is activated. Its own quota variable \(V_r\) remains unrevealed at this time.
\item For a bin starting at \(s_b\), its allocation data are revealed at physical time \(\lfloor\gamma s_b\rfloor\), where \(\gamma=9/10\). A job's shared variable \(V_r\) is first revealed at the revelation time of its leftmost complete bin; all subsequent quotas for that job use the same value. Given the quotas, the uniform permutations of the bins are revealed independently.
\item If the occupancy of a bin exceeds \(0.90\), a bad certificate is recorded, and no unavailable label permutation is required. This advance notification neither changes earlier geometric decisions nor immediately stops physical processing.
\item All allocations of a job are verified no later than \(\min I_r\). The verification checks the input-load certificates, occupancy certificates, and resulting allocations of the complete bins used by the job. If a required certificate fails, the process stops at this time. If the allocation is valid, its geometry is nevertheless activated only at its first \textbf{actually assigned index}. A job without a valid allocation cannot be deferred until after its window has expired.
\item At every physical integer \(N\), the inequalities \(F_P(N)\le0.85\) and \(F_E(N)\le0.01\) are checked, whether or not the realized label at that index happens to fit. When a main row needs to read the next bin, that bin's certificates are also checked.
\item A main row selects its complete set of members and its reference before revealing its own height variable. Every row determines its height and terminal width before revealing its endpoint-splitting variable. Before the next split, the accumulated sum of cut-parent heights plus the new parent height is checked against \(\Pi\); if it would exceed \(\Pi\), the process stops before sampling. The increment \(2H+e\) in the endpoint-semiperimeter account is likewise determined before sampling and may be checked in the same manner.
\end{enumerate}

A certificate is simply a specified inequality to be checked. There is no rejection sampling, resampling, or selection of current marks according to future success or failure. Every possible reason for stopping is included in the first-failure estimate of Section \ref{sec:12}.

\begin{lemma}[Timely availability of data]
\label{thm:11.1}
For all sufficiently large initial indices \(m\), the preceding rules can be executed in the stated order at every step. Any inherited row job whose window contains indices near scale \(s\) is created before physical time \((3/4)s\). The data of all complete bins intersecting that window have been revealed before \(\min I_r\).

\end{lemma}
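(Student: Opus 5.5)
We will prove Lemma~\ref{thm:11.1} by comparing three clocks for a job \(r\) with height \(H_r=1/s\): the physical time at which its parent row is activated, the revelation time \(\lfloor\gamma s_b\rfloor\) of each bin, and \(\min I_r=s+O(s^{2-p})\) from Lemma~\ref{thm:2.1}. The argument is deterministic and uses only scale relations. We organize it as an induction over the order of program steps, with the same first-failure structure as Lemma~\ref{thm:7.2}. We assume that all geometric lemmas are already in force below a common height threshold guaranteed by taking \(m\) large.

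\textbf{Step 1: creation time of an inherited job of scale \(s\).} There are two kinds of job.
\begin{itemize}
\item \emph{Principal gaps.} A principal gap of height \(g\) below tile \(i\) satisfies \(Ai^{-p}\le g\le Bi^{-p}\). Hence \(s=1/g\ge B^{-1}i^{p}\), that is, \(i\le (Bs)^{1/p}=O(s^{4/5})\). Its parent row is activated at its first assigned index, which is at most \(i\). The row span is \(O(i^{3/4})\) for inherited rows and \(O(i^{q})\) for main rows, so the creation time is at most \(i=O(s^{4/5})\).
\item \emph{Endpoint children.} An endpoint child of shorter side \(h\) is created when its parent row of height \(H\) starts. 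The reserve \(e\ge cH^p\) and \(h\ge e/3\) give \(H\le(3h/c)^{1/p}\), so the parent scale is at most \(K s^{4/5}\). By the same span argument, the parent starts at physical time \(O(s^{4/5})\).
\end{itemize}
In both cases the creation time is at most \(Ks^{4/5}\), which is below \((3/4)s\) once \(s\ge m\) is large. If some ancestor is itself a not-yet-created job, we apply the same estimate inductively along the chain. Its scale is smaller, so its creation time is earlier still.

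\textbf{Step 2: bins are revealed before \(\min I_r\).} Every complete bin meeting \(I_r\) has start \(s_b\le\max I_r=s+O(s^{3/4})\). Its revelation time is therefore
\[
\lfloor\gamma s_b\rfloor\le\tfrac{9}{10}\bigl(s+Ks^{3/4}\bigr)<s-Ks^{3/4}\le\min I_r
\]
for large \(s\). This inequality needs \(1-\gamma=1/10\) to dominate the relative window width \(O(s^{-1/4})\), which holds for \(s\ge m\) large. The job's mark \(V_r\) is revealed at the time of its leftmost complete bin, which by Step 1 is later than the job's creation. The job is registered by time \((3/4)s\), and \((3/4)s<\gamma s_b\) because \(s_b\ge s-O(s^{3/4})\). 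So \(V_r\) is never needed before the job exists, and it is read at a well-defined step.

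\textbf{Step 3: the remaining rules can be executed in order.}
\begin{itemize}
\item Rule 4, verification by \(\min I_r\), follows from Step 2.
\item For rule 5, a main row started at \(S\) needs only the current and next bins (Lemma~\ref{thm:5.1}). These start at most at \(S+O(S^q)\), so they were revealed at time at most \(\gamma(S+KS^q)<S\).
\item For rule 6, the membership of a row, its height, and its endpoint width are each determined by already revealed data before the corresponding fresh mark is read.
\end{itemize}

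\textbf{Main obstacle.} The delicate point is bookkeeping rather than calculation. A bin is revealed once, but the jobs supplying it may be created at different times. We must check that every job with a complete bin in \(\mathcal B_b\) is registered before \(\lfloor\gamma s_b\rfloor\). Otherwise its quota could not enter that bin's permutation.

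Step 1 gives this. Such a job has scale \(s\ge s_b-O(s_b^{3/4})\), so it is created by \(Ks_b^{4/5}<\gamma s_b\). No job that is created later can have a complete bin there.

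Before the first failure, all required certificates hold. Failures cause stopping, not reordering, so the order of the rules is well defined on every history.
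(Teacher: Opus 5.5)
Your overall plan matches the paper's: compare each job's creation time with the bin revelation time $\lfloor\gamma s_b\rfloor$ and with $\min I_r$. The principal-gap case, Step 2 and Step 3 are fine. The gap is in the endpoint-children case of Step 1.

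From $h\ge e/3\ge cH^p/3$ you get $H\le(3h/c)^{1/p}$. That is an \emph{upper} bound on the parent height, hence only a \emph{lower} bound $1/H\ge(c/3)^{1/p}s^{4/5}$ on the parent scale. It says nothing about how late the parent row can start. Endpoint children are in fact not created at time $O(s^{4/5})$. Take $e\approx H$ and $\theta=1/3$: the larger child has $h\approx 2H/3$, so $s\approx 3/(2H)$, while the parent row starts at about $1/H\approx\frac23 s$, a fixed proportion of $s$. So your claim that ``in both cases the creation time is at most $Ks^{4/5}$'' is false. The bound $Ks_b^{4/5}<\gamma s_b$ in your final paragraph fails for the same reason.

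The missing ingredient is the opposite inequality $h\le\frac23(1+\eta)H$, which follows from $1-\theta\le 2/3$ and $e\le(1+\eta)H$ (Lemma~\ref{thm:3.1}). Combine it with the fact that the parent is activated at its first assigned index, which is $(1+o(1))/H$. For an inherited parent that index lies in $I(H)$; for a main row $1/H=(1+o(1))z$ by \eqref{eq:5.4}. This gives a creation time of at most $\bigl(\frac23(1+\eta)+o(1)\bigr)s<\frac34 s$, which is linear in $s$.

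Completeness of each bin's input pool then follows from $\frac34\bigl(s_b+O(s_b^{3/4})\bigr)<\gamma s_b$. This uses the genuine margin $3/4<9/10$, which is why the lemma is stated with $3/4$ and why $\gamma$ is chosen above it; an $O(s^{4/5})$ bound would have made both constants irrelevant.

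Your induction over ancestors is also unnecessary. A job's creation time is simply its parent row's activation index, regardless of when that parent was itself created. With these two corrections the rest of your argument goes through as in the paper.
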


\begin{proof}
If the job is a principal gap, its source index \(i\) satisfies \(g_i\asymp i^{-p}\), whereas its future window has scale \(s\asymp g_i^{-1}\). Thus \(i=O(s^{1/p})\). The parent row creating this gap is activated no later than physical time \(i\), so the creation time is \(O(s^{4/5})<3s/4\).

If the job is an endpoint child, let \(H\) be the parent height. The child's short side satisfies \(h\le(2/3)(1+\eta)H\). The parent's first index is \((1+o(1))/H\), and any admissible index of the child is \((1+o(1))/h\). Consequently, the ratio of the creation time to any admissible index of the child is at most

\[
\frac23(1+\eta)+o(1)<\frac34.
\tag{10.1}\label{eq:11.1}
\]

The same asymptotic relation between height and the reciprocal of the first index holds for main rows.

For a bin near \(s\), the entire windows of all intersecting jobs lie in \(s+O(s^{3/4})\). Their creation times are at most \((3/4)(s+O(s^{3/4}))\), strictly earlier than \(\gamma s\). Thus the input pool is complete when the bin data are revealed.

Conversely, the start of the last complete bin of a job is at most \(s+Ks^{3/4}\), and its earliest admissible index is at least \(s-Ks^{3/4}\). Since \(\gamma<1\),

\[
\gamma(s+Ks^{3/4})<s-Ks^{3/4}
\]

holds eventually. Hence the entire allocation is available before the verification deadline.

A job's variable \(V_r\) may be revealed earlier than some later bin that it affects, but its earliest revelation time is still \(\gamma s-O(s^{3/4})\). This is strictly later than the formation of the relevant job geometries. Accordingly, the variable is not disclosed while the input pool at that scale is being formed.

\end{proof}

At initialization, a bin whose nominal revelation time is less than \(m\) is processed at the initial time. Such bins have starts below \(m/\gamma+O(1)\). Every new endpoint window has indices at least \((4/3-o(1))m\), and every new principal-gap window has indices at least a fixed positive constant times \(m^p\). Thus these initial bins have empty inherited pools and consist entirely of main positions. They cannot conflict with subsequently created jobs. More generally, the entire window of a new job lies strictly after its creation time, so no job requires an index below \(m\).

\subsection{Source bands and the actual freezing time}\label{sec:11.2}

Fix \(N\) and put \(S=N^{4/5}\). Select exactly those bins whose real half-open spans meet the narrow closed interval
\[
 [S-4S^{3/4},\ B^{4/5}S+4S^{3/4}].
\]
Equivalently their starts satisfy
\[
 s_b\le B^{4/5}S+4S^{3/4},\qquad S-4S^{3/4}<s_{b+1}.
\]
Their integer positions form \(\Omega_N\). Specifying real spans avoids a one-bin ambiguity when a noninteger interval endpoint lies between two integer positions.
Their positions lie in the analogous envelope with coefficient six. All complete windows of their input jobs fit in the comparison envelope with coefficient eight. Only bins of the construction, whose positions are at least \(m\), are used. For sufficiently large \(S\), the relevant narrow envelopes lie inside the fixed-proportion bands
\[
 J_0=[0.995S,1.085S],\quad
 J_1=[0.99S,1.09S],\quad J_2=[0.98S,1.10S].
\]
An additional window intersecting a comparison band has span \(O(S^{3/4})=o(S)\) and fits in the next wider band. These wider bands supply causal separation only.

The physical freezing threshold is
\[
 T_{m,N}=\max\{m,\lfloor0.85S\rfloor\}.
\tag{10.2}\label{eq:11.2}
\]
Let \(\tau\) be the \emph{first} program step whose actual cursor reaches this threshold, if such a step exists. On paths that never reach it there is no freezing case. If the threshold is \(m\), take the initial history \(\tau=0\), before any initialization queries.

By Lemma~\ref{thm:11.1}, every job supplying the relevant working or comparison positions is created before \(0.825S\). The earliest request of any of its shared quota marks is at least \(0.882S-o(S)\). Consequently,
\[
 0.825S<0.85S<0.882S-o(S).
\tag{10.3}\label{eq:11.3}
\]
The same separation puts all working-bin permutation requests after the freeze. It controls the earliest bin in each job's \emph{whole window of action}, including bins outside the working region. Checking only that one working bin is unread would not establish that its shared job mark is unread.

When \(m<\lfloor0.85S\rfloor\), every query before \(\tau\) has cursor below the nominal freezing point, so \eqref{eq:11.3} applies directly. When \(T_{m,N}=m\), the history is empty and no marks have been read. In this initial case, any future supplier would have to be created before \(0.825S<m\), which is impossible for large \(S\); the input pool is empty. This includes the bounded rounding discrepancy in the floor by using the strict proportional margin. Actual positions below \(m\) remain absent.

The pool at \(\tau\) is complete: a later-created job cannot supply these bins, since that would violate the same early-creation estimate. Registered widths, heights, demands, and job identifiers are retained permanently. Thus the pool's geometric data are fixed at the freeze and unchanged later, while its quota and working-permutation coordinates remain fresh. A positive actual principal load at \(N\) implies that this freeze has been reached, with one additional timing check. A contributing source member \(i\) satisfies \(i\ge S-2S^{3/4}\), but it may be a future reserved member rather than the parent's activation index. If its row is committed at cursor \(c\), the actual row-span bound gives \(i\le c+3c^{3/4}\). If \(c\le0.85S\), these inequalities would give
\[
 S-2S^{3/4}\le0.85S+3(0.85S)^{3/4},
\]
which is false for sufficiently large \(S\). Thus the cursor has already passed the nominal freeze, unless the freeze was the initial history. This argument uses the row's activation span, not an assumption that a future member has already been physically processed.

\subsection{Applying a fixed-pool estimate at a random history}\label{sec:11.3}

\textbf{The measurability issue.} The pool contains random real widths and heights, and \(\tau\) is random. We therefore do not condition on an individual real-valued history as if it necessarily had positive probability.

Jobs have identifiers from a countable set. Partition the paths that reach the freeze by pairs
\[
 c=(t,J),\qquad
 A_c=\{\tau=t,\ \text{the source identifier set at }t\text{ is }J\},
\]
where \(t\in\mathbb N\) and \(J\) is a finite set of identifiers. There are countably many cases, and the events \(A_c\) are pairwise disjoint. Both the first-reaching condition and the identifier set are determined by the history at \(t\). The continuous widths and heights are not discretized; they remain measurable functions of that history.

Write \(H_t\) for the revealed history and \(R_c\) for the entire finite block of source coordinates required in case \(c\). To justify the joint law, fix the finite set of unread keys for this case. Changing all coordinates at these keys leaves the preceding history and its unreadness unchanged: an induction along the recorded program steps shows that the decisions and replies so far use only coordinates outside this set. Conversely, replacing this block by fixed values does not change whether such a history is realized without requesting it. Therefore any measurable event specifying that history together with unreadness depends only on the outside coordinates. In the original product space, the whole unread block is independent of the outside coordinates. Applying independence to a measurable history event and a measurable block event proves the product identity on rectangles; uniqueness of finite product measures gives the full joint law. This establishes independence of the entire block, rather than only one-coordinate marginal distributions. Applying this property to \(A_c\), using the unreadness just proved, gives
\[
 (\mathbb P|_{A_c})\circ(H_t,R_c)^{-1}
 =\bigl((\mathbb P|_{A_c})\circ H_t^{-1}\bigr)\otimes\nu_c,
\]
where \(\nu_c\) is the full joint law of the remaining quota and permutation coordinates. The equality concerns the entire block, including dependencies later introduced by using one job mark across several bins.

Let \(G(h)\) mean that the actual frozen registry has principal load at most \(0.85\) and endpoint load at most \(0.01\) at every comparison index. In the joint history/source space define the bad section by
\[
 B_c=\{(h,r):G(h)\ \text{and}\
       [\,\operatorname{Overflow}(h,r)\ \text{or}\ Z_N(h,r)\ge0.83\,]\}.
\]
Here overflow means occupancy at least \(0.90\ell_b\) in some working bin. All operations involved are measurable: finite registry selection, real arithmetic and floors, finite sums, and finite permutation decoding. For almost every \emph{actual} history, the construction invariants provide the geometric hypotheses of Section~\ref{sec:9}. If \(G(h)\) fails, this section is empty. If \(G(h)\) holds, the fixed-pool source bound applies uniformly. There is no assertion that an arbitrary, unreachable assignment of history data satisfies those invariants.

Fubini's theorem therefore yields
\[
 \mathbb P\bigl(A_c\cap\{(H_t,R_c)\in B_c\}\bigr)
 \le\mathbb P(A_c)\,3e^{-\kappa_sN^{7/50}}.
\]
Summing over the disjoint countable cases gives a single frozen bad event \(\mathcal S_N\) with
\[
 \mathbb P(\mathcal S_N)\le3e^{-\kappa_sN^{7/50}},
 \qquad \sum_c\mathbb P(A_c)\le1.
\tag{10.4}\label{eq:11.4}
\]
The history probabilities must be kept in this sum. Replacing each weighted bound by the same unweighted constant and summing would lose the estimate. There is no factor counting queries, possible pools, or real-valued geometric choices.

\textbf{Why the literal source experiment has the same law.} The actual program may number a bin's suppliers differently from the common source-pool list. Fix the quotas first. Delete entries with zero copies in this bin, then reorder the remaining job copies by their actual identifiers. This gives a finite bijection between the two labeled-copy lists, including the remaining main copies. Composing a uniform bin permutation with this bijection preserves its uniform law and identifies every physical label, not merely its one-position marginal. Taking products handles all bins. The correcting bijection may depend on the quotas, so this equality is proved in each quota section and then integrated over the quota marks. Empty lists and zero quotas have unique empty contributions and cause no exception.

Finally, the splitting and height uniforms satisfy their support restrictions simultaneously outside one null set, since there are countably many available coordinates. This permits all subsequent comparisons to hold on a common full-probability set across all actual histories and program times. No conditional distribution is altered by imposing eventual successful execution.

\section{Actual obstructions and finite-prefix existence}\label{sec:12}

This section completes Step 5. The purpose of a least-obstruction argument is to make the source input good using strictly smaller \emph{indices}. The order in which a cached certificate is eventually used is a different order and is not used to define the probability events.

\subsection{Events indexed by the detected integer}\label{sec:12.1}

Fix \(m\) and a finite terminal index \(L\). Retain the state after either success or failure, so it is defined at every later program step. Put
\[
 P_N=\{\exists t:F_P(N,t)>0.85\},\qquad
 E_N=\{\exists t:F_E(N,t)\ge0.01\}.
\tag{11.1}\label{eq:12.1}
\]
Let \(B_b\) be the event that, at some actual program step whose physical cursor is at least the revelation threshold of bin \(b\), its actual input check is good and its actual quota occupancy is at least \(0.90\ell_b\). A bin with bad input is charged to the detected load index that caused its input failure; it is not counted as a good-input occupancy event.

For \(N\ge m\), define the reduced principal event
\[
 D_N=\left(P_N\cap\bigcap_{m\le i<N}P_i^{\,c}\right)
       \setminus\left(\bigcup_{i\ge m}E_i\ \cup\ \bigcup_bB_b\right).
\tag{11.2}\label{eq:12.2}
\]
Thus \(N\) is the least integer at which principal overload ever occurs, after the endpoint and good-input bin events have been removed. This definition involves the whole path, but it will only be used for deterministic event inclusions. We do not condition the random marks on \(D_N\).

Every actual load or good-input bin obstruction lies in
\[
 \bigcup_{i\ge m}E_i\ \cup\ \bigcup_bB_b\ \cup\ \bigcup_{N\ge m}D_N.
\tag{11.3}\label{eq:12.3}
\]
Indeed, if neither of the first two unions occurs but some principal overload does, the nonempty set of its indices at least \(m\) has a least element. This elementary well-ordering step requires no independence.

\begin{lemma}[Good comparison input and good used caches]\label{thm:12.1}
For sufficiently large \(m\), on \(D_N\) every comparison position for target \(N\) has good input at every program step. Every already cached working bin has a good certificate. These claims impose no additional assumption on \(m\) relative to the left endpoint of the source region.
\end{lemma}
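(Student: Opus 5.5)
The plan is to unpack ``good input'' into its two component inequalities and obtain each from the definition \eqref{eq:12.2} of \(D_N\) and the monotonicity of the permanent ledger. At a comparison position \(i\), good input means \(F_P(i,t)\le0.85\) and \(F_E(i,t)\le0.01\) at every program step \(t\). On \(D_N\) no endpoint event \(E_i\) occurs for any \(i\ge m\). Hence \(F_E(i,t)<0.01\) for every \(i\) and every \(t\), and the endpoint half is immediate with no index restriction. For the principal half, \(D_N\) gives \(P_i^{\,c}\) for every \(m\le i<N\), so \(F_P(i,t)\le0.85\) for all \(t\) at every such index. It therefore suffices to show that every comparison position for target \(N\) is strictly smaller than \(N\), or lies below \(m\) and so does not exist.

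The key step is this index comparison. By Section~\ref{sec:11.2}, all comparison positions lie in the envelope with coefficient eight around \([S,B^{4/5}S]\), and for large \(S\) this envelope sits inside \(J_2=[0.98S,1.10S]\) with \(S=N^{4/5}\). Since \(1.10N^{4/5}<N\) once \(N\) exceeds an absolute constant, every comparison position is less than \(N\). Because \(N\ge m\), making \(m\) large is enough to make \(N\) large. Positions below \(m\) are never part of the construction, so they carry no load and need no check. Consequently nothing is assumed about where the left end of the source region lies relative to \(m\): the positions that exist lie in \([m,N)\), and the absent ones are vacuous. The monotonicity of \(F_P(i,\cdot)\) in the permanent ledger (Section~\ref{sec:dictionary}) ensures that ``good at the final step'' and ``good at every step'' agree. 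So either form of the event yields the bound at all program steps, including those after a stop, since the state is retained.

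For the cached working bins, the plan is to use the exclusion of \(\bigcup_bB_b\) in \(D_N\). A cached bin \(b\) has had its data revealed at an actual step whose cursor is at least its revelation threshold. If the bin's input check was good, its occupancy is below \(0.90\ell_b\), because otherwise \(B_b\) would hold. If its input check was bad, that failure is charged to some detected load index \(i\) with \(F_P(i,t)>0.85\) or \(F_E(i,t)\ge0.01\). The endpoint case is excluded outright. In the principal case, \(i\) lies in the input comparison set of a bin inside the working region. By the same envelope argument, that set lies in \(J_2\) and hence below \(N\), so \(i<N\), contradicting the minimality built into \(D_N\). Therefore every cached working bin has good input and good occupancy, which is exactly a good certificate.

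The main obstacle I expect is the bookkeeping in the middle step: the envelope containments and the timing of revelation must be stated uniformly in \(m\), so that the inclusion of comparison positions in \([m,N)\) holds without any case split on whether \(0.98S\) exceeds \(m\). I would handle this by proving only upper bounds on the positions, namely \(\le1.10S<N\), and never using a lower bound. I would also check that the charging rule for bad-input bins always produces an index inside the comparison set, and not merely somewhere in a window that could extend past \(N\). That last point relies on the complete input windows fitting in the coefficient-eight envelope, as stated in Section~\ref{sec:11.2}.
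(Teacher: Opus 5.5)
Your proposal is correct and follows the paper's proof essentially step for step. Comparison positions lie below \(N\) (the paper uses \(2N^{4/5}<N\), you use the band \(J_2\)), so the definition of \(D_N\) gives both load bounds at every step; a bad cached certificate is then either a bad-input certificate whose tested index contradicts those bounds, or an occupancy failure lying in the excluded \(B_b\). The only differences are cosmetic: the paper justifies zero load at indices \(i<m\) by the creation-time bound \(3i/4<m\) rather than by calling those positions absent, and it adds a remark that the cache is empty at initialization, which is consistent with your vacuous treatment of that case.
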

\begin{proof}
All comparison positions are at most \(2N^{4/5}<N\) for large \(N\). If such an index \(i\) is at least \(m\), the definition of \(D_N\) gives \(F_P(i,t)\le0.85\) and \(F_E(i,t)<0.01\) for every \(t\). If \(i<m\), its load is zero: any registered job was created at a cursor at least \(m\), whereas membership of \(i\) in its window would require creation no later than \(3i/4<m\).

A cached bin need not be good merely because it is cached. Inspect a possible bad cached certificate at its original creation time. A bad-input certificate contains a tested index in the comparison region; its principal or endpoint violation contradicts the preceding bounds, using permanence of the registered loads if the certificate is consulted later. If the input was good and the recorded failure was an occupancy failure, the path belongs to \(B_b\), also excluded by \(D_N\). These are all ways to store a bad certificate, so none can occur in a cached working bin on this event.

At initialization the cache is empty, even though the main frontier equals \(m\). The needed state invariant is therefore a disjunction: either the registry is empty and the frontier is \(m\), or the required frontier has already been cached. The first main selection establishes the second alternative. In particular, the initial closed interval \([m,m]\) is not silently assumed to have been read.
\end{proof}

\subsection{The actual load comparison, one step at a time}\label{sec:12.2}

On \(D_N\), positive principal load ensures an actual first freeze \(\tau\), by Section~\ref{sec:11.2}. The preceding lemma makes its input good. Let \(Z_N\) now denote the \emph{literal} frozen source cost using that history and the actual source coordinates, interpreted as in Section~\ref{sec:11.3}. Recall the actual quantities \(Q_t,W_t\) from Section~\ref{sec:9.7}.

\begin{lemma}[One-sided comparison with the frozen source]\label{thm:12.2}
Outside a single null set, for all sufficiently large \(m\), on \(D_N\) and at every program time \(t\ge\tau\),
\[
 F_P(N,t)\le Z_N+Q_t-W_t.
\tag{11.4}\label{eq:12.4}
\]
\end{lemma}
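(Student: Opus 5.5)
We prove \eqref{eq:12.4} by decomposing the actual principal load at \(N\) according to the kind of source position that generated each contributing principal gap. Every principal-gap job with \(N\) in its window comes from a unique source index \(i\). By \eqref{eq:9.4} and the span estimates of Section~\ref{sec:9.2}, such an index lies in \(K_N\subseteq\Omega_N\). It was assigned either to an inherited job from the frozen pool (an \emph{inherited source}) or to a main row (a \emph{main source}). Lemma~\ref{thm:11.1} and the freezing argument of Section~\ref{sec:11.2} show that no other inherited job can supply \(\Omega_N\): every supplier is created before \(0.825S\), hence lies in the frozen registry at \(\tau\). So we write
\[
F_P(N,t)=F^{\mathrm{inh}}(t)+F^{\mathrm{main}}(t),
\]
and bound each part separately.

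\textbf{Inherited part.} On \(D_N\), Lemma~\ref{thm:12.1} makes every comparison input and every cached working bin good. The quotas are therefore valid, and the actual permutations of the working bins coincide with the literal source permutations, up to the relabelling bijection of Section~\ref{sec:11.3}. The output of an inherited source position depends only on its registered job height and its index, both fixed at the freeze. So the total inherited output ever registered is bounded by the inherited term of \(Z_N\) in \eqref{eq:9.17}. Since the permanent reservation ledger only adds terms, we get \(F^{\mathrm{inh}}(t)\le Z_N-\sum_{i\text{ main in }\Omega_N}w_i\).

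\textbf{Main part.} A contributing main member \(i\) belongs to the group \(G_j\) of the height request of its row. That request has reference in \([S/2,2S]\), as shown in Section~\ref{sec:9.7}. Its load is registered exactly when the height is read, so \(F^{\mathrm{main}}(t)\le\sum_{j<t}X_j=Q_t\). Members whose height has not yet been read contribute nothing to \(F_P(N,t)\). Positions in \(U_t\) carry weight \(w_i\ge0\), and every \(i\in U_t\) is a main position of \(\Omega_N\). Hence
\[
\sum_{i\text{ main in }\Omega_N}w_i\;\ge\;W_t.
\]
Combining the two bounds gives
\[
F_P(N,t)\le Z_N-\sum_{\text{main}}w_i+Q_t\le Z_N+Q_t-W_t.
\]
Everything here is deterministic once we restrict to the full-probability set on which all height and splitting uniforms satisfy their support restrictions.

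\textbf{Main obstacle.} The hard step is identifying the actual main positions in \(\Omega_N\) with the main labels of the literal source experiment. This needs every working-bin permutation read by the actual process after \(\tau\) to use exactly the frozen pool's quotas and the same coordinates. We would establish this with the separation \eqref{eq:11.3}, which places all quota and permutation reads after the freeze, together with the completeness of the pool at \(\tau\). Two further points need care. First, main positions of \(\Omega_N\) that are never queried before \(t\) (or never at all, after a stop) must simply be dropped from the bound, which is allowed because \(w_i\ge0\). Second, for bins revealed after \(t\), the inherited term of \(Z_N\) still dominates because it is a nonnegative count of all eventual outputs.
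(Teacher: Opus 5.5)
Your argument is correct and is essentially the paper's proof. In both, main-row load is bounded by $Q_t$ because each row's height is requested exactly once before its load can enter the permanent ledger. Inherited output plus the used main weight $W_t$ is bounded by the frozen cost $Z_N$, because on $D_N$ every cached working bin is good and its actual labels are the frozen-pool labels. The paper only inserts two intermediate quantities: a cached-bin cost $C_t$ with $I_t+W_t\le C_t\le Z_N$, and a pending term $\Delta_t$ with $M_t+\Delta_t\le Q_t$.

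Two small corrections. First, a main row's load enters the ledger at the later commit step, not exactly when its height is read; this is harmless, since a failure in between only discards load. Second, the recoding bijection of Section~\ref{sec:11.3} serves only to identify the law of $Z_N$. The pathwise comparison uses the literal physical labels, which coincide exactly rather than ``up to relabelling.''
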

\begin{proof}
We separate the comparison into three charges, so that no unprocessed label is treated as a completed row.

\textbf{First charge: committed main rows.}
Let \(M_t\) be the principal load created by main rows already entered in the permanent ledger. Between a height query and the subsequent split/commit operation, let \(\Delta_t\ge0\) be the load of the pending main row; it is zero at other phases. An induction on the actual operations proves
\[
 M_t+\Delta_t\le Q_t.
\]
Initially all three quantities are zero. A new height request increases \(Q_t\) by the selected row's load and allows exactly that load into \(\Delta_t\). Committing the row moves its load from \(\Delta_t\) to \(M_t\), without requesting its height again. Unrelated operations leave the charge unchanged; an intervening failure can discard pending load, which preserves the inequality. No row is charged twice. The source-support result permits us to restrict all these loads to \(\Omega_N\), since an actual contributing member is at least \(m\) and lies in that region.

\textbf{Second charge: inherited rows and used main weight.}
Let \(I_t\) be the principal load created by already committed inherited rows. Their contributing members have actual inherited labels in cached working bins, and the parent job, with the same height, belongs to the frozen source pool. The main indices in \(U_t\) have actual main labels, are pairwise distinct, and are disjoint from these inherited positions. Sum the source cost only over positions in already cached working bins, assigning zero to an uncached position; call this sum \(C_t\). The two disjoint classes of charged positions give
\[
 I_t+W_t\le C_t.
\]
This is a pointwise inequality: any unused position contributes a nonnegative extra term. A preliminary default ``main'' label at an unread bin is not evidence that its final label will be main. Such a position is not charged.

\textbf{Third charge: cached cost versus the whole frozen experiment.}
By Lemma~\ref{thm:12.1}, every cached working bin on \(D_N\) has a good certificate. Its true labels agree with those obtained from the frozen records and the actual shared marks. The label-recoding argument in Section~\ref{sec:11.3} is used only to identify the probability law of this same literal experiment. In comparing paths, we use the literal physical labels. Thus each cached-position cost agrees with its frozen cost, and nonnegativity of all other position costs gives
\[
 C_t\le Z_N.
\]
The permanent principal ledger splits exactly as \(F_P(N,t)=M_t+I_t\): every registered principal gap has a unique source tile and a parent row of one of these two types. Adding the first two charges and dropping the nonnegative pending term yields
\[
 F_P(N,t)+W_t\le Q_t+C_t\le Q_t+Z_N,
\]
which is \eqref{eq:12.4}. This proves an inequality for the actual process, including stopped paths; no virtual rows or future geometric completion are introduced.
\end{proof}

\textbf{What the comparison achieves.} The source experiment pays once for every inherited contribution and every unit of used main weight. What is left is precisely the actual main-query excess \(Q_t-W_t\), for which the unconditional estimate \eqref{eq:9.20} has already been proved.

\subsection{Persistent detection and the principal tail}\label{sec:12.3}

\begin{lemma}[Persistent detection from fixed-time bounds]\label{thm:12.3}
Let \((A_t)_{t\in\mathbb N}\) be measurable events with \(\mathbb P(A_t)\le b\) for all \(t\). Then
\[
 \mathbb P\left(\bigcup_{s\ge0}\bigcap_{t\ge s}A_t\right)\le b.
\]
\end{lemma}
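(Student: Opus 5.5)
The event $\bigcup_{s\ge0}\bigcap_{t\ge s}A_t$ is the $\liminf$ of the $A_t$, so the proof is a direct application of continuity of measure together with Fatou's lemma (or just monotonicity). No independence, adaptedness, or martingale structure is needed; the lemma is a purely measure-theoretic fact.

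First, set $B_s=\bigcap_{t\ge s}A_t$. Each $B_s$ is measurable as a countable intersection of measurable events. Since $B_s\subseteq B_{s+1}$, the sequence increases, and continuity from below gives
\[
\mathbb P\Bigl(\bigcup_{s\ge0}B_s\Bigr)=\lim_{s\to\infty}\mathbb P(B_s).
\]

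Second, for each fixed $s$ we have $B_s\subseteq A_s$. Monotonicity then yields $\mathbb P(B_s)\le\mathbb P(A_s)\le b$. Because every term of the increasing sequence is at most $b$, so is its limit, which proves the claim.

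Equivalently, one can write $\mathbf 1_{\liminf A_t}=\liminf_t\mathbf 1_{A_t}$ and apply Fatou's lemma to get $\mathbb P(\liminf A_t)\le\liminf_t\mathbb P(A_t)\le b$. There is no real obstacle. The one point to stress for the later application is that the bound carries no factor counting the times $t$, because we never take a union over $t$. Only a persistent event, one that holds at all sufficiently late times, is controlled. This is exactly what lets the fixed-time estimate \eqref{eq:9.20} for $Q_t-W_t\ge0.02$ apply uniformly in the horizon.
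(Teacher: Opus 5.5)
Your proof is correct and is essentially the paper's argument: set $B_s=\bigcap_{t\ge s}A_t$, note that these sets increase with $B_s\subseteq A_s$ so each has probability at most $b$, and apply continuity from below. Your Fatou's-lemma variant, $\mathbb P(\liminf_t A_t)\le\liminf_t\mathbb P(A_t)\le b$, is an equivalent one-line reformulation.
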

\begin{proof}
Put \(C_s=\bigcap_{t\ge s}A_t\). The events \(C_s\) increase with \(s\), and \(C_s\subseteq A_s\), so each has probability at most \(b\). Continuity of measure from below gives \(\mathbb P(\bigcup_sC_s)=\lim_s\mathbb P(C_s)\le b\). This is not a union bound over \(A_t\). An event that occurs once need not satisfy the conclusion; the hypothesis used here is that it occurs at every sufficiently late time.
\end{proof}

\begin{lemma}[Actual reduced principal-load tail]\label{thm:12.4}
There are \(m_1\ge1\) and \(\kappa>0\), independent of the terminal index \(L\), such that for every \(m\ge m_1\), every finite \(L\), and every \(N\ge m\),
\[
 \mathbb P(D_N)\le4e^{-\kappa N^{7/50}}.
\tag{11.5}\label{eq:12.5}
\]
\end{lemma}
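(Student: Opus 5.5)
The plan is to combine the one-sided comparison of Lemma~\ref{thm:12.2} with the frozen source bound \eqref{eq:11.4} and the actual query estimate \eqref{eq:9.20}. Persistence of detection, via Lemma~\ref{thm:12.3}, will remove any dependence on the number of program steps.

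\textbf{Step 1: reduce $D_N$ to a persistent deviation.} Fix $m\ge m_1$, a finite $L$, and $N\ge m$. On $D_N$ there is a program time $t_0$ with $F_P(N,t_0)>0.85$. Because the permanent demand ledger only adds records, $F_P(N,t)$ is non-decreasing in $t$, so $F_P(N,t)>0.85$ for every $t\ge t_0$.

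Positive principal load at $N$ forces the first freeze $\tau$ to exist, by Section~\ref{sec:11.2}. The same argument gives $\tau\le t_0$, since the contributing row was committed after the cursor passed $0.85S$. Lemma~\ref{thm:12.1} makes the frozen input good on $D_N$, so $G(h)$ holds for the frozen registry.

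Consider the part of $D_N$ lying outside the frozen bad event $\mathcal S_N$ of \eqref{eq:11.4}. On that part, no working bin overflows; this is also consistent with $B_b$ being excluded from $D_N$. Hence $Z_N<0.83$ there. Lemma~\ref{thm:12.2} then gives, for every $t\ge t_0$ and outside a null set,
\[
 Q_t-W_t\ge F_P(N,t)-Z_N>0.85-0.83=0.02 .
\]
It follows that
\[
 D_N\subseteq \mathcal S_N\cup\bigcup_{s\ge0}\bigcap_{t\ge s}\{Q_t-W_t\ge0.02\}\cup\mathcal N,
\]
where $\mathcal N$ is a null set.

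\textbf{Step 2: bound the two pieces.} The frozen bad event satisfies $\mathbb P(\mathcal S_N)\le3e^{-\kappa_sN^{7/50}}$ by \eqref{eq:11.4}. This bound already accounts for the random freezing time through the countable partition into the cases $A_c$.

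For the second piece, \eqref{eq:9.20} with $u=0.02$ gives, uniformly in $t$ and in $L$,
\[
 \mathbb P\{Q_t-W_t\ge0.02\}\le\exp\bigl(-c'N^{1/5}\bigr),\qquad c'=\frac{(0.02)^2}{256(0.81+0.02/3)}.
\]
Lemma~\ref{thm:12.3} transfers this fixed-time bound to the persistent event with no factor counting steps. Since $1/5>7/50$, for $N\ge m\ge m_1$ large we have $e^{-c'N^{1/5}}\le e^{-\kappa N^{7/50}}$. Taking $\kappa\le\kappa_s$, the total bound is $3e^{-\kappa N^{7/50}}+e^{-\kappa N^{7/50}}$, which is \eqref{eq:12.5}.

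\textbf{Step 3: uniformity and the main obstacle.} The thresholds come from Sections~\ref{sec:9}--\ref{sec:11}: the large-scale conditions for \eqref{eq:9.1}, \eqref{eq:9.20}, Lemma~\ref{thm:12.1} and Lemma~\ref{thm:12.2}. All of them depend only on the fixed parameters and not on $L$. Since $N\ge m$, choosing $m_1$ above all these thresholds makes every estimate apply.

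The delicate point is Step 1. We must check that the comparison \eqref{eq:12.4} holds simultaneously at all late times on one full-measure set. We must also make sure that the event $\{Z_N<0.83\}$ is exactly the complement of the event bounded in \eqref{eq:11.4}, restricted to histories with $G(h)$; that event was defined via the literal frozen source cost at the random history. Finally, we must confirm that the freeze precedes the overload, so that $\tau\le t_0$. The plan is to rely on the monotonicity of the ledger and on the closing timing check of Section~\ref{sec:11.2} for these points.
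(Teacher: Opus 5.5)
Your proposal is correct and follows essentially the same route as the paper: split $D_N$ by the frozen bad event $\mathcal S_N$, use permanence of the ledger and Lemma~\ref{thm:12.2} to force the persistent deviation $Q_t-W_t\ge0.02$ off $\mathcal S_N$, and then bound that persistent event via Lemma~\ref{thm:12.3} and \eqref{eq:9.20}. The only cosmetic difference is that the paper takes $\kappa=\min\{\kappa_s,\kappa_h\}$, so $e^{-\kappa_hN^{1/5}}\le e^{-\kappa N^{7/50}}$ holds for every $N\ge1$ and no extra largeness of $N$ is needed.
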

\begin{proof}
Split \(D_N\) according to the frozen source bad event \(\mathcal S_N\). Its probability is bounded by \(3e^{-\kappa_sN^{7/50}}\), with no conditioning on \(D_N\). On \(D_N\setminus\mathcal S_N\), the actual first freeze has good input and its literal cost satisfies \(Z_N<0.83\). Some time has \(F_P(N,t)>0.85\). Since the ledger is permanent, this strict inequality persists at all subsequent times, including after termination. Lemma~\ref{thm:12.2} consequently gives
\[
 Q_t-W_t\ge0.02\qquad\text{for all sufficiently large }t.
\]
Apply Lemma~\ref{thm:12.3} to \(A_t=\{Q_t-W_t\ge0.02\}\). By \eqref{eq:9.20}, this persistent event has probability at most
\[
 e^{-\kappa_hN^{1/5}},\qquad
 \kappa_h=\frac{(1/50)^2}{256(0.81+(1/50)/3)}>0.
\]
Take \(\kappa=\min\{\kappa_s,\kappa_h\}\). For \(N\ge1\), \(N^{1/5}\ge N^{7/50}\), and hence
\[
 \mathbb P(D_N)
 \le3e^{-\kappa_sN^{7/50}}+e^{-\kappa_hN^{1/5}}
 \le4e^{-\kappa N^{7/50}}.
\]
All supporting thresholds are uniform over actual finite histories. Taking their maximum gives the stated \(m_1\). Neither a query-count factor nor conditioning on successful future processing appears.
\end{proof}

\subsection{Endpoint and bin events}\label{sec:12.4}

For endpoints, use the actual sequence of splitting queries. Geometry and endpoint width are fixed before a split; its two children are one increment. The cut-height budget is checked before the mark is sampled. Stop and pad with zero increments on paths that fail earlier. Section~\ref{sec:10} and the geometric invariant give a uniform fixed-time upper tail. The permanent endpoint ledger is nondecreasing, so the events that its load has reached \(0.01\) by time \(t\) increase. Passing to their union preserves the same uniform bound. Thus, for some \(\kappa_E>0\),
\[
 \mathbb P(E_N)\le e^{-\kappa_E N^{11/20}}.
\tag{11.6}\label{eq:12.6}
\]
No sequence of hypothetical endpoint cuts is added. A budget failure cannot be the first physical failure because Section~\ref{sec:7} improves both budgets strictly below \(\Pi/2\).

For bin \(b\), freeze its input near physical cursor \(\lfloor0.8s_b\rfloor\), or at the initial history if that nominal time is earlier than \(m\). Relevant jobs are created by \(0.75s_b+o(s_b)\); their first quota requests occur no earlier than \(0.9s_b-o(s_b)\). This leaves the same kind of strict unreadness interval as before. Restrict the bad section to actual histories with good input. Shared quota marks are independent across jobs at this past freeze, so the occupancy estimate applies. The countable first-history argument retains its history weights, as in Section~\ref{sec:11.3}. It yields
\[
 \mathbb P(B_b)\le e^{-\kappa_Bs_b^{11/20}},\qquad\kappa_B>0,
\tag{11.7}\label{eq:12.7}
\]
uniformly in the terminal index. A bad-input cache is charged to its original tested load index, even when it is read physically much later. It is never treated as a fresh good-input experiment at the later time.

\subsection{Termination, the union bound, and finite existence}\label{sec:12.5}

\begin{lemma}[Termination of the finite-horizon program]\label{thm:12.5}
For every positive initial index \(m\), every finite terminal index
\(L\), and every realization of the random coordinates, the program
reaches either success or a recorded failure after finitely many
steps.  No assumption that its geometric or probabilistic checks
succeed is needed for this assertion.
\end{lemma}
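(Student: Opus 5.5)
The argument is a well-founded measure: every program step strictly advances a quantity that is bounded in terms of \(m\) and \(L\) alone, so no \(\omega\)-length run is possible. I would first check that each individual step is a finite computation. Each step reads finitely many coordinates, performs finitely many real arithmetic operations and floors, evaluates finitely many certificates, and applies at most one finite permutation. The greedy selection in \eqref{eq:5.2} is finite whatever the geometry: it reads main positions in increasing order and stops either when the width sum exceeds \(w-r_S\) or when it exceeds \(L\). In the latter case the row is truncated, or a failure is recorded if a needed bin certificate is bad. These checks use no success hypothesis, so the selection never scans infinitely many indices.

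Next I would bound the number of steps. The key observation is that every object the program creates carries a distinct index or an index-bounded descriptor in \([m,L]\).

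\begin{enumerate}[label=(\roman*)]
\item Each started row consumes at least one index in \([m,L]\) (its first assigned or main index), and indices are never reused. Hence at most \(L-m+1\) rows are started.
\item Each started row creates at most \(L-m+1\) principal gaps and one endpoint split. Registered jobs are therefore finite, and so are their quota marks, which number at most one per job.
\item Bins relevant to indices \(\le L\) are finitely many by \eqref{eq:4.1}, so bin revelations are finite.
\item Certificate checks occur once per physical integer \(N\le L\), plus once per bin read.
\item Endpoint children whose windows lie beyond \(L\) are registered but never started. This is legitimate because the finite-prefix program only processes indices through \(L\).
\end{enumerate}

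Deterministic transitions are interleaved with these events, and each is triggered by advancing the cursor or by one of the finitely many events above. Adding up the counts bounds the total number of steps by an explicit function of \(m\) and \(L\).

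Finally, I would argue that the cursor is monotone and that the run cannot stall. At each program step, either the cursor advances, or one of the finitely many events above occurs for the first time, or a failure is recorded and the program halts. When the cursor passes \(L\) with no recorded failure, the program halts in success.

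The main obstacle is ruling out a loop of deterministic transitions that neither advances the cursor nor creates a new object. Examples would be repeatedly re-reading a cached certificate, or an endpoint job whose first assigned index has already passed. I would handle this with rule 4 of Section~\ref{sec:11.1}. Verification happens no later than \(\min I_r\), and activation happens exactly at the first assigned index, so each pending job is either activated or triggers a failure by the time the cursor reaches its deadline. I would also make explicit the convention that cached data are read without creating new steps. With these two points, a lexicographic potential
\[
(\text{cursor},\ \#\text{created objects},\ \#\text{revealed coordinates})
\]
strictly increases at every step while staying bounded, which gives termination for every realization.
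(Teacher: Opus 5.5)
\textbf{Gap: your termination measure does not strictly move at every step.} A lexicographic well-founded measure is the right strategy, but the program counts every control transition as a step, including deterministic ones that reveal no coordinate (Section~\ref{sec:11} says this explicitly). Many such steps leave your triple (cursor, number of created objects, number of revealed coordinates) unchanged:
\begin{itemize}
\item the check of $F_P(N)\le0.85$ and $F_E(N)\le0.01$ at the current cursor;
\item each entry of the finite list of allocation verifications;
\item the endpoint-budget check before a split;
\item the transition reporting that no due bin remains;
\item a quota-list step for a job whose shared mark $V_r$ was already revealed at an earlier bin, since that mark is reused rather than resampled.
\end{itemize}
Your repair does not close this. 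Declaring by convention that cached reads create no new steps changes the program's step relation instead of proving the lemma about it. Rule~4 of Section~\ref{sec:11.1} only says a job cannot be deferred past $\min I_r$; it says nothing about whether control can cycle among the check phases at a fixed cursor. Your step count has the same hole: ``each deterministic transition is triggered by advancing the cursor or by one of the events'' never bounds how many transitions one trigger causes. That loop-freeness is exactly what must be proved.

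The missing ingredient is to refine the measure below the cursor by the control state. The paper uses $(\max\{L+1-N,0\},b_{\rm ctl},u,d_{\rm ctl},q_{\rm ctl})\in\mathbb N^5$:
\begin{itemize}
\item $u$ counts due-but-uncached bins; it is finite because $\lfloor\gamma s_b\rfloor\le N$ and $s_b\ge m+b$ force $b<2N+2$;
\item $b_{\rm ctl}$ and $d_{\rm ctl}$ encode the phase;
\item $q_{\rm ctl}$ is the remaining length of the current quota or allocation list.
\end{itemize}
One then checks phase by phase that every transition decreases this tuple lexicographically. This includes the return to bin selection after caching, which lowers $u$, and a row commit, which creates jobs but never returns to the allocation checks at the same cursor.

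\textbf{Supporting claims that need correction.}
\begin{itemize}
\item Truncating the greedy selection \eqref{eq:5.2} at $L$ is not part of the program. Rows keep reserved tiles beyond $L$, which are dropped only when the prefix is extracted. A truncated main row could also leave an endpoint far wider than \eqref{eq:5.5} allows. The scan is finite without truncation: only finitely many jobs are ever registered, so all but finitely many bins consist entirely of main positions, and $\sum 1/i$ over them diverges.
\item Without truncation, your bound in (ii) is false. For $L=m$ the first main row already has $\asymp\sqrt m$ members and creates that many principal gaps. Finiteness survives, but your explicit function of $m$ and $L$ does not, and the argument does not need one.
\item The bins read by cursor $L$ are those with $\lfloor\gamma s_b\rfloor\le L$, whose starts extend to about $L/\gamma$, not just bins relevant to indices $\le L$. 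They are still finitely many, but that is the count to use.
\end{itemize}
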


\begin{proof}
We count all program transitions, including deterministic steps
that reveal no new random coordinate.  Let \(N\) be the physical
cursor, and let \(\mathcal C\) be the finite set of bins whose
results have already been cached, including bad results.
The set of bins due at \(N\) is
\[
 \mathcal D_m(N)
 =\{b\ge0:\lfloor\gamma s_b\rfloor\le N\},
 \qquad \gamma=\frac9{10}.
\]
It is finite for every \(N\).  Indeed, \(s_b\ge m+b\ge b\), and
the defining inequality gives
\(\gamma s_b<N+1\), so \(b<2N+2\).
Put
\[
 u=\bigl|\mathcal D_m(N)\setminus\mathcal C\bigr|.
\]

Partition the control phases into blocks, with the integer values
specified in the following table.  A stage counter resolves the
order within blocks that contain several phases.
\begin{center}\small
\begin{tabular}{@{}lcc@{}}
\toprule
Phase & Block \(b_{\rm ctl}\) & Stage \(d_{\rm ctl}\)\\
\midrule
Prepare the current cursor & 5 & 0\\
Select the next due bin & 4 & 3\\
Read its quota marks & 4 & 2\\
Read its permutation and cache the result & 4 & 1\\
Check current fractional loads & 3 & 0\\
Check the finite list of registered allocations & 2 & 0\\
Select the current row & 1 & 5\\
Read a main-row height, when required & 1 & 4\\
Check the endpoint budgets & 1 & 3\\
Read the split mark and commit the row & 1 & 2\\
Advance the physical cursor & 1 & 1\\
Success or recorded failure & 0 & 0\\
\bottomrule
\end{tabular}
\end{center}
Let \(q_{\rm ctl}\) be the number of entries remaining in the quota list
during the quota phase, or in the allocation list during the
allocation phase, and put \(q_{\rm ctl}=0\) in the other phases.
Each of these lists is finite when created: a bin's quota list
comes from its finite registered pool, and the allocation list
comes from the finite registry.

Use the five-component progress value
\[
 \mathcal R
 =\bigl(\max\{L+1-N,0\},\,b_{\rm ctl},\,u,\,d_{\rm ctl},\,q_{\rm ctl}\bigr)
 \in\mathbb N^5
\]
with the lexicographic order, earlier components taking priority.
We verify that every step from a nonterminal reachable state
strictly decreases \(\mathcal R\).

\begin{enumerate}
\item Preparation either detects \(N>L\) and ends successfully,
or moves from block \(5\) to block \(4\).

\item Selecting a due bin either finds none and moves from block
\(4\) to block \(3\), or selects an uncached due bin and moves from
stage \(3\) to stage \(2\).  In the latter case the new finite quota
list may have arbitrary length, since its length is a later
component of the progress value.

\item A nonempty quota list loses its first entry at the next
step.  Once it is empty, a successful quota check moves from stage
\(2\) to stage \(1\).  A bad input or quota result is instead
cached immediately.  Caching inserts the current bin, which was
due and uncached, into \(\mathcal C\); hence \(u\) strictly
decreases, even though the program returns to stage \(3\).

\item Completing a bin permutation also caches that bin and
strictly decreases \(u\).  This conclusion holds whether the
stored result is usable or bad.  Cached results are never removed.

\item The fractional-load check either fails or moves from block
\(3\) to block \(2\), creating a finite allocation list.  Each
allocation check either fails or removes its first list entry.
When the list becomes empty, control moves from block \(2\) to
block \(1\).

\item Within block \(1\), every nonfailing transition strictly
decreases the stage counter: row selection, optional height
sampling, budget checks, split and commit, and then advancement.
Already reserved tiles may skip intermediate stages.  A commit
may create new jobs, but it does not return to allocation checks
at the same cursor.

\item A successful advancement replaces \(N\) by \(N+1\).
An advancement phase in a nonterminal reachable state has
\(N\le L\): only preparation or a terminal state may occur beyond
the horizon.  Thus the first component strictly decreases.
The other four components may now reset.  An unsuccessful
advancement is a terminal failure.
\end{enumerate}

Every failure decreases the block to zero.  The due-and-uncached
property used in the caching cases follows from the rule selecting
the next bin, and is preserved while its quota and permutation
phases run.  These are control properties of the actual program;
they require no assumption about eventual success.

The lexicographic order on \(\mathbb N^5\) is well founded.
An infinite nonterminal run would therefore give an impossible
infinite strictly descending sequence of progress values.
The program consequently reaches a terminal state in finite time.
Terminal states are thereafter left unchanged.
\end{proof}

The termination proof does not need a uniform bound on the number
of program steps or an a priori bound on the depth of endpoint
trees.  A shared quota mark is sampled once and reused on any
later request for the same mark.  The bound
\(L+O(L^{3/4})\) applies to the indices of tiles reserved by rows
that have already started, as used in the area estimate.
It is not a bound on all bins revealed in advance: a bin is due
whenever \(\lfloor\gamma s_b\rfloor\le L\), which allows starts
near \(L/\gamma\).  Their finiteness is supplied by the due-bin
bound in the proof.

If none of the events on the right of \eqref{eq:12.3} occurs, all load and bin certificates needed by the process are good. The state and area invariant of Section~\ref{sec:7} then excludes the first geometric or budget failure. Termination therefore implies success, with cursor strictly greater than \(L\); the extracted packing includes \(R_L\). Rows may still contain future reserved tiles, which can simply be omitted when extracting the finite prefix.

Since \(\ell_b\ge1\) for \(m\ge1\), the bin starts satisfy \(s_b\ge m+b\). Countable subadditivity, \eqref{eq:12.5}--\eqref{eq:12.7}, and this monotone comparison give
\[
\begin{aligned}
 \mathbb P\{\text{an obstruction}\}
 &\le\sum_{n=m}^{\infty}e^{-\kappa_E n^{11/20}}
       +\sum_{n=m}^{\infty}e^{-\kappa_B n^{11/20}}\\
 &\quad+4\sum_{n=m}^{\infty}e^{-\kappa n^{7/50}}.
\end{aligned}
\tag{11.8}\label{eq:12.8}
\]
There is no need to shift the lower limit by a fixed proportional factor: the events are indexed directly by actual tested indices \(n\ge m\) and by bins with \(s_b\ge m+b\).

For any \(a,c>0\), the series \(\sum_{n\ge1}e^{-cn^a}\) converges. For example, eventually \(cn^a\ge2\log n\), so its terms are at most \(n^{-2}\). Its tails tend to zero. Increase a single initial threshold so that the right-hand side of \eqref{eq:12.8} is less than \(1/2\).

\begin{theorem}[Uniform finite-prefix existence]\label{thm:12.6}
There exists \(m_0\ge1\) such that, for every \(m\ge m_0\) and every integer \(L\ge m\), the rectangles \(R_m,\ldots,R_L\) have an interior-disjoint packing in \(Q_m\).
\end{theorem}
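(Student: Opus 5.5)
The plan is a probabilistic existence argument: run the randomized finite-horizon program of Sections~\ref{sec:4}--\ref{sec:11} and show that, with positive probability, it ends in success. Fix \(m\) and \(L\ge m\). By Lemma~\ref{thm:12.5}, every realization of the random coordinates reaches either success or a recorded failure after finitely many steps. So it suffices to show that recorded failure has probability less than one, with a bound that does not depend on \(L\).

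First I will classify recorded failures. A stop is triggered by one of four things: a load certificate (\(F_P(N,t)>0.85\) or \(F_E(N,t)\ge0.01\)), a bin certificate (bad input or occupancy at least \(0.90\ell_b\)), an endpoint-budget check, or a geometric condition (the primary rectangle's shorter side and aspect ratio, or an inherited job's shape). Lemma~\ref{thm:7.2}, run as a first-failure induction, shows that while all load and bin certificates are good, neither endpoint budget can exceed \(\Pi/2\) and no geometric condition fails first. A bad-input bin certificate is charged to the load index that caused it. Hence every failing path lies in the obstruction union \eqref{eq:12.3}.

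Next I will bound the probability of that union by countable subadditivity, using \eqref{eq:12.5}, \eqref{eq:12.6}, \eqref{eq:12.7}, and \(s_b\ge m+b\). This gives the bound \eqref{eq:12.8}. All three series converge and their tails vanish, so there is \(m_0\), independent of \(L\), with the right side below \(1/2\) for \(m\ge m_0\). For such \(m\), some realization avoids every obstruction, and on that realization the program terminates successfully with cursor greater than \(L\).

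Finally I will extract the packing from a successful run. Keep the placed rectangles \(\widetilde R_m,\dots,\widetilde R_L\) and discard any future reserved tiles. The exact region decomposition of Sections~\ref{sec:2} and~\ref{sec:7} gives closed pieces inside \(Q_m\) with pairwise disjoint interiors. Each tile is congruent to \(R_n\) by construction, with widths \(1/i\), heights \(1/(i+1)\), and only local right-angle rotations.

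The main obstacle is the implication ``no obstruction in \eqref{eq:12.3} \(\Rightarrow\) success.'' It needs care that no other stopping reason exists, such as a job whose window expires unassigned or a main row that cannot find enough main positions. I would check this against each rule of Section~\ref{sec:11.1}:
\begin{itemize}
\item Lemma~\ref{thm:11.1} guarantees that data arrive before each deadline.
\item Good bins leave at least \(0.1\ell_b\) main positions, as Lemma~\ref{thm:5.1} requires.
\item The remaining geometric and budget checks are covered by Lemma~\ref{thm:7.2}.
\end{itemize}
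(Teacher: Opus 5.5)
Your proposal is correct and follows the paper's argument essentially step for step. It uses termination from Lemma~\ref{thm:12.5}, reduces every recorded failure to the obstruction union \eqref{eq:12.3} through the first-failure invariants of Lemma~\ref{thm:7.2}, pushes the $L$-uniform union bound \eqref{eq:12.8} below $1/2$, and extracts $R_m,\dots,R_L$ from a successful run after discarding future reserved tiles. Your rule-by-rule check of Section~\ref{sec:11.1} against Lemma~\ref{thm:11.1} and the $0.1\ell_b$ main-position guarantee simply spells out what the paper compresses into ``the preceding deterministic argument.''
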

\begin{proof}
For these \(m,L\), the complement of the obstruction event has positive probability, hence is nonempty. On any sample in that complement, the actual construction terminates successfully by the preceding deterministic argument. Its placed rectangles give the desired finite packing. The bound is uniform in \(L\), so the same \(m_0\) works for every finite prefix.
\end{proof}

The successful sample may depend on \(L\). No infinite successful random trajectory is asserted here; that is unnecessary for the deterministic compactness argument that follows.

\section{A compactness limit and the area identity}\label{sec:13}

\textbf{Step 6.} The probability argument provides existence separately for every finite terminal index. The following deterministic argument combines those existences and then uses the exact area sum.

\begin{lemma}[Compactness of finite prefixes]
\label{thm:13.1}
If, for every \(L\ge m\), the rectangles \(R_m,\ldots,R_L\) can be packed into the same closed square \(Q_m\), then the entire infinite tail can be packed into \(Q_m\).

\end{lemma}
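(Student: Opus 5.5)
The plan is a diagonal compactness argument on finitely many real parameters per rectangle. For each \(L\ge m\), fix a packing of \(R_m,\ldots,R_L\) into \(Q_m\) as in the hypothesis. Each placed rectangle \(\widetilde R_n^{(L)}\) is axis-parallel and congruent to \(R_n\). It is therefore determined by an orientation bit \(o_n^{(L)}\in\{0,1\}\), recording whether the side \(1/n\) is horizontal, and by its lower-left corner \(v_n^{(L)}\). Containment in \(Q_m\) forces \(v_n^{(L)}\) into the compact set \([0,m^{-1/2}]^2\). So for each fixed \(n\), the data \((o_n^{(L)},v_n^{(L)})_{L\ge n}\) range over the compact set \(\{0,1\}\times[0,m^{-1/2}]^2\).

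First apply sequential compactness to \(n=m\) to get a subsequence of \(L\) along which \(o_m^{(L)}\) is eventually constant and \(v_m^{(L)}\) converges. Then refine this subsequence for \(n=m+1\), then for \(n=m+2\), and so on. The diagonal subsequence \(L_j\) then has, for every fixed \(n\), an eventually constant orientation \(o_n\) and a convergent corner \(v_n^{(L_j)}\to v_n\). Define \(\widetilde R_n\) as the closed rectangle with corner \(v_n\), orientation \(o_n\), and the side lengths of \(R_n\). It is congruent to \(R_n\) and axis-parallel.

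Next, pass the two conditions to the limit. Containment holds because \(\widetilde R_n^{(L_j)}\subseteq Q_m\) is a closed condition on the corner once the orientation is fixed. Each coordinate inequality, such as \(v_{n,1}+w\le m^{-1/2}\), survives limits, so \(\widetilde R_n\subseteq Q_m\).

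Interior disjointness needs a little care, because the open condition \(\operatorname{int}\cap\operatorname{int}=\varnothing\) has to be rephrased as a closed one. For two axis-parallel rectangles \([a_1,b_1]\times[c_1,d_1]\) and \([a_2,b_2]\times[c_2,d_2]\), the interiors are disjoint exactly when
\[
\min\{b_1,b_2\}\le\max\{a_1,a_2\}\quad\text{or}\quad\min\{d_1,d_2\}\le\max\{c_1,c_2\}.
\]
Each alternative is a closed condition in the corner coordinates, and so is their disjunction. For fixed \(i\ne j\), every sufficiently late \(L_j\) has both orientations frozen, and the condition holds at every such \(L_j\). Hence it holds in the limit, which gives (1.1).

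For (1.2), use the area identity. The rectangles have pairwise disjoint interiors and their boundaries are null sets, so \(|\bigcup_{n\ge m}\widetilde R_n|=\sum_{n\ge m}\frac1{n(n+1)}=\frac1m\) by the telescoping computation behind (7.1). This equals \(|Q_m|\), and since the union lies inside \(Q_m\), the uncovered set has measure zero.

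The main obstacle is the closedness of interior disjointness. Naively, limits of disjoint open sets need not stay disjoint. I would handle it through the separating-coordinate disjunction above, which is closed because it is a finite union of closed sets. Everything else is routine diagonal extraction.
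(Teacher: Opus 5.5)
Your proof is correct and takes essentially the same route as the paper. Your diagonal extraction is the sequential form of the paper's argument, which intersects the nested nonempty closed sets $C_L$ in the compact product $\prod_{i\ge m}[0,m^{-1/2}]^2\times\{0,1\}$, and you rely on the same key observation that interior disjointness of nondegenerate axis-parallel rectangles is a finite disjunction of non-strict coordinate inequalities, hence closed; your derivation of (1.2) from disjoint interiors and null boundaries also matches the paper.
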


\begin{proof}
Write \(s=m^{-1/2}>0\). A placement coordinate for index \(i\) is its lower-left corner and orientation,
\[
 (x_i,y_i,\sigma_i)\in X_i=[0,s]^2\times\{0,1\}.
\]
The orientation set has the discrete topology and is finite. Each \(X_i\) is compact, so the countable product \(X=\prod_{i\ge m}X_i\) is compact. For each \(L\ge m\), let \(C_L\subseteq X\) impose containment and interior disjointness only on indices \(m,\ldots,L\). A finite packing gives a member of \(C_L\) by filling the unused coordinates arbitrarily. Thus \(C_L\) is nonempty, and \(C_{L+1}\subseteq C_L\).

Given the orientation, containment is expressed by

\[
0\le x_i,\quad 0\le y_i,\quad
x_i+a_i\le m^{-1/2},\quad y_i+b_i\le m^{-1/2},
\]

where \((a_i,b_i)\) lists the side lengths in that orientation. These conditions are closed. Two fixed axis-parallel rectangles of positive area have disjoint interiors if and only if at least one of the following four inequalities holds:

\[
x_i+a_i\le x_j,\quad x_j+a_j\le x_i,\quad
y_i+b_i\le y_j,\quad y_j+b_j\le y_i.
\tag{12.1}\label{eq:13.1}
\]

This condition is a finite union of four closed sets and is therefore closed. Orientation is discrete, so the same closedness holds when orientation is included as a coordinate. Only finitely many constraints occur in \(C_L\); hence \(C_L\) is closed in \(X\). The nested nonempty closed subsets \(C_L\) of the compact space \(X\) have a common point. Its coordinates satisfy every single-tile and every pairwise constraint, since each constraint occurs in some finite prefix. This common point is the desired packing. Non-strict separation inequalities are essential: they preserve boundary contact in the limit.

\end{proof}

Applying Lemma~\ref{thm:13.1} to the finite-horizon existence result of Section \ref{sec:12} proves \eqref{eq:1.1}. Finally,

\[
\sum_{n=m}^{\infty}|R_n|
=\sum_{n=m}^{\infty}\left(\frac1n-\frac1{n+1}\right)
=\frac1m=|Q_m|.
\tag{12.2}\label{eq:13.2}
\]

The rectangle interiors are pairwise disjoint. Each boundary has area zero, and a countable union of these boundaries still has area zero. Apply countable additivity first to the open interiors, which are actually disjoint. Adding back the countable union of zero-area boundaries leaves the area unchanged. Thus the area of the union of the closed rectangles equals the sum above. Countable additivity is not being applied to closed rectangles as though their boundaries were disjoint. This union is measurable and contained in \(Q_m\); its complement therefore has area zero, proving \eqref{eq:1.2}. \textbf{This completes the proof of Theorem 1.1.}

\section{\texorpdfstring{Selection of a single threshold \(m_0\)}{Selection of a single threshold m0}}\label{sec:14}

The requirements implicit in ``sufficiently large'' are imposed in the following order. This also shows that one threshold can be chosen uniformly over all finite terminal indices.

\begin{enumerate}
\item Fix the numerical values in \eqref{eq:1.3}--\eqref{eq:1.7}, \(c_0=1/4\), and \(\gamma=9/10\), and fix \(a_0\) satisfying \eqref{eq:7.3}.
\item Use the uniform estimates for windows and main-row spans to fix the coefficient \(K_0\) defining the interior height interval. Then fix the source and comparison buffers and the mean-envelope margin \(\epsilon=10^{-3}\). These coefficients do not depend on the random history.
\item Choose a height threshold ensuring that admissible windows are nonempty, \(a_{\min}\ge(1-\eta)H\), \(a_{\max}\le H-cH^p\), and \(e\le(1+\eta)H\), and that all short-side orientation conditions and causal separations hold. Main-row heights are at most \(3/(2m)\), inherited-row heights are at most \(f_B(m)\), and each endpoint child has short side less than three quarters of its parent-row height. A sufficiently large \(m\) therefore keeps every required height below this threshold. Termination itself uses the program rank of Lemma~\ref{thm:12.5}, not convergence of a sequence of heights.
\item Increase \(m\) so that complete bins exist, \(L'_r/L_r\) is sufficiently close to one, the main-height interval is nonempty, and the main-strip fitting and group-size estimates hold. Also ensure that the endpoint budget in Section \ref{sec:7} is strictly below \(\Pi/2\), and that the area lower bound is strictly stronger than the inductive lower bound.
\item Use the fixed margins between the probabilistic thresholds to fix their upper-tail constants \(K,\kappa\). Increase \(m\) further so that all uniform errors are smaller than their allocated margins and the tail sum in \eqref{eq:12.8} is less than \(1/2\).
\end{enumerate}

These are finitely many threshold conditions together with one condition on the tail of a convergent series. Taking their maximum gives a single integer \(m_0\).

\noindent\textbf{Explicit numerical calculation.}
We now quantify the five requirements above. A deliberately generous
sufficient choice is
\[
                         \boxed{m_0=10^{1000}}.
\]
The following estimates verify this choice uniformly over all finite
terminal indices; no optimization of the threshold is intended.

Keep all the numerical parameters fixed in Section 1 and choose
\[
 a_0=\frac1{10},\qquad K_0=100,\qquad R=10^{200}.
\]
These choices specify the remaining constants in conditions 1 and 2.
Here $R$ is an auxiliary scale, not the initial index. Since
\[
 d=\frac{1+\log(11/10)}{5/4}<0.877,
 \qquad \frac1{10}<\sqrt{\frac{1-d-\Pi}{4}},
\]
the choice of $a_0$ satisfies (7.3). We verify explicit versions of the
estimates used above. All estimates below apply to arbitrary finite
histories satisfying the stated provisional invariants; none depends on
the finite terminal index.

\noindent\textbf{Condition 3: windows and inherited rows.}
Write $s=H^{-1}$ and suppose $s\geq R$. For $1\leq a\leq 11/10$, the
solution $x_a$ of $f_a(x_a)=H$ satisfies
\[
 s+\tfrac12s^{3/4}\leq x_a\leq s+2s^{3/4}.
\]
These inequalities follow by substituting the two endpoints in the
strictly decreasing function $f_a$. Differentiation gives
\[
 \frac{\partial x_a}{\partial a}
 =\frac{x_a^{3/4}(1+x_a^{-1})^2}
 {1+(5a/4)x_a^{-1/4}(1+x_a^{-1})^2},
 \qquad
 \left|s^{-3/4}\frac{\partial x_a}{\partial a}-1\right|
 \leq 10s^{-1/4}.
\]
Integration over $[1,11/10]$, followed by rounding the two endpoints,
therefore yields
\[
 \left|\frac{L(H)}{Ds^{3/4}}-1\right|\leq40s^{-1/4},
 \qquad .05s^{3/4}\leq L(H)\leq .2s^{3/4}.
\]
In particular, the window is nonempty. The same endpoint bounds imply
\[
 a_{\min}\geq(1-2s^{-1/4})H,\qquad
 a_{\max}\leq H-\tfrac14H^{5/4},\qquad
 a_{\max}-a_{\min}\leq H^{5/4}.
\]
Thus, for $H\leq W\leq16H^{4/5}$,
\[
 1\leq\lfloor W/H\rfloor\leq k\leq32s^{1/5}\leq L(H),
 \qquad
 cH^{5/4}\leq e\leq H(1+200s^{-1/20}).
\]
For the last inequality, use
$e<cH^{5/4}+a_{\max}+k(a_{\max}-a_{\min})$.
Since $200R^{-1/20}=2\cdot10^{-8}<\eta$, both the width lower bound
$a_{\min}\geq(1-\eta)H$ and the endpoint bound $e\leq(1+\eta)H$
hold. Principal gaps have their stated short-side orientation, since
$Bi^{-5/4}<1/i$. The endpoint children have short sides at most
$(2/3)(1+\eta)H<3H/4$, and the lower reserve gives their shape bound
with $C=16$. Consequently all the conclusions and numerical contraction
constants of Sections 2 and 3 apply.

\noindent\textbf{Condition 4: complete bins and main rows.}
The two discarded boundary-bin portions have total length at most
$4s^{13/20}$. Hence
\[
 L'\geq Ds^{3/4}(1-1000s^{-1/10}),\qquad
 1\leq L/L'\leq1+1000s^{-1/10}<100/99.
\]
In particular, every window contains complete bins.
For a main row with initial index $z\geq R$, the following complete bin
ends before $z+3z^{13/20}$. Its main positions have total width at least
$.02z^{-7/20}>z^{-1/2}$. Thus the greedy selection stops within that
bin. Using $.1z^{-1/2}\leq w\leq z^{-1/2}$ and the greedy remainder
bound gives
\[
 .05\sqrt z\leq k\leq2\sqrt z,\qquad
 z\leq i_j\leq z+3z^{13/20}.
\]
For these indices, and $1\leq U\leq1.1$,
\[
 \left|i_j^{5/4}\left(\frac1{z+1}+Uz^{-5/4}
                         -\frac1{i_j+1}\right)-U\right|
 \leq20z^{-1/10}.
\]
Thus $K_0=100$ suffices in (5.4). The height interval is nonempty since
$200z^{-1/10}<D$, and $H\leq3/(2z)$.
The endpoint estimates in Section 5 follow explicitly from
\[
 c(3/2)^{5/4}<1/4,\qquad
 \frac1z+\frac14z^{-5/4}
 \leq\frac1{z+1}+z^{-5/4}.
\]
All inherited-row spans are at most $3z^{3/4}$ when their first assigned
index is $z$; the same bound includes main rows.

\noindent\textbf{Condition 4 (continued): endpoint budgets and the area bootstrap.}
For main rows starting in $[T,2T)$, their distinct assigned indices lie
in $[T,3T]$, and each row uses at least $.05\sqrt T$ indices. There
are at most $60\sqrt T$ such rows. Summing their heights over dyadic
intervals gives
\[
 \sum_{\rm main\ rows}H\leq1000m^{-1/2}.
\]
The unique source indices of principal-gap rows give
\[
 \sum_{\rm principal\text{-}gap\ rows}H
 \leq B\sum_{i\geq m}i^{-5/4}\leq6m^{-1/4}.
\]
Both endpoint accounts of Section 7 are consequently less than
\[
 125(1000m^{-1/2}+6m^{-1/4})
 \leq 2\cdot10^5m^{-1/4}<\Pi/2.
\]
Integral comparison in (6.2), including integer endpoint rounding,
gives the explicit bound
\[
 D_P(t)\leq dt+4t^{9/5}.
\]
Also $t_N\leq N^{-1}(1+2N^{-1/4})$, and the row-span bound implies
that the total reserved area is at most $4N^{-5/4}$. Thus
\[
 A_N\geq\frac{1-d-\Pi-20N^{-1/4}}{N}
       >\frac{.12}{N}.
\]
With aspect ratio at most two, the shorter side is therefore at least
$\sqrt{.06}\,N^{-1/2}>a_0N^{-1/2}$. A strip of height at most
$3/(2N)$ satisfies $H\leq w/2$, and Lemma 5.2 preserves the aspect
ratio. This supplies
the strict improvement required in the first-failure argument.
All created row heights are at most $2/m$: main heights are at most
$3/(2m)$, principal-gap heights are at most $Bm^{-5/4}$, and endpoint
heights decrease at each split. Hence the preceding small-height
estimates apply throughout that induction.

\noindent\textbf{Condition 3 (continued): source supports and causal separation.}
Let $S=N^{4/5}$. Directly from (8.4) and
$|Nf_a(N)-1|\leq2N^{-1/4}$, the source support lies in
\[
 [S-2S^{3/4},\ B^{4/5}S+2S^{3/4}].
\]
At these scales a complete job window has diameter at most
$.2(2S)^{3/4}<S^{3/4}$; a boundary bin has length at most
$2S^{13/20}<S^{3/4}$. These bounds justify, respectively, the inner,
working, and comparison envelopes with coefficients $4,6,8$ used in
Section 10. Since $B^{4/5}<1.080$ and $8S^{-1/4}<.001$, those
envelopes fit strictly inside the stated fixed-proportion bands. An
additional intersecting window also fits in the next band, since its
diameter is less than $S^{3/4}$.

The endpoint parent-to-child activation ratio is at most
$(2/3)(1+\eta)\cdot1.01<.70$ (the parent activation index is at
most $1.01/H$); a principal-gap supplier is
created by time at most $2s^{4/5}<.70s$. Allowing the window and bin
rounding errors still gives creation before $.75s$ and complete
revelation before the first eligible index. At source scale $S$, the
bands therefore give supplier creation before $.825S$ and earliest
quota revelation after $.882S-10S^{3/4}>.85S$.
The analogous bin freeze at $.8s_b$ lies strictly between supplier
creation and first quota revelation. The integer floors are harmless,
since each remaining time margin exceeds one.
Main-row look-ahead also uses already revealed bins, since
$.9(z+3z^{13/20})<z$.
Finally
\[
 S-2S^{3/4}>.85S+3(.85S)^{3/4},
\]
so an actual contributing row has passed the nominal source freeze.
The initial-history cases follow as in Section 10: suppliers would
otherwise have to be created before the initial cursor, and their pool
is empty. All comparison indices are below $2N^{4/5}<N$, as required
by the least-obstruction argument. A contributing main-row member,
together with the main-row span bound, forces its reference into
$[S/2,2S]$, justifying the query restriction in Section 8.7.
These checks cover the causal and boundary conditions without
any dependence on the terminal index.

\noindent\textbf{Condition 5: the sharp mean estimates.}
For an eligible output gap, the window bounds above imply
\[
 \lambda_N(i,g)\leq
 (1+100N^{-1/4})\frac{N^{1/4}}{Di}
 \leq40N^{-11/20}.
\]
The weights $w_i$ satisfy the same final bound.
For a main row, division of the density estimate by
$i^{1/4}/(DN)$ gives at most
\[
 \frac{1+100N^{-1/4}}{1-2000z^{-1/10}}
 \leq1+10^6S^{-1/10},\qquad S/2\leq z\leq2S.
\]
Here $z\leq i$, so the additional factor $(z/i)^{5/4}$ can be
discarded. For an inherited input row with reciprocal height $s$,
where $S/2\leq s\leq2S$, the inverse-function count is bounded by
\[
 DN^{-5/4}s^2(1+20S^{-1/4})+2.
\]
Its additive rounding term has relative size at most $80S^{-7/16}$.
Since all eligible positions satisfy $i\geq s$, division of its
formal mean by $ks^{1/4}/(DN)$ is at most
\[
 \frac{(1+100N^{-1/4})
       (1+20S^{-1/4}+80S^{-7/16})}
      {1-1000s^{-1/10}}
 \leq1+10^6S^{-1/10}.
\]
For $S\geq R$, $10^6S^{-1/10}\leq10^{-14}<10^{-3}$.
This proves both fixed-margin comparisons (8.6) and (8.10) with
the actual weight $w_i$ in the manuscript. In the inherited comparison,
$\sum_iq_{r,i}i^{1/4}\geq k_rs^{1/4}$, so no limiting equality of
these two expressions is required.

Integral comparison on the envelope with coefficient six gives
\[
 \sum_{i\in\Omega_N}\frac{i^{1/4}}{DN}
 \leq .8+1000S^{-1/4}\leq .801.
\]
Deleting positions below $m$ only decreases this sum. Consequently
\[
 \mathbb EY\leq1.001\cdot.801=.801801<.81,
 \qquad Q_i\leq(100/99)\rho_I<1.
\]
These are the sharp mean and nonnegative-mass requirements of Section 8.

\noindent\textbf{Condition 5 (continued): explicit concentration constants.}
For one bin starting at $b$, the retained-window estimate gives mean
occupancy at most $.88\ell_b$. Every job quota is at most
$10^4b^{1/10}$. Lemma A.2, with excess $.02\ell_b$ and
$\ell_b\geq\tfrac12 b^{13/20}$, therefore gives
\[
 \mathbb P\{\hbox{bin occupancy}\geq .90\ell_b\}
 \leq\exp(-10^{-10}b^{11/20}).
\]
There are at most $3S$ working bins, all starting at least $S/2$.
For $S\geq R$, their union probability is at most
$\exp(-10^{-12}N^{11/25})$; explicitly,
$\log(3S)\leq10^{-11}S^{11/20}$ absorbs the counting factor.

The input-load bound on the complete comparison windows gives at most
$3S$ jobs and at most $3S$ working positions. Each job occupies at most
$10S^{1/10}$ complete bins. With $h_N=40N^{-11/20}$, the sum of
squared quota-mark oscillations is at most
\[
 3S(400S^{1/10}N^{-11/20})^2
       =480000N^{-7/50}.
\]
The corresponding permutation-exposure sum is at most
\[
 3S(80N^{-11/20})^2=19200N^{-3/10}.
\]
Lemma A.1 with excess $.01$ bounds these two probabilities by
$\exp(-10^{-12}N^{7/50})$ and
$\exp(-10^{-12}N^{3/10})$, respectively. Thus the source-only
bound holds with $\kappa_s=10^{-12}$ and prefactor three.

For an endpoint split, the same load estimate gives conditional mean
at most
\[
 20(1+100N^{-1/4})(1+N^{-1})H\leq21H.
\]
Indeed $eN\geq(3/2)N/(N+1)$ whenever either child can be eligible.
Moreover $e\leq3f_B(N)\leq6/N$ and the reserve imply
$H\leq(48/N)^{4/5}$. Summing the two child loads gives the
generous increment bound $10^4N^{-11/20}$. With mean budget $.0021$
and excess $.0079$, Lemma A.2 gives an endpoint decay constant at
least $10^{-10}$.

For an actual main query, the inverse-function count used above is
at most $2Ds_H^2N^{-5/4}$, including integer rounding, and
$s_H\leq4S$. Multiplication by $40N^{-11/20}$ gives the bound
$128N^{-1/5}$ in (8.19). The decay constant in (8.20), at excess
$.02$, is
\[
 \frac{(.02)^2}{256(.81+.02/3)}>10^{-6}.
\]
The actual-history and persistent-detection arguments of Sections 10
and 11 preserve these bounds. In particular it is safe to use the
single smaller constant
\[
                 \kappa_* =10^{-20}
\]
in all three terms of (11.8), keeping their existing prefactors.

\noindent\textbf{A single numerical threshold.}
For $m\geq10^{1000}$, every reciprocal row height is at least $m/2$,
and every source or comparison scale used above is at least
$m^{4/5}/4>R$. Thus all the estimates just verified apply
simultaneously. The endpoint-budget and area inequalities also hold
with strict margins at this value of $m$.
Finally, for $n\geq m$, the elementary inequality
$\log n\leq(100/7)n^{7/100}$ and
\[
 n^{7/100}\geq10^{70}>
 \frac{200}{7\kappa_*}
\]
give $\kappa_*n^{7/50}\geq2\log n$. Consequently \eqref{eq:12.8} is at most
\[
 6\sum_{n=m}^{\infty}e^{-\kappa_*n^{7/50}}
 \leq6\sum_{n=m}^{\infty}n^{-2}
 \leq\frac6{m-1}<\frac12.
\]
The finite-horizon construction therefore has positive probability
of success for every terminal index. Its termination requires no
additional numerical threshold. Section \ref{sec:13} supplies the infinite
packing and the equal-area identity without further enlargement of
$m$. Thus $m_0=10^{1000}$ satisfies all five requirements and is a sufficient
threshold in Theorem~\ref{thm:1.1}.

The exponent $1000$ is chosen solely for convenience. The proof above
does not claim optimality, nor does it address the case $m=1$.

\newpage
\begin{appendices}

\section{Two probability inequalities}\label{sec:8}

The random sample space used below is a countable product of uniform intervals and finite permutation spaces, with their usual Borel sigma algebras. The finite adaptive histories are measurable functions of this space. All conditional expectations and tail estimates refer to these actual measurable random variables.

We state the finite forms used below to make the concentration arguments self-contained. Every application is first restricted to a finite horizon; no direct application to an infinite random process is needed.

\begin{lemma}[Bounded differences]
\label{thm:8.1}
Let \(Y\) be a function of independent random variables \(V_1,\ldots,V_J\). Suppose that changing \(V_j\) alone changes \(Y\) by at most \(a_j\). Then

\[
\mathbb P\{Y-\mathbb EY\ge u\}
\le\exp\left(-\frac{2u^2}{\sum_j a_j^2}\right).
\tag{A.1}\label{eq:8.1}
\]

The same bound holds for \(\sum_jD_j\) if, conditional on the past, each martingale difference \(D_j\) takes values in an interval of length at most \(a_j\). Here \(u>0\). If \(\sum_j a_j^2=0\), the relevant random variable is constant and the upper-tail probability is zero.

\end{lemma}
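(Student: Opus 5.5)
This is the Azuma--Hoeffding / McDiarmid inequality, and I will prove it by the usual exponential-moment method, handling the martingale form first and deducing the bounded-differences form from it. The first ingredient is Hoeffding's lemma. If a random variable \(D\) satisfies \(\mathbb E[D\mid\mathcal G]=0\) and takes values in an interval \([\alpha,\alpha+a]\) whose endpoints are \(\mathcal G\)-measurable, then for every \(\lambda\in\mathbb R\)
\[
\mathbb E\bigl[e^{\lambda D}\mid\mathcal G\bigr]\le e^{\lambda^2a^2/8}.
\]
To prove this I bound \(e^{\lambda x}\) on the interval by its chord, using convexity. Taking conditional expectations and using the zero mean gives a bound of the form \(e^{\phi(\theta)}\), where \(\theta=\lambda a\) and \(\phi(\theta)=-\pi\theta+\log(1-\pi+\pi e^{\theta})\) with \(\pi=-\alpha/a\in[0,1]\). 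Then \(\phi(0)=\phi'(0)=0\), and \(\phi''\le1/4\) because \(\phi''\) is the variance of a Bernoulli variable. Taylor's theorem then gives \(\phi(\theta)\le\theta^2/8\).

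Next, for a martingale difference sequence \(D_1,\dots,D_J\) adapted to \(\mathcal F_j\), I peel off the last factor by conditioning on \(\mathcal F_{J-1}\) and iterate. This gives
\[
\mathbb E\,e^{\lambda\sum_j D_j}\le \exp\Bigl(\tfrac{\lambda^2}{8}\sum_j a_j^2\Bigr).
\]
Markov's inequality then yields \(\mathbb P\{\sum_jD_j\ge u\}\le\exp\bigl(-\lambda u+\lambda^2\sum_j a_j^2/8\bigr)\). Choosing \(\lambda=4u/\sum_j a_j^2\) gives the bound \(\exp(-2u^2/\sum_j a_j^2)\). In the degenerate case \(\sum_j a_j^2=0\), every \(D_j\) vanishes almost surely, so the sum is \(0<u\) and the probability is zero.

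For the bounded-differences statement I form the Doob martingale \(M_j=\mathbb E[Y\mid V_1,\dots,V_j]\), with \(D_j=M_j-M_{j-1}\) and \(\sum_j D_j=Y-\mathbb EY\). The main step is to check that \(D_j\) lies in an interval of length at most \(a_j\) whose endpoints are determined by \(V_1,\dots,V_{j-1}\). By independence and Fubini, \(M_j=g_j(V_1,\dots,V_j)\), where \(g_j\) integrates \(Y\) over the product law of \(V_{j+1},\dots,V_J\). Write \(\inf_v g_j(V_{<j},v)\) and \(\sup_v g_j(V_{<j},v)\) for the extreme values over the \(j\)-th argument. Each of them is \(\mathcal F_{j-1}\)-measurable, and they differ by at most \(a_j\), because for fixed remaining coordinates changing \(V_j\) alone moves \(Y\) by at most \(a_j\), and integration preserves this.

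Thus \(D_j\in[\inf_v g_j-M_{j-1},\ \sup_v g_j-M_{j-1}]\), an interval of length at most \(a_j\), and the martingale bound applies. The only delicate point is measurability of these infima and suprema for general (uncountable) coordinates such as the uniforms \(V_r\). I will avoid it by using the essential infimum and supremum over \(v\). Alternatively, I can note that Hoeffding's lemma only requires \(D_j\) to lie in some interval of length \(a_j\) with an \(\mathcal F_{j-1}\)-measurable left endpoint, for example \(M_{j-1}'-M_{j-1}\) with \(M'_{j-1}=\operatorname{ess\,inf}\). The final step is to verify that boundedness of \(Y\) is not needed beyond integrability, which follows from the bounded-differences hypothesis.
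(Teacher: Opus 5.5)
Your proposal is correct and takes the same route as the paper: Hoeffding's lemma for each conditionally centered increment, iterated conditioning, Markov's inequality with $\lambda=4u/\sum_j a_j^2$, and the Doob martingale of the independent coordinates for the bounded-differences form. The only variation is in proving Hoeffding's lemma: you use the chord bound and the Bernoulli variance, whereas the paper bounds the second derivative of the cumulant generating function by the tilted variance $\le a^2/4$; both are standard, and your essential-infimum remark settles a measurability point the paper leaves implicit.
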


\begin{proof}
If a centered random variable \(X\) takes values in an interval of length \(a\), then

\[
\mathbb Ee^{\lambda X}\le e^{\lambda^2a^2/8}.
\]

To see this, set \(\psi(\lambda)=\log\mathbb Ee^{\lambda X}\). Its second derivative is the variance under the exponentially tilted distribution. Every probability distribution supported on an interval of length \(a\) has variance at most \(a^2/4\). Integrating twice and using \(\psi(0)=\psi'(0)=0\) proves the displayed estimate. Applying its conditional version successively, followed by Markov's inequality with \(\lambda=4u/\sum a_j^2\), gives \eqref{eq:8.1}. For a function of independent coordinates, the Doob martingale obtained by revealing the coordinates one at a time has conditional oscillation at most \(a_j\) at step \(j\). It therefore falls under the martingale version.

\end{proof}

\begin{lemma}[Predictable mean budget]
\label{thm:8.2}
Let \((\mathcal F_j)_{j=0}^J\) be a filtration, and let \(X_j\), for \(1\le j\le J\), be nonnegative \(\mathcal F_j\)-measurable random variables revealed sequentially. Suppose that

\[
0\le X_j\le b,\qquad
\mu_j=\mathbb E[X_j\mid\mathcal F_{j-1}],\qquad
\sum_j\mu_j\le M,\qquad b,M\ge0
\tag{A.2}\label{eq:8.2}
\]

almost surely, where \(b\) and \(M\) are deterministic. For \(u>0\),

\[
\mathbb P\left\{\sum_j(X_j-\mu_j)\ge u\right\}
\le\exp\left[-\frac{u^2}{2(bM+bu/3)}\right].
\tag{A.3}\label{eq:8.3}
\]

Consequently the same bound holds for \(\{\sum_j X_j\ge M+u\}\). More generally, if a possibly random comparison weight \(W\) satisfies \(\sum_j\mu_j\le W\) pointwise, it holds for \(\{\sum_jX_j-W\ge u\}\). The deterministic \(M\) bounds the predictable mean budget; \(W\) need not be deterministic. When \(b=0\) or \(M=0\), these events have probability zero and the displayed fraction is not used.

Whether a step is performed may be determined by the preceding information. After stopping, the remaining increments are set to zero. Later we use zero-based program indexing: then \(X_j\) is \(\mathcal F_{j+1}\)-measurable and its mean is \(\mathbb E[X_j\mid\mathcal F_j]\), for \(0\le j<t\). This is just a shift of the indices in the finite statement.

\end{lemma}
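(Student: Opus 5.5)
The plan is the standard exponential-supermartingale (Freedman/Bernstein) argument, adapted to nonnegative increments with a bounded predictable mean budget. Set \(D_j=X_j-\mu_j\), which is a martingale difference with respect to \((\mathcal F_j)\). Fix \(\lambda>0\) and use the elementary inequality \(e^{x}\le 1+x+x^2\phi(\lambda b)/(\lambda b)^2\cdot\lambda^2\) on \([0,\lambda b]\). Concretely, I would use the convexity bound
\[
e^{\lambda x}\le 1+\frac{x}{b}\bigl(e^{\lambda b}-1\bigr),\qquad 0\le x\le b.
\]
Taking conditional expectations gives
\[
\mathbb E[e^{\lambda X_j}\mid\mathcal F_{j-1}]\le 1+\frac{\mu_j}{b}(e^{\lambda b}-1)\le\exp\Bigl(\frac{\mu_j}{b}(e^{\lambda b}-1)\Bigr).
\]
It follows that
\[
\mathbb E\bigl[e^{\lambda D_j}\mid\mathcal F_{j-1}\bigr]\le\exp\bigl(\mu_j\,g(\lambda)\bigr),\qquad g(\lambda)=\frac{e^{\lambda b}-1-\lambda b}{b}.
\]
This is the one place nonnegativity of \(X_j\) is used: the variance proxy is \(b\mu_j\), not \(b^2\).

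Because the budget \(\sum_j\mu_j\le M\) holds only pathwise, I would not multiply the conditional bounds directly. Instead I would form the process
\[
\mathcal M_k=\exp\Bigl(\lambda\sum_{j\le k}D_j-g(\lambda)\sum_{j\le k}\mu_j\Bigr).
\]
Each \(\mu_j\) is \(\mathcal F_{j-1}\)-measurable, so the display above shows that \(\mathcal M_k\) is a supermartingale with \(\mathbb E\mathcal M_J\le1\). On the event \(\{\sum_jD_j\ge u\}\) we have \(\sum\mu_j\le M\), hence \(\mathcal M_J\ge e^{\lambda u-g(\lambda)M}\). Markov's inequality then gives
\[
\mathbb P\Bigl\{\sum_jD_j\ge u\Bigr\}\le\exp\bigl(-\lambda u+g(\lambda)M\bigr).
\]
Optimizing at \(\lambda=b^{-1}\log(1+u/M)\) produces Bennett's bound \(\exp(-(M/b)h(u/M))\) with \(h(x)=(1+x)\log(1+x)-x\). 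The classical inequality \(h(x)\ge x^2/(2(1+x/3))\) turns this into \eqref{eq:8.3}. As an alternative I would bound \(g(\lambda)\le \lambda^2 b/(2(1-\lambda b/3))\) for \(\lambda b<3\) and choose \(\lambda=u/(bM+bu/3)\) directly. This avoids Bennett's function but needs the series comparison \(k!\ge 2\cdot3^{k-2}\).

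The corollaries follow by inclusion. If \(\sum_j\mu_j\le W\) pointwise, then
\[
\Bigl\{\sum X_j-W\ge u\Bigr\}\subseteq\Bigl\{\sum(X_j-\mu_j)\ge u\Bigr\},
\]
and taking \(W=M\) covers \(\{\sum_jX_j\ge M+u\}\). The degenerate cases are immediate: if \(b=0\), all \(X_j=0\) almost surely; if \(M=0\), all \(\mu_j=0\) and nonnegativity forces \(X_j=0\) almost surely. In both cases the event \(\{\sum D_j\ge u\}\) with \(u>0\) is null. Stopped or adaptively skipped steps require no extra argument. Setting \(X_j=0\) after stopping keeps \(X_j\) adapted with \(\mu_j=0\), so the finite statement covers them, and the zero-based indexing is a relabeling.

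The main obstacle is the random budget. The supermartingale must carry the compensator \(g(\lambda)\sum\mu_j\) inside the exponential rather than being bounded by a deterministic product. Once that is set up, the constants are routine.
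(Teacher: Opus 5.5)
Your proposal is correct and follows the paper's proof essentially step for step: the same chord bound $e^{\lambda x}\le 1+\frac{x}{b}(e^{\lambda b}-1)$, the same exponential supermartingale that carries the random compensator $\sum_j\mu_j$ inside the exponent, Markov's inequality on $\{\sum_j D_j\ge u\}$ using the pathwise budget $M$, and inclusion for the $W$-version and the degenerate cases $b=0$ or $M=0$. Your ``alternative'' ending, which bounds $e^{x}-1-x\le x^2/(2(1-x/3))$ via $k!\ge 2\cdot 3^{k-2}$ and takes $\lambda=u/(bM+bu/3)$, is exactly the paper's route. Your primary Bennett optimization at $\lambda=b^{-1}\log(1+u/M)$, followed by $h(x)\ge x^2/(2(1+x/3))$, is an equally valid way to reach the same constant.
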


\begin{proof}
First suppose \(b>0\) and \(M>0\), and put \(D_j=X_j-\mu_j\). Convexity on \([0,b]\) gives the chord bound
\[
 e^{\lambda x}\le1+\frac{x}{b}(e^{\lambda b}-1).
\]
Taking a conditional expectation, using \(1+y\le e^y\), and centering gives
\[
 \mathbb E[e^{\lambda D_j}\mid\mathcal F_{j-1}]
 \le\exp\left[\frac{\mu_j}{b}
 (e^{\lambda b}-1-\lambda b)\right].
\]
For \(0\le x<3\), the exponential series and \(k!\ge2\cdot3^{k-2}\) for \(k\ge2\) imply
\[
 e^x-1-x\le\frac{x^2}{2(1-x/3)}.
\]
Therefore, for \(0<\lambda<3/b\),
\[
\mathbb E[e^{\lambda D_j}\mid\mathcal F_{j-1}]
\le\exp\left(\frac{\lambda^2b\mu_j}{2(1-\lambda b/3)}\right).
\tag{A.4}\label{eq:8.4}
\]
The process
\[
\exp\left(\lambda\sum_{j\le k}D_j
-\frac{\lambda^2b}{2(1-\lambda b/3)}\sum_{j\le k}\mu_j\right)
\]
is consequently a nonnegative supermartingale with initial expectation one. On the event \(\sum D_j\ge u\), its terminal value is at least
\(\exp(\lambda u-\lambda^2bM/[2(1-\lambda b/3)])\).
Markov's inequality and
\[
 \lambda=\frac{u}{bM+bu/3},\qquad 0<\lambda b<3,
\]
give \eqref{eq:8.3}. The two stated consequences follow from
\(\sum X_j-W\le\sum(X_j-\mu_j)\) when \(W\ge\sum\mu_j\).
If \(b=0\), each increment is zero. If \(M=0\), nonnegativity implies \(\mu_j=0\) almost surely and hence \(\mathbb E X_j=0\), so again each increment is zero almost surely. A finite intersection of these full-probability statements handles all increments together. Stopped steps contribute zero and satisfy the same bounds.

\end{proof}

\section*{Acknowledgements}

The author thanks Jianan Shao for suggesting the problem that motivated this work. The author also gratefully acknowledges the outstanding assistance of OpenAI's GPT-6 in developing and refining the arguments, working through quantitative estimates, and improving the exposition.

\end{appendices}

\phantomsection
\addcontentsline{toc}{section}{References}
\bibliographystyle{unsrt}
\bibliography{references}

@article{meir1968packing,
  title={On packing of squares and cubes},
  author={Meir, Aram and Moser, Leo},
  journal={Journal of combinatorial theory},
  volume={5},
  number={2},
  pages={126--134},
  year={1968},
  publisher={Elsevier}
}

@article{tao2024perfectly,
  title={Perfectly packing a square by squares of nearly harmonic sidelength},
  author={Tao, Terence},
  journal={Discrete \& Computational Geometry},
  volume={71},
  number={4},
  pages={1178--1189},
  year={2024},
  publisher={Springer}
}

@article{zhu2022packing,
  title={Packing $1.35\cdot 10^{11}$ rectangles into a unit square},
  author={Zhu, Mingliang and Jo{\'o}s, Antal},
  journal={arXiv preprint arXiv:2211.10356},
  year={2022}
}

@article{kislovskiy2026slack,
  title={Slack-Pack Algorithm for Meir--Moser Packing Problem},
  author={Kislovskiy, Alexey D and Lerner, E Yu and Senkevich, Igor A},
  journal={Lobachevskii Journal of Mathematics},
  volume={47},
  number={5},
  pages={2346--2363},
  year={2026},
  publisher={Springer}
}

\end{document}